\tikzstyle arrowstyle=[scale=1]
\tikzstyle arrowstyle=[scale=1.5]
\tikzstyle directed=[postaction={decorate,decoration={markings,
    mark=at position .65 with {\arrow[arrowstyle]{stealth}}}}]
\tikzstyle reverse directed=[postaction={decorate,decoration={markings,
    mark=at position .65 with {\arrowreversed[arrowstyle]{stealth};}}}]
    \tikzstyle
\tikzstyle reverse dadirected=[postaction={decorate,decoration={markings,
    mark=at position .65 .45 with {\arrowreversed[arrowstyle]{stealth};}}}]
\numberwithin{equation}{section}
\newcommand{\R}{\mathbb{R}}
\newcommand{\C}{\mathbb{C}}
\newcommand{\Z}{\mathbb{Z}}
\newcommand{\diff}{\mathrm{d}}
\newcommand{\re}{\mathrm{Re\,}}
\newcommand{\im}{\mathrm{Im\,}}
\newcommand{\ima}{\mathbb{I}}
\newcommand{\tr}{\mathrm{tr}\,}
\newcommand{\ol}{\overline}
\newcommand{\p}{\partial}
\newcommand{\beq}{\begin{equation}}
\newcommand{\eeq}{\end{equation}}
\newcommand{\lp}{\left(}
\newcommand{\rp}{\right)}
\newcommand{\res}{\text{\upshape Res\,}}
\def\Xint#1{\mathchoice
{\XXint\displaystyle\textstyle{#1}}%
{\XXint\textstyle\scriptstyle{#1}}%
{\XXint\scriptstyle\scriptscriptstyle{#1}}%
{\XXint\scriptscriptstyle\scriptscriptstyle{#1}}%
\!\int}
\def\XXint#1#2#3{{\setbox0=\hbox{$#1{#2#3}{\int}$}
\vcenter{\hbox{$#2#3$}}\kern-.5\wd0}}
\def\dashint{\;\Xint-}
\newtheorem{theorem}{Theorem}[section]
\newtheorem{lemma}[theorem]{Lemma}
\newtheorem{remark}[theorem]{Remark}
\newtheorem*{boundary condition}{Boundary condition}
\newtheorem{prop}[theorem]{Proposition}
\title{Exact solution of a Neumann boundary value problem for the stationary axisymmetric Einstein equations}
\author{Jonatan Lenells}
\address{Department of Mathematics, KTH Royal Institute of Technology, 10044 Stockholm, Sweden
}
\email{jlenells@kth.se}
\author{Long Pei}
\address{Department of Mathematics, KTH Royal Institute of Technology, 10044 Stockholm, Sweden
}
\email{longp@kth.se}
\keywords{Ernst equation, Einstein equations, boundary value problem, unified transform method, Fokas method, Riemann-Hilbert problem, theta function}
\subjclass[2000]{83C15, 37K15, 35Q15, 35Q76.}
\begin{document}

\begin{abstract}
For a stationary and axisymmetric spacetime, the vacuum Einstein field equations reduce to a single nonlinear PDE in two dimensions called the Ernst equation. 
By solving this equation with a {\it Dirichlet} boundary condition imposed along the disk,
Neugebauer and Meinel in the 1990s famously derived an explicit expression for the spacetime metric corresponding to the Bardeen-Wagoner uniformly rotating disk of dust. 
In this paper, we consider a similar boundary value problem for a rotating disk in which a {\it Neumann} boundary condition is imposed along the disk instead of a Dirichlet condition. Using the integrable structure of the Ernst equation, we are able to reduce the problem to a Riemann-Hilbert problem on a genus one Riemann surface. By solving this Riemann-Hilbert problem in terms of theta functions, we obtain an explicit expression for the Ernst potential. Finally, a Riemann surface degeneration argument leads to an expression for the associated spacetime metric. 
\end{abstract}

\maketitle

\setcounter{tocdepth}{1}
\tableofcontents

\section{Introduction}
Half a century ago, Bardeen and Wagoner studied the structure and gravitational field of a uniformly rotating, infinitesimally thin disk of dust in Einstein's theory of relativity \cite{BW1969, BW1971}. Although their study was primarily numerical, they pointed out that there may be some hope of finding an exact expression for the solution. Remarkably, such an exact expression was derived in a series of papers by Neugebauer and Meinel in the 1990s \cite{neugebauer1993einsteinian, neugebauer1994general, neugebauer1995general} (see also \cite{meinel2008relativistic}). Rather than analyzing the Einstein equations directly, Neugebauer and Meinel arrived at their exact solution by studying a boundary value problem (BVP) for the so-called Ernst equation. 

The Ernst equation is a nonlinear integrable partial differential equation in two dimensions which was first written down by F. J. Ernst in the 1960s \cite{E1968}. Ernst made the quite extraordinary discovery that, in the presence of one space-like and one time-like Killing vector, the full system of the vacuum Einstein field equations reduce to a single equation for a complex-valued function $f$ of two variables \cite{E1968}. This single equation, now known as the (elliptic) Ernst equation, has proved instrumental in the study and construction of stationary axisymmetric spacetimes, cf. \cite{klein2005ernst}. 

In terms of the Ernst equation, the uniformly rotating disk problem considered by Bardeen and Wagoner can be reformulated as a BVP for the Ernst potential $f$ in the exterior disk domain $\mathcal{D}$ displayed in Figure \ref{the disk}. Away from the disk, the boundary conditions for this BVP are determined by the requirements that the spacetime should be asymptotically flat, equatorially symmetric, and regular along the rotation axis. On the disk, the requirement that the disk should consist of a uniformly rotating collection of dust particles translates into a {\it Dirichlet} boundary condition for the Ernst potential expressed in a co-rotating frame \cite{meinel2008relativistic}. Neugebauer and Meinel solved this BVP by implementing a series of ingenious steps based on the integrability of the Ernst equation. In the end, these steps led to explicit expressions for both the Ernst potential and the spacetime metric in terms of genus two theta functions. Their solution was ``the first example of solving the boundary value problem for a rotating object in Einstein's theory by analytic methods'' \cite{B2000}.
  
In an effort to understand the Neugebauer-Meinel solution from a more general and systematic point of view, A. S. Fokas and the first author revisited the solution of the above BVP problem in \cite{lenells2010boundary}. It was shown in \cite{lenells2010boundary} that the problem actually is a special case of a so-called {\it linearizable} BVP as defined in the general approach to BVPs for integrable equations known as the unified transform or Fokas method \cite{fokas1997unified}. In this way, the Neugebauer-Meinel solution could be recovered. Later, an extension of the same approach led to the discovery of a new class of explicit solutions which combine the Kerr and Neugebauer-Meinel solutions \cite{lenells2011boundary}. The solutions of \cite{lenells2011boundary} involve a disk rotating uniformly around a central black hole and are given explicitly in terms of theta functions on a Riemann surface of genus four. 

In addition to the BVP for the uniformly rotating dust disk, a few other BVPs were also identified as linearizable in \cite{lenells2010boundary}. One of these problems has the same form as the BVP for the uniformly rotating disk except that a {\it Neumann} condition is imposed along the disk instead of a Dirichlet condition. The purpose of the present paper is to present the solution of this Neumann BVP. Our main result provides an explicit expression for the solution of this problem (both for the solution of the Ernst equation and for the associated spacetime metric) in terms of theta functions on a genus one Riemann surface.
In the limit when the rotation axis is approached, the Riemann surface degenerates to a genus zero surface (the Riemann sphere), which means that we can find particularly simple formulas for the spacetime metric in this limit. 

Our approach can be briefly described as follows. We first use the integrability of the Ernst equation to reduce the solution of the BVP problem to the solution of a matrix Riemann-Hilbert (RH) problem. The formulation of this RH problem involves both the Dirichlet and Neumann boundary values on the disk. By employing the fact that the boundary conditions are linearizable, we are able to eliminate the unknown Dirichlet values. This yields an effective solution of the problem in terms of the solution of a RH problem. However, as in the case of the Neugebauer-Meinel solution, it is possible to go even further and obtain an explicit solution by reducing the matrix RH problem to a scalar RH problem on a Riemann surface. By solving this scalar problem in terms of theta functions, we find exact formulas for the Ernst potential and two of the metric functions. Finally, a Riemann surface condensation argument is used to find an expression for the third and last metric function.

Although our approach follows steps which are similar to those set forth for the Dirichlet problem in \cite{lenells2010boundary} (which were in turn inspired by \cite{neugebauer1993einsteinian, neugebauer1994general, neugebauer1995general}), the Neumann problem considered here is different in a number of ways. 
One difference is that the underlying Riemann surface has genus one instead of genus two. This means that we are able to derive simpler formulas for the spectral functions and for the solution on the rotation axis.
Another difference is that the jump of the scalar RH problem for the Neumann problem does not vanish at the endpoints of the contour. This means that a new type of condensation argument is needed to determine the last metric function. We expect this new argument to be of interest also for other BVPs and for the construction of exact solutions via solution-generating techniques.

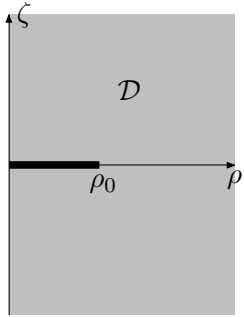
\begin{figure}
\begin{tikzpicture}
\path [fill=lightgray] (0,2) rectangle (3,-2);
\draw [] (0,0) -- (3,0);
\draw[fill] (2.9,0.04) --(2.9,-0.04)--(3,0);
\draw[fill] (0,2) --(-0.04,1.9)--(0.04,1.9);
\draw (0,-2) -- (0,2);
\draw [line width=1mm]  (0,0) -- (1.2,0);
\node at (0.2,2) {$\zeta$};
\node at (1.25,-0.25) {$\rho_{0}$};
\node at (3,-0.2) {$\rho$};
\node at (1.6,1) {$\mathcal{D}$};
\end{tikzpicture}
\caption{\small{The domain $\mathcal{D}$ exterior to a disk of radius $\rho_0 > 0$.}}\label{the disk}
\end{figure}

We do not explore the possible physical relevance of the solved Neumann BVP here. Instead, our solution of this BVP is motivated by the following two reasons: (a) As already mentioned, very few BVPs for rotating objects in general relativity have been solved constructively by analytic methods. 
Our solution enlarges the class of constructively solvable BVPs and expands the mathematical toolbox used to solve such problems. 
(b) An outstanding problem in the context of rotating objects in Einstein's theory consists of finding solutions which describe disk/black hole systems \cite{meinel2008relativistic, klein2005ernst}. The solutions derived in \cite{lenells2011boundary} are of this type. However, the disks in these solutions reach all the way to the event horizon. Physically, there should be a gap between the horizon and the inner rim of the disk (so that the disk actually is a ring). Such a ring/black hole problem can be formulated as a BVP for the Ernst equation with a mixed Neumann/Dirichlet condition imposed along the gap. Thus we expect the solution of a pure Neumann BVP (in addition to the already known solution of the analogous Dirichlet problem) to provide insight which is useful for analyzing ring/black hole BVPs.

\subsection{Organization of the paper}
In Section \ref{prelsec}, we introduce some notation and state the Neumann boundary value problem which is the focus of the paper. The main results are presented in Section \ref{mainresultsec}. In Section \ref{examplesec}, we illustrate our results with a numerical example. 
In Section \ref{laxsec}, we begin the proofs by constructing an eigenfunction $\Phi(z,k)$ of the Lax pair associated with the Ernst equation. We set up a RH problem for $\Phi(z,k)$ with a jump matrix defined in terms of two spectral functions $F(k)$ and $G(k)$. Using the equatorial symmetry and the Neumann boundary condition, we formulate an auxiliary RH problem which is used to determine $F(k)$ and $G(k)$. This provides an effective solution of the problem in terms of the solution of a RH problem. However, as mentioned above, it is possible to obtain a more explicit solution. Thus, in Section \ref{scalarRHsec}, we combine the RH problem for $\Phi$ and the auxiliary RH problem into a scalar RH problem, which can be solved for the Ernst potential $f$. In section \ref{thetasec}, we use tools from algebraic geometry to express $f$ and two of the associated metric functions in terms of theta functions.
In Section \ref{kappasec}, we use a branch cut condensation argument to derive a formula for the last metric function.
In Section \ref{axissec}, we study the behavior of the solution near the rotation axis and complete the proofs of the main results. 

\section{Preliminaries}\label{prelsec}

\subsection{The Ernst equation}
In canonical Weyl coordinates, the exterior gravitational field of a stationary, rotating, axisymmetric body is described by the line element  (cf. \cite{meinel2008relativistic})
\begin{equation}\label{the metric}
\diff s^{2}=e^{-2U}(e^{2\kappa}(\diff\rho^{2}+\diff\zeta^{2})+\rho^{2}\diff\varphi^{2})-e^{2U}(\diff t+a\diff\varphi)^{2},
\end{equation}
where $(\rho,\zeta,\varphi)$ can be thought of as cylindrical coordinates, $t$ as a time variable, and the metric functions $U, \kappa, a$ depend only on $\rho$ and $\zeta$. In these coordinates, the Einstein equations reduce to the system (subscripts denoting partial derivatives)
\begin{align}
&U_{\rho\rho}+U_{\zeta\zeta}+\frac{1}{\rho}U_{\rho}=-\frac{e^{4U}}{2\rho^{2}}(a_{\rho}^{2}+a_{\zeta}^{2}),
	\\
&(\rho^{-1}e^{4U}a_{\rho})_{\rho}+(\rho^{-1}e^{4U}a_{\zeta})_{\zeta}=0\label{Einstein 2}
\end{align}
together with two equations for $\kappa$. In order for the metric \eqref{the metric} to be regular on the rotation axis, the metric functions $a$ and $\kappa$ should vanish on the rotation axis, i.e.,
$$a\to 0,\quad \kappa\to 0\quad \mathrm{as} \quad \rho\to 0.$$
Assuming that the line element \eqref{the metric} approaches the Minkowski metric at infinity (aysmptotic flatness), we also have the conditions
$$U\to 0,\quad a\to 0,\quad  \kappa\to 0 \quad \mathrm{as} \quad \rho^{2}+\zeta^{2}\to \infty.$$
In view of \eqref{Einstein 2}, it is possible to find a function $b(\rho,\zeta)$ which satisfies
\begin{align}\label{b relation 1}
 a_{\rho}=\rho e^{-4U}b_{\zeta},\quad a_{\zeta}=-\rho e^{-4U}b_{\rho}.
\end{align}
The so-called Ernst potential $f := e^{2U}+i b$ then satisfies the Ernst equation
\begin{equation}\label{Ernst equation}
\frac{f+\bar{f}}{2}\left(f_{\rho\rho}+f_{\zeta\zeta}+\frac{1}{\rho}f_{\rho}\right)=f_{\rho}^{2}+f_{\zeta}^{2}
\end{equation}
where $\bar{f}$ denotes the complex conjugate of $f \equiv f(\rho,\zeta)$.

Besides the frame $(\rho,\zeta, \varphi,t)$, we will also use the co-rotating frame $(\rho^{\prime},\zeta^{\prime}, \varphi^{\prime},t^{\prime})$ defined by 
\[
\rho^{\prime}=\rho,\quad \zeta^{\prime}=\zeta,\quad \varphi^{\prime}=\varphi-\Omega t, \quad t^{\prime}=t,
\]
where $\Omega$ denotes the constant angular velocity of the rotating body.  The Ernst equation (\ref{Ernst equation}) and the line element \eqref{the metric} both retain their form in the co-rotating frame. We use the subscript $\Omega$ to denote a quantity in the co-rotating frame; in particular, we let $f_{\Omega} = e^{2U_{\Omega}}+i b_{\Omega}$  denote the Ernst potential in co-rotating coordinates. The co-rotating metric functions $U_{\Omega},\, a_{\Omega},\, \kappa_{\Omega}$ are related to  $U, \, a,\, \kappa$ by (see \cite{meinel2008relativistic})
\begin{subequations}
\begin{align} \label{corotatingmetricfunctionsa}
e^{2U_{\Omega}}&=e^{2U}((1+\Omega a)^{2}-\Omega^{2}\rho^{2}e^{-4U}),
	\\\label{corotatingmetricfunctionsb}
(1-\Omega a_{\Omega})e^{2U_{\Omega}}&=(1+\Omega a)e^{2U},
	\\ 
\kappa_{\Omega}-U_{\Omega}&=\kappa-U.
\end{align}
\end{subequations}

We will use the isomorphism $(\rho,\zeta)\mapsto z := \rho+i \zeta$ to identify $\R^{2}$ and $\C$; in particular, we will often write $f(z) = f(\rho + i\zeta)$ instead of $f(\rho,\zeta)$. In terms of $z$, we have
\begin{align}\label{azkappaz}
a_{z} =i \rho e^{-4U} b_{z} \quad \text{and} \quad
\kappa_{z} =\frac{1}{2}\rho e^{-4U}f_{z}\bar{f}_{z}.
\end{align}

\subsection{The boundary value problem}\label{the boundary value problem section}
Let $\mathcal{D}$ denote the domain exterior to a finite disk of radius $\rho_0 > 0$, that is (see Figure \ref{the disk}),
\[
\mathcal{D}:=\{(\rho,\zeta)\in\R^{2}\,|\, \rho>0\}\setminus \{(\rho,0)\in \R^{2}\,|\, 0<\rho\leq \rho_{0}\}.
\]
In this paper, we consider the following Neumann BVP:
\begin{align}\label{boundary value problem}
\begin{cases}
\text{$f(\rho, \zeta)$ satisfies the Ernst equation \eqref{Ernst equation} in $\mathcal{D}$,} 
	\\ 
\text{$f(\rho,\zeta)=\overline{f(\rho,-\zeta)}$ for $(\rho, \zeta) \in \mathcal{D}$ (equatorial symmetry),}
	\\
\text{$f(\rho,\zeta)\to 1$ as  $\rho^2 + \zeta^2 \to \infty$ (asymptotic flatness),}
	\\
\text{$\frac{\partial f}{\partial \rho}(0, \zeta)=0$ for $\zeta \neq 0$ (regularity on the rotation axis),}
	\\
\text{$\frac{\partial f_{\Omega}}{\partial \zeta}(\rho,\pm 0)=0$ for $0<\rho<\rho_{0}$ (Neumann boundary condition on the disk),}
\end{cases}
\end{align}
where $\rho_0 > 0$ and $\Omega > 0$ are two parameters such that $2\Omega \rho_0 < 1$.

\subsection{The Riemann surface $\Sigma_z$}\label{Riemann surface SigmaZ}
We will present the solution of the BVP (\ref{boundary value problem}) in terms of theta functions associated with a family of Riemann surfaces $\Sigma_z$ parametrized by $z = \rho + i\zeta \in \mathcal{D}$. 
Before stating the main results, we need to define this family of Riemann surfaces. 
In view of the equatorial symmetry, it suffices to determine the solution $f(z)$ of \eqref{boundary value problem} for $z=\rho+i \zeta$ with $\zeta>0$. We therefore assume that $\zeta> 0$ in the following. 

Suppose $\rho_0 > 0$ and $\Omega > 0$ satisfy $2\Omega \rho_0 < 1$. Set $w(k) := -2ik\Omega$ and let 
$$k_1=-\frac{i}{2\Omega}, \qquad \bar{k}_1=\frac{i}{2\Omega}$$ 
denote the two zeros of $w^2-1=0$. 
For each $z\in \C$, we define $\Sigma_{z}$ as the Riemann surface  which consists of all points $(k,y)\in \C^{2}$ such that
\begin{equation}\label{parameterization of Sigmaz}
y^{2} = (k-k_{1})(k-\bar{k}_1)(k+i z)(k-i\bar{z})
\end{equation}
together with two points at infinity which make the surface compact.
We view $\Sigma_{z}$ as a two-sheeted cover of the complex $k$-plane by introducing two branch cuts. The first branch cut runs from $k_1$ to $\bar{k}_1$ and we choose this to be the path $C_{k_1}$ defined by
\begin{align}\label{def cut k1 k1bar}
C_{k_{1}} = [k_1, k_1-1, \bar{k}_1-1,\bar{k}_1]
\end{align}
where $[k_1, k_1-1, \bar{k}_1-1,\bar{k}_1]$ denotes the contour which consists of the straight line segment from $k_1$ to $k_1-1$ followed by the straight line segment from $k_1-1$ to $\bar{k}_1-1$ and so on. The second branch cut runs from $-i z$ to $i\bar{z}$ and we choose this to be the vertical segment $[-i z, i\bar{z}]$. 
The definition (\ref{def cut k1 k1bar}) of $C_{k_1}$ is chosen so that $C_{k_1}$ passes to the left of the vertical contour $\Gamma \subset \C$ defined by 
$$\Gamma = [-i \rho_0, i \rho_0].$$ 
The cut $C_{k_{1}}$ does not intersect $\Gamma$ at the endpoints, because the assumption $2\Omega \rho_0 < 1$ implies that $\rho_0 < |k_1|$. Thus, for each $z \in \mathcal{D}$, the branch cuts and the contour $\Gamma$ are organized as in Figure \ref{uniform procedure near Gamma plus} with $C_{k_{1}}$ and $[-i z, i\bar{z}]$ to the left and right of $\Gamma$, respectively.

We denote by $\Sigma_z^+$ and $\Sigma_z^-$ the upper and lower sheets of $\Sigma_z$, respectively, where the upper (lower) sheet is characterized by $y \sim k^2$ ($y \sim -k^2$) as $k \to \infty$. Let $\hat{\C} = \C \cup \infty$ denote the Riemann sphere.  For $k\in \hat{\C} \setminus (C_{k_1} \cup [-i z, i\bar{z}])$, we write $k^{+}$ and $k^{-}$ for the points in $\Sigma_{z}^{+}$ and $\Sigma_{z}^{-}$, respectively, which project onto $k\in \C$. More generally, we let $A^{+}$ and $A^{-}$ denote the lifts of a subset $A \subset \hat{\C} \setminus (C_{k_1} \cup [-i z, i\bar{z}])$ to $\Sigma_{z}^+$ and $\Sigma_{z}^-$, respectively.

We let $\{a,b\}$ denote the basis for the first homology group $H_{1}(\Sigma_{z},\Z)$ of $\Sigma_{z}$ shown in Figure \ref{homology-3D}, so that $a$ surrounds the cut $C_{k_1}$ while $b$ enters the upper sheet on the right side of the cut $[-i z,i \bar{z} ]$ and exits again on the right side of $C_{k_1}$. 
It is convenient to fix the curves $a$ and $b$ within their respective homology class so that they are invariant under the involution $k^\pm \to k^\mp$. Thus we let $b = [i\bar{z}, k]^+ \cup [k, i\bar{z}]^-$ and let $a$ be the path in the homology class specified by Figure \ref{homology-3D} which as a point set consists of the points of $\Sigma_z$ which project onto $[k_1, \bar{k}_1]$. 
Unless stated otherwise, all contours of integrals on $\Sigma_z$ for which only the endpoints are specified will be supposed to lie within the fundamental polygon obtained by cutting the surface along the curves $\{a,b\}$.
\begin{figure}[ht!]
  \centering
  \includegraphics[width=0.6\linewidth]{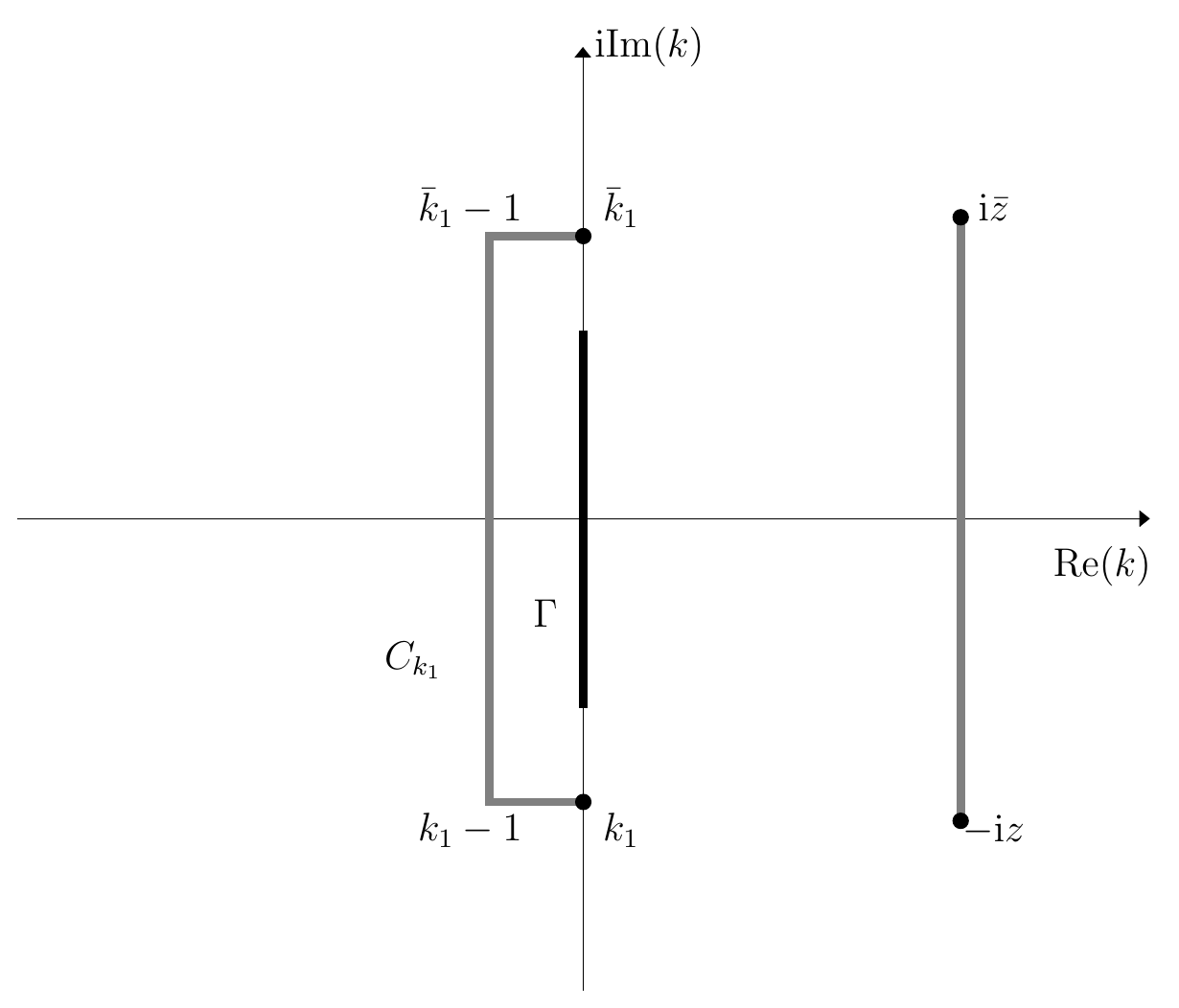}
  \vspace{-.3cm}
\caption{\Small{The contour $\Gamma$  and the branch cuts $C_{k_1}$ and $[-i z,i\bar{z}]$ in the complex $k$-plane.}}
\label{uniform procedure near Gamma plus}
\end{figure}

 \vspace{-.7cm}

\begin{figure}[ht!]
\begin{subfigure}{.45\textwidth}
  \centering
  \includegraphics[width=.7\linewidth]{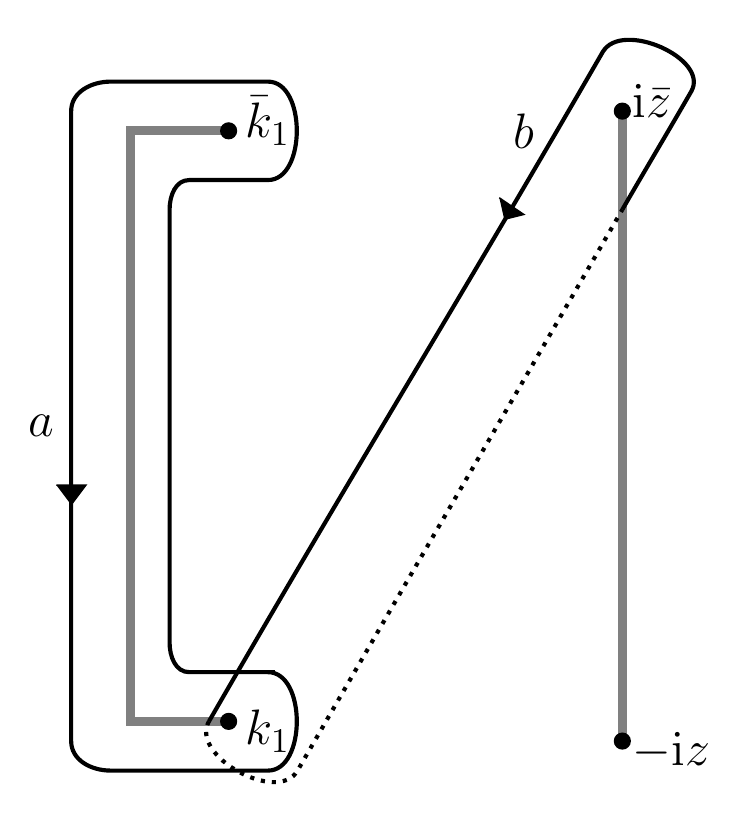}
  \label{homology basis}
\end{subfigure}%
\begin{subfigure}{.5\textwidth}
  \centering
  \includegraphics[width=1.05\linewidth]{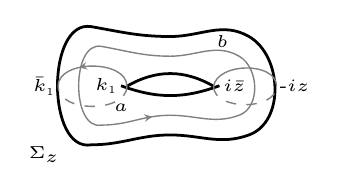}
  \label{3-D view of riemann surface}
\end{subfigure}
\vspace{-.5cm}
\caption{\Small{Two illustrations of the homology basis $\{a,b\}$ on the genus one Riemann surface $\Sigma_z$.}}
\label{homology-3D}
\end{figure}

We let $\omega$ denote the unique holomorphic one-form on $\Sigma_{z}$ such that $\int_{a}\omega=1$. Then the period $B := \int_{b}\omega \in \C$ has strictly positive imaginary part and we may define the Riemann-Siegel theta function $\Theta(v) \equiv \Theta(v|B)$ by
\begin{equation}\label{Thetadef}
\Theta(v|B)=\sum_{N\in \Z}\exp\left[ 2\pi i\left(\frac{1}{2}N^2B+Nv\right)\right], \quad v\in \C.
\end{equation}
Note that 
\begin{equation}\label{omega A eta}
\omega=A\frac{\diff k}{y} \quad \text{and} \quad B=AZ,
\end{equation}
where $A,Z\in \C$ are given by 
\begin{equation}\label{def A}
A^{-1}=\int_{a}\frac{\diff k}{y},\quad Z=\int_{b} \frac{\diff k}{y}.
\end{equation}

We let $\omega_{PQ}$ denote the Abelian differential of the third kind on $\Sigma_{z}$, which has simple poles at the two points $P, Q \in \Sigma_z$ with residues $+1$ and $-1$, respectively, and whose  $a$-period vanishes.  
For $k \in \Sigma_z$, we have
\begin{align}\nonumber
& \omega_{\zeta^{+}\zeta^{-}}(k)=\frac{y(\zeta^{+}) dk}{(k-\zeta)y(k)}-\lp\int_{a}\frac{y(\zeta^{+})dk^\prime}{(k^\prime-\zeta)y(k^\prime)}\rp\omega, 
	\\ \label{Abelian differential third Sigmaz}
& \omega_{\infty^{+}\infty^{-}}(k)=-\frac{k dk}{y(k)}+\left(\int_{a}\frac{k^\prime dk^\prime}{y(k^\prime)}\right)\omega.
\end{align}

\subsection{The Riemann surface $\Sigma'$}\label{degenerated riemann surface}
As $z$ approaches the rotation axis, $\Sigma_z$ degenerates to the $z$-independent Riemann surface $\Sigma^{\prime}$ of genus zero defined by the equation 
\begin{equation}\label{degenerated RS}
\mu = \sqrt{(k-k_1)(k- \bar{k}_1)}.
\end{equation}
We use the branch cut $C_{k_1}$ in (\ref{def cut k1 k1bar}) to view $\Sigma^{\prime}$ as a two-sheeted cover of the complex $k$-plane. We denote the upper and lower sheets of $\Sigma^\prime$ by $\Sigma^{\prime+}$ and $\Sigma^{\prime-}$, respectively, characterized by $y \sim \pm k$ and $y \sim -k$ as $k \to \infty^\pm$.
Letting $\omega^{\prime}_{PQ}$ denote the  Abelian differential of the third kind on $\Sigma^\prime$ with simple poles at $P$ and $Q$, we have the following analog of (\ref{Abelian differential third Sigmaz}):
\begin{equation}\label{Abelian differential third Sigmazprime}
\omega^{\prime}_{\zeta^{+}\zeta^{-}}=\frac{\mu(\zeta^{+}) dk}{(k-\zeta)\mu(k)}, \quad \omega^{\prime}_{\infty^{+}\infty^{-}}=-\frac{dk}{\mu(k)}.
\end{equation}

\section{Main results}\label{mainresultsec}
Define the function $h(k)$ by
\begin{equation}\label{def h}
h(k) = -\frac{\text{arcsin}(2 i \Omega k)}{\pi},
\end{equation}
and note that $h(k)$ is smooth and real-valued for $k \in \Gamma$. 
We also define $u \equiv u(z) \in \C$ and  $ I \equiv I(z) \in \R$ by
\begin{equation}\label{def U and I}
u=\int_{\Gamma^+}h\omega,
\quad I=\int_{\Gamma^+}h\omega_{\infty^{+}\infty^{-}},
\end{equation}
where $\omega$ and $\omega_{\infty^+\infty^-}$ are the differentials on $\Sigma_z$ defined in Section \ref{prelsec} and we view $h$ as a function on $\Gamma^+$ in the natural way, i.e., by composing it with the projection $\Sigma_z \to \C$.

The next theorem, which is our main result, gives an explicit expression for the solution of the boundary value problem (\ref{boundary value problem}) and the associated metric functions in terms of the theta function $\Theta$ associated with the Riemann surface $\Sigma_z$. 

\begin{theorem}[Solution of the Neumann BVP]\label{mainth1}
Suppose  $\Omega>0$ and $\rho_0>0$ are such that $2\Omega\rho_0 < 1$. 
Let $z = \rho + i\zeta$ with $\zeta>0$. Then the solution $f(z) \equiv f(\rho, \zeta)$ of the BVP (\ref{boundary value problem}) is given by
\begin{equation}\label{Ernst potential for f}
f(z)=\frac{\Theta(u-\int_{-i z}^{\infty^{-}}\omega |B)}{\Theta(u+\int_{-i z}^{\infty^{-}}\omega |B)}e^{I},
\end{equation}
where $u$ and $I$ are defined in (\ref{def U and I}).
Moreover, the associated metric functions $e^{2U}$ and $a$ are given by 
\begin{equation}\label{e2U and a}
e^{2U(z)}=\frac{Q(0)}{Q(u)}e^{I},\quad a(z)=a_{0}-\frac{\rho}{Q(0)}\left(\frac{\Theta\left(u+\int_{-i z}^{\infty^{-}}\omega +\int_{i\bar{z}}^{\infty^{-}}\omega | B\right)}{Q(0)\Theta\left(u+\int_{-i z}^{i\bar{z}}\omega | B\right)}-Q(u)\right)e^{-I},
\end{equation}
where $a_0= -\frac{1}{2\Omega}$  and $Q(v)$ is defined by
\begin{equation}\label{def Q}
Q(v) = \frac{\Theta\left( v+\int^{\infty^{-}}_{-i z}\omega|B\right)\Theta\left( v+\int^{\infty^{-}}_{i \bar{z}}\omega|B\right)}{\Theta\left( v|B\right)\Theta\left( v+\int_{-i z}^{i\bar{z}}\omega|B\right)}, \quad v\in \C.
\end{equation}
Finally, the metric function $e^{2\kappa}$ is given by 
\begin{equation}\label{e2kappa}
e^{2\kappa(z)}=K_0\frac{\Theta\left( u|B\right)\Theta\left( u+\int^{i \bar{z}}_{-i z}\omega|B\right)}{\Theta\left( 0|B\right)\Theta\left( \int^{i \bar{z}}_{-i z}\omega|B\right)}e^{L^{reg}},
\end{equation}
where $K_0$  and $L^{reg}$ are given by
\begin{align}\label{K0 formula}
K_0 = &\; \exp\lp-\frac{1}{2}\int_{\Gamma} \diff\kappa_1 h(\kappa_1)\int_{\Gamma}'h(\kappa_2)\lp\frac{\partial\omega^{\prime}_{\kappa^{+}_{1}\kappa^{-}_{1}}}{\partial\kappa_1}(\kappa_2^{+})-\frac{\diff\kappa_2}{(\kappa_2-\kappa_1)^2} \rp\rp,
	\\ \nonumber
L^{reg}= &\; 
\frac{1}{2} \bigg\{\int_{\Gamma} d\kappa_1 h(\kappa_1) \int_{\Gamma}' h(\kappa_2) 
\bigg(\frac{\frac{\diff y}{\diff k}(\kappa_1^+)(\kappa_2-\kappa_1) + y(\kappa_1^+)-y(\kappa_2^+)}{(\kappa_2-\kappa_1)^2y(\kappa_2^+)} \bigg)d\kappa_2
	\\ \label{L def}
& - \bigg(\int_{\Gamma}^{\prime} d\kappa_1 h(\kappa_1) \int_{a} \frac{\frac{\diff y}{\diff k}(\kappa_1^+)(\kappa-\kappa_1) + y(\kappa_1^+)}{(\kappa-\kappa_1)^2 y(\kappa)}d\kappa \bigg)
\bigg(\int_{\Gamma} h(\kappa_2) \omega(\kappa_2^+)\bigg)
\bigg\},
\end{align}
and a prime on an integral along $\Gamma$ means that the integration contour should be slightly deformed before evaluation so that the singularity at $\kappa_1=\kappa_2$ is avoided.\footnote{The result is independent of whether $\Gamma$ is deformed to the left or right of the singularity.}
\end{theorem}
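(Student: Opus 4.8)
The plan is to solve the BVP \eqref{boundary value problem} by the unified transform (Fokas) method applied to the Ernst equation, following the route taken for the Dirichlet problem in \cite{lenells2010boundary}, and then to overcome the new difficulties created by the Neumann condition. First I would introduce a $2\times 2$ matrix eigenfunction $\Phi(z,k)$ of the Lax pair of \eqref{Ernst equation}, with the spectral parameter $k$ living on the surface $\Sigma_z$ of Section~\ref{Riemann surface SigmaZ} and $\Phi$ normalized so that $\Phi\to I$ as $k\to\infty$. Spectral analysis in $k$ — analyticity, the symmetries inherited from the reality and equatorial structure of $f$, and the jump produced by the branch cuts — yields a matrix RH problem for $\Phi$ whose jump along $\Gamma=[-i\rho_0,i\rho_0]$ is built from two spectral functions $F(k),G(k)$ encoding the Dirichlet and Neumann boundary values of $f$ on the disk and the axis. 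The crucial point that makes the problem solvable is linearizability: the equatorial symmetry together with the Neumann condition $\partial_\zeta f_\Omega=0$ on $0<\rho<\rho_0$ produces an auxiliary RH problem — a symmetry relation between the boundary values on the two sheets — which determines $F$ and $G$ in terms of known data only, the data being packaged in the function $h(k)$ of \eqref{def h} (so that $e^{2\pi i h}$ is the diagonal jump). This already gives an effective solution of the BVP in terms of an explicit RH problem.

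Following \cite{neugebauer1993einsteinian, neugebauer1994general, neugebauer1995general, lenells2010boundary} I would then reduce the matrix problem to a \emph{scalar} RH problem on $\Sigma_z$: a function holomorphic off $\Gamma^+\cup\Gamma^-$ with multiplicative jump $e^{2\pi i h}$ across $\Gamma$ and prescribed singularity structure at $\infty^\pm$ and at the branch point $-iz$. Taking logarithms turns this into an additive scalar problem on $\Sigma_z$, solved by a Cauchy-type integral against the normalized Abelian differentials of the third kind $\omega_{\zeta^+\zeta^-}$ and $\omega_{\infty^+\infty^-}$ of \eqref{Abelian differential third Sigmaz}. Exponentiating, the divisor forced by the logarithmic singularities at $\infty^-$ and $-iz$ produces the theta quotient $\Theta(u-\int_{-iz}^{\infty^-}\omega\,|\,B)/\Theta(u+\int_{-iz}^{\infty^-}\omega\,|\,B)$ of \eqref{Ernst potential for f}, via the standard fact that an abelian integral with given periods and logarithmic singularities becomes, after exponentiation, a ratio of theta functions; the shift $u=\int_{\Gamma^+}h\,\omega$ of \eqref{def U and I} is the Abel image of the jump data, and the scalar factor $e^I$ with $I=\int_{\Gamma^+}h\,\omega_{\infty^+\infty^-}$ is the contribution of the third-kind differential with poles at $\infty^\pm$. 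Splitting $f=e^{2U}+ib$ then yields $e^{2U}$ in \eqref{e2U and a} in terms of the auxiliary function $Q$ of \eqref{def Q}, and integrating $a_z=i\rho e^{-4U}b_z$ from \eqref{azkappaz}, with theta-function addition theorems collapsing the quadrature, gives the closed form for $a$, the constant $a_0=-\tfrac{1}{2\Omega}$ being fixed by matching on the axis.

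The last metric function $e^{2\kappa}$ is obtained from $\kappa_z=\tfrac12\rho e^{-4U}f_z\bar f_z$ in \eqref{azkappaz}, and this is where I expect the main obstacle to lie. Unlike the Dirichlet problem, the jump $h$ does not vanish at the endpoints $\pm i\rho_0$ of $\Gamma$, so the naive quadrature for $\kappa$ carries apparent non-integrable singularities there. The remedy is a branch-cut condensation argument: approximate the continuous spectral data along $\Gamma$ by a family of discrete small cuts, for which $\kappa$ is given by classical genus-counting formulas on a higher-genus surface, and then pass to the limit while subtracting the divergent pieces. The non-vanishing of $h$ at the endpoints is exactly what survives the limit as the regularized double integrals $L^{reg}$ of \eqref{L def} and the constant $K_0$ of \eqref{K0 formula}; the latter is itself a regularized integral against the genus-zero differential $\omega'_{\kappa_1^+\kappa_1^-}$ of \eqref{Abelian differential third Sigmazprime} with the $(\kappa_2-\kappa_1)^{-2}$ counterterm subtracted. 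Identifying precisely which counterterms are needed, showing the limit exists, and checking that the resulting $\kappa$ is smooth and vanishes on the axis will be the technical heart of the proof.

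Finally I would verify that the constructed $f$ genuinely solves \eqref{boundary value problem}: the Ernst equation holds because $\Phi$ solves the Lax pair; asymptotic flatness $f\to 1$ follows by degenerating the Abel-map arguments as the branch points $-iz,i\bar z\to\infty$; the Neumann condition on the disk is built into the jump of the scalar problem by construction; and regularity on the rotation axis — together with the simpler explicit formulas there and the vanishing of $a$ and $\kappa$ — follows from the degeneration $\Sigma_z\to\Sigma'$ of Section~\ref{degenerated riemann surface} as $\rho\to 0$, which is the content of Section~\ref{axissec} and completes the proof.
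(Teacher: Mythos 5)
Your proposal is correct and follows essentially the same route as the paper: Lax pair and matrix RH problem with spectral functions $F,G$, an auxiliary RH problem exploiting the linearizable Neumann condition to determine them, reduction to a scalar RH problem on $\Sigma_z$ solved by theta functions, and a modified branch-cut condensation argument with the $(\kappa_2-\kappa_1)^{-2}$ counterterm to handle the non-vanishing of $h$ at the endpoints of $\Gamma$ when deriving $e^{2\kappa}$. The only cosmetic discrepancies are that the eigenfunction initially lives on the genus-zero surface $\mathscr{S}_z$ (the genus-one surface $\Sigma_z$ enters only at the scalar RH stage) and is normalized as $z\to i\infty$ rather than $k\to\infty$, neither of which affects the argument.
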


\begin{remark}[Solution for $\zeta \leq 0$]\upshape
Theorem \ref{mainth1} provides expressions for the Ernst potential and the metric functions for $\zeta> 0$. If $\rho > \rho_0$, these expressions extend continuously to $\zeta = 0$. For negative $\zeta$, analogous expressions follow immediately from the equatorial symmetry. In this way, the solution of the BVP (\ref{boundary value problem}) is obtained in all of the exterior disk domain $\mathcal{D}$.
\end{remark}

\begin{remark}[The assumption $2\Omega\rho_0 < 1$]\upshape
We have stated Theorem \ref{mainth1} under the assumption that $2\Omega\rho_0 < 1$. If the rotation speed $\Omega$ and/or the radius $\rho_0$ are so large that $2\Omega\rho_0 \geq 1$, then the branch points $k_1$ and $\bar{k}_1$ lie on $\Gamma$. Nevertheless, the formulas of Theorem \ref{mainth1} can easily be adjusted to include these (possibly singular) solutions.
\end{remark}

\begin{remark}[Solution in terms of elliptic theta functions]\upshape
The Ernst potential $f(z)$ given in (\ref{Ernst potential for f}) can be alternatively expressed in terms of the elliptic theta function $\theta_3$ by
\beq\label{f Jacobi theta functions}
f(z)=\frac{\theta_{3}\big(\frac{\pi (-\int_{-i z}^{\infty^{-}}\omega+u)}{B};e^{- \frac{\pi i}{B}}\big)}{\theta_{3}\big(\frac{\pi (\int_{-i z}^{\infty^{-}}\omega+u)}{B};e^{- \frac{\pi i}{B}}\big)}\exp\bigg( I+\frac{4\pi i \int_{-i z}^{\infty^{-}}\omega}{B}\bigg).
\eeq
The metric functions $e^{2U}$, $a$, and $\kappa$ can also be expressed in terms of elliptic theta functions in a similar way. \end{remark}

\begin{remark}[Behavior at the rim of the disk]\upshape
The Ernst potential $f$ and the metric functions $e^{2U}$, $a$ and $e^{2\kappa}$ given in Theorem \ref{mainth1} are smooth functions of $(\rho, \zeta)$ in the open domain $\mathcal{D}$. They are bounded on $\mathcal{D}$ and extend smoothly to the interior $\{(\rho, 0) \, | \, 0 < \rho < \rho_0\}$ of the disk from both above and below. Moreover, they extend continuously to the rim of the disk (i.e., to the point $(\rho, \zeta) = (\rho_0, 0)$), but they do not, in general, have $C^1$ extensions to this point (cf. Figures \ref{fig:f 3D}-\ref{fig:ae2U and e2kappa}).
In fact, an analysis shows that the boundary values of $e^{2U}$ and of its partial derivatives $\partial_\rho e^{2U}$ and $\partial_\zeta e^{2U}$ on the upper or lower side of the disk extend continuously to the rim of the disk. However, if $\zeta=0$ and $\rho \downarrow \rho_0$, then we only have
\begin{equation*}  
\frac{\p e^{2U}}{\p \rho}(\rho, 0) = O\bigg(\frac{1}{\sqrt{\rho-\rho_0}}\bigg),\qquad \frac{\p e^{2U}}{\p \zeta}(\rho, 0) = O\bigg(\frac{1}{\sqrt{\rho-\rho_0}}\bigg). 
\end{equation*}
Similarly, it can be shown that $a(z)$ and $e^{2\kappa}$ are continuous but not $C^1$ at $(\rho_0, 0)$, and that the partial derivatives $\partial_\rho  a(z)$, $\partial_\zeta a(z)$, $\partial_\rho e^{2\kappa}(z)$, and $\partial_\zeta e^{2\kappa}(z)$ are $O(|z-\rho_0|^{-1/2})$  as $z\to \rho_0$. 
\end{remark}

\subsection{Solution near the rotation axis}
As $z$ approaches the rotation axis (i.e., as $\rho\to 0$), the Riemann surface $\Sigma_z$ degenerates to the genus zero surface $\Sigma'$. We define the quantities $J^{\prime} \equiv J'(\zeta) \in \R$ and $d \equiv d(\zeta) \in i\R_-$ for $\zeta > 0$ by
\begin{align}\label{Jprimedef}
J' = \mu(\zeta^+) \int_{\Gamma} \frac{h(k)}{\mu(k^+)} \frac{dk}{k - \zeta}, \quad
 d = 2i \Omega (\zeta - \mu(\zeta^+)).
\end{align}
where, by (\ref{degenerated RS}),
$$\mu(\zeta^+) = \sqrt{\zeta^2 + \frac{1}{4\Omega^2}} > 0, \quad \zeta > 0,$$
and 
$$\mu(k^+) = \sqrt{-|k|^2 + \frac{1}{4\Omega^2}} > 0, \quad k \in \Gamma.$$

The following result gives the asymptotic behavior of $f$ and the metric functions near the rotation axis.

\begin{theorem}[Solution near the rotation axis]\label{mainth2}
Let $\zeta > 0$. Under the assumptions of Theorem \ref{mainth1}, the following asymptotic formulas hold as $\rho \downarrow 0$:
\begin{itemize}
\item The Ernst potential $f(z)$ satisfies
\begin{equation}\label{fnearaxis}
f(\rho+i\zeta)=f(i \zeta)+O(\rho^{2}),
\end{equation}
where
\begin{equation}\label{axis value of f}
f(i \zeta)=\frac{1 + e^{J^{\prime}}d}{e^{J^{\prime}} + d}. 
\end{equation}

\item The metric functions $e^{2U}$, $a$, and $e^{2\kappa}$ satisfy
\begin{equation}\label{e2U a kappa asymptotic near rotation axis}
e^{2U(\rho+i \zeta)}=e^{2U(i \zeta)}+O(\rho^{2}),\quad
a(\rho+i \zeta)=O(\rho^{2}),\quad
e^{2\kappa(\rho+i \zeta)}=1+O(\rho^{2}),
\end{equation}
where
\begin{equation}\label{e2U axis value}
e^{2U(i \zeta)}=\frac{(1-d^2)e^{J'}}{e^{2J^{\prime}} - d^2}.
\end{equation}
\end{itemize}
\end{theorem}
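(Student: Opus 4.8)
\section*{Proof strategy for Theorem \ref{mainth2}}

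The plan is to obtain Theorem \ref{mainth2} by letting $\rho\downarrow 0$ in the explicit genus one formulas of Theorem \ref{mainth1} and carrying out the attendant theta function degeneration. When $\rho\downarrow 0$ the branch cut $[-i z,i\bar z]=[\zeta-i\rho,\zeta+i\rho]$ collapses onto the single point $\zeta\in\R$, so $\Sigma_z$ degenerates to the genus zero surface $\Sigma'$ of Section \ref{degenerated riemann surface}; concretely $y(k)=\sqrt{(k-k_1)(k-\bar k_1)((k-\zeta)^2+\rho^2)}\to \mu(k)(k-\zeta)$ uniformly on compact subsets away from $\zeta$. First I would record the limits of all the ingredients of (\ref{Ernst potential for f})--(\ref{def Q}). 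A residue computation on $\Sigma'$ shows that $A=(\int_a \diff k/y)^{-1}$ converges to a finite nonzero constant $A'=\pm\mu(\zeta^+)/(2\pi i)$, the sign being forced by $\mathrm{Im}\,B>0$; hence $\omega=A\,\diff k/y$ converges to the Abelian differential of the third kind on $\Sigma'$ with simple poles at $\zeta^\pm$ of residues $\mp\tfrac{1}{2\pi i}$. Since the $b$-cycle runs through the pinching neck, $B=AZ$ diverges logarithmically, at a rate such that $e^{-\pi i B}$ grows like $\mathrm{const}\cdot\rho^{-1}$ as $\rho\downarrow 0$; this precise rate will be crucial for the metric function $a$. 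The differential $\omega_{\infty^+\infty^-}$ of (\ref{Abelian differential third Sigmaz}) converges to $\omega'_{\infty^+\infty^-}-\omega'_{\zeta^+\zeta^-}$ (the vanishing $a$-period constraint, in the limit in which the $a$-cycle becomes a loop around the node, producing the extra pole at $\zeta^\pm$), so that $u=\int_{\Gamma^+}h\omega$ and $I=\int_{\Gamma^+}h\omega_{\infty^+\infty^-}$ have explicit finite limits; using that $h(k)/\mu(k^+)$ is odd along $\Gamma$ one gets $\int_\Gamma h\,\diff k/\mu(k^+)=0$, whence $e^{I}\to e^{-J'}$ and $e^{2\pi i u}\to e^{-J'}$.

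The second, and technically central, step is the degeneration of $v:=\int_{-i z}^{\infty^-}\omega$. Because the lower endpoint $-i z$ approaches the pole $\zeta$ of the limiting differential, $v$ itself diverges; the assertion is that $v+\tfrac12 B$ converges, with $\exp\!\big(2\pi i\lim(v+\tfrac12 B)\big)=d$. I would prove this by splitting the contour into a fixed outer piece, which converges by dominated convergence to an integral on $\Sigma'$, and a piece near the neck on which $\omega$ is replaced by its local model $\pm\tfrac{1}{2\pi i}\,\diff k/\sqrt{(k-\zeta)^2+\rho^2}$; the logarithmic divergence of the latter is exactly $-\tfrac12$ times that of $B$, the divergences cancel, and the finite remainder is evaluated in the Joukowski uniformization of $\Sigma'$, in which $\zeta^\pm$ sit at $\lambda=1/d$ and $\lambda=d$ while $\infty^\pm$ sit at $\lambda=\infty$ and $\lambda=0$ (note that $d=-(w(\zeta)+\sqrt{w(\zeta)^2-1})$ is precisely the Joukowski coordinate of $\zeta$); the remaining logarithm then collapses to $d$. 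Similarly $\int_{i\bar z}^{\infty^-}\omega+\tfrac12 B$ converges, and $\int_{-i z}^{i\bar z}\omega$ converges to the half period $\pm\tfrac12$.

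With these limits in hand, the theta quotients are evaluated directly from the series (\ref{Thetadef}): writing an argument as $w_0+mB$ with $m\in\{\pm\tfrac12\}$ and $w_0$ bounded, $\sum_N e^{\pi i B N(N\pm1)+2\pi i N w_0}$ retains only the two terms with $N(N\pm1)=0$ as $\mathrm{Im}\,B\to\infty$, so the numerator and denominator of (\ref{Ernst potential for f}) each converge to an expression of the form $1+e^{\pm(\cdots)}$, and each factor of $Q(v)$ likewise degenerates (the half period value of $\int_{-i z}^{i\bar z}\omega$ making the two factors combine into $1-e^{4\pi i(v+\lim\nu_0)}$, where $\nu_0=\lim(v+\tfrac12 B)$). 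Substituting into (\ref{Ernst potential for f})--(\ref{e2U and a}) and using $e^{I}\to e^{-J'}$, $e^{2\pi i u}\to e^{-J'}$, $e^{2\pi i\nu_0}\to d$, elementary algebra gives exactly (\ref{axis value of f}) and (\ref{e2U axis value}). For $a$, the explicit prefactor $\rho$ is compensated by the divergence $e^{-\pi i B}\sim\mathrm{const}\cdot\rho^{-1}$ coming from $\Theta\big(u+\int_{-i z}^{\infty^-}\omega+\int_{i\bar z}^{\infty^-}\omega\big)$ (via the quasi-periodicity $\Theta(w-B)=e^{-\pi i B+2\pi i w}\Theta(w)$), and a short computation shows that the resulting finite limit equals $a_0$, i.e. $a(i\zeta)=0$; similarly $e^{2\kappa(z)}\to 1$. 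Finally, the error terms: the solution of Theorem \ref{mainth1} is real analytic near every point of the axis and, by the regularity condition $\partial_\rho f(0,\zeta)=0$ built into (\ref{boundary value problem}) (equivalently, by invariance under $\rho\mapsto-\rho$), its $\rho$-expansion at $\rho=0$ contains only even powers; hence all corrections are $O(\rho^2)$, and $a=O(\rho^2)$ since moreover $a(i\zeta)=0$.

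I expect the main obstacle to be the degeneration of $v=\int_{-i z}^{\infty^-}\omega$ in the second step: one must control the cancellation of the two logarithmic divergences (of $v$ and of $\tfrac12 B$) uniformly, keep careful track of the branch of the logarithm and of the several half period ambiguities on $\Sigma_z$, and check that the surviving finite part is exactly $\tfrac{1}{2\pi i}\log d$ modulo $1$. Pinning down the rate $e^{-\pi i B}\sim\mathrm{const}\cdot\rho^{-1}$ (rather than, say, $\rho^{-1/2}$), together with the value of the constant, is part of the same bookkeeping and is what makes the $\rho$-prefactor in the formula for $a$ yield the finite nonzero limit $a_0$.
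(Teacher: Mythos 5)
Your proposal is correct and follows essentially the same route as the paper: Section \ref{axissec} likewise proves Theorem \ref{mainth2} by pinching the cut $[-iz,i\bar z]$ and degenerating the genus one theta formulas, using Fay's expansions $\tilde{\omega}=\frac{1}{2\pi i}\omega'_{\zeta^+\zeta^-}+O(\rho^2)$ and $\tilde{B}=\frac{1}{\pi i}\log\rho+\frac{M'}{\pi i}+O(\rho^2)$ in an axis-adapted homology basis, truncating the theta series to the two surviving terms, and identifying the finite part of the divergent Abel integral through a separate lemma showing $-e^{-K'}=d$ --- precisely the cancellation of logarithmic divergences in $\int_{-iz}^{\infty^-}\omega+\frac{B}{2}$ that you single out as the technical crux. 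The only differences are presentational: you work in the original basis and obtain the $O(\rho^2)$ error from evenness of the formulas under $\rho\mapsto-\rho$, whereas the paper reads the error off from Fay's formulas and cross-checks the axis values (\ref{axis value of f}) and (\ref{e2U axis value}) against the independent derivation via the spectral functions $F$ and $G$ at the end of Section \ref{laxsec}.
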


\begin{remark}[Neumann condition at $\rho=0$]\upshape
Using the results of Theorem \ref{mainth2}, we can easily verify explicitly that the Ernst potential of Theorem \ref{mainth1} satisfies $\re \partial_\zeta f_\Omega(+i0) = 0$, that is, that the real part of the Neumann boundary condition in (\ref{boundary value problem}) holds at the center of the disk.
Indeed, by (\ref{axis value of f}), we have
\begin{align}\label{ddzetarefOmega1}
\frac{\partial}{\partial \zeta}\bigg|_{\zeta = 0^+} f(i\zeta) 
= \frac{\partial}{\partial \zeta}\bigg|_{\zeta = 0^+}\frac{1 + e^{J^{\prime}}d}{e^{J^{\prime}} + d}.
\end{align}
Furthermore, it follows from (\ref{Jprimedef}) that
\begin{align}\label{partialKprime}
\lim_{\zeta \downarrow 0} d = -i, \quad
\lim_{\zeta \downarrow 0} \partial_\zeta d = 2i\Omega,
\end{align}
and
\begin{align*}
\frac{\partial J'}{\partial \zeta}\bigg|_{\zeta = 0^+}  & = 
\lim_{\zeta \downarrow 0}  \int_\Gamma \frac{h(k)}{\mu(k^+)} 
\frac{\partial}{\partial \zeta} \frac{\sqrt{\zeta^2 + \frac{1}{4\Omega^2}}}{k-\zeta} dk
= \lim_{\zeta \downarrow 0} \int_\Gamma \frac{h(k)}{\mu(k^+)}\frac{(1 + 4k \zeta \Omega^2)dk}{4\Omega^2 (k-\zeta)^2 \sqrt{\zeta^2 + \frac{1}{4\Omega^2}}} .
\end{align*}
The  Sokhotski-Plemelj formula gives
\begin{align}\label{partialJprime}
\frac{\partial J'}{\partial \zeta}\bigg|_{\zeta = 0^+}  & = \dashint_\Gamma \frac{h(k)}{\mu(k^+)}\frac{dk}{2\Omega k^2}
- \pi i \underset{k=0}{\res}\bigg( \frac{h(k)}{\mu(k^+)}\frac{dk}{2\Omega k^2} \bigg)
=0 - \frac{1}{\mu(0^+)} = -2\Omega,
\end{align}
where the contribution from the principal value integral vanishes because the function $h(k)$ is odd. 
Using (\ref{partialKprime}) and (\ref{partialJprime}) to compute the right-hand side of (\ref{ddzetarefOmega1}), we find \begin{align}\label{ddzetarefOmega}
\partial_\zeta f(+i0) = 2i\Omega. 
\end{align}
On the other hand, by (\ref{corotatingmetricfunctionsa}) and (\ref{e2U a kappa asymptotic near rotation axis}), we have $\re f_\Omega(i\zeta) =  \re f(i\zeta)$ for $\zeta > 0$. Hence $\re \partial_\zeta f_\Omega(+i0) =  \re \partial_\zeta f(+i0) = \re(2i\Omega) = 0$, which shows that the Neumann boundary condition indeed holds for the real part of $f_\Omega$ at $\rho =0$.
\end{remark}

\section{Numerical example}\label{examplesec}
The formulas of Theorem \ref{mainth1} are convenient for numerical computation. Consider for example the following particular choice of the parameters $\rho_0$ and $\Omega$:
$$\rho_0=1, \qquad \Omega=\frac{3}{10}.$$ 
Then $k_{1}= -5i/3$ and the graphs of the Ernst potential and the metric functions given in Theorem \ref{mainth1} are displayed in Figures \ref{fig:f 3D}-\ref{fig:ae2U and e2kappa}.
It can be numerically verified to high accuracy that $f_{\Omega}$ satisfies the Neumann condition along the disk, and that the defining relations (\ref{azkappaz}) between the metric functions $a, \kappa$, and the Ernst potential $f$ are valid. Similar graphs and numerical results are obtained also for other choices of the parameters $\rho_0$ and $\Omega$ with $2\Omega \rho_0 < 1$.

\begin{figure}[ht!]
\begin{subfigure}{.5\textwidth}
  \centering
  \includegraphics[width=.95\linewidth]{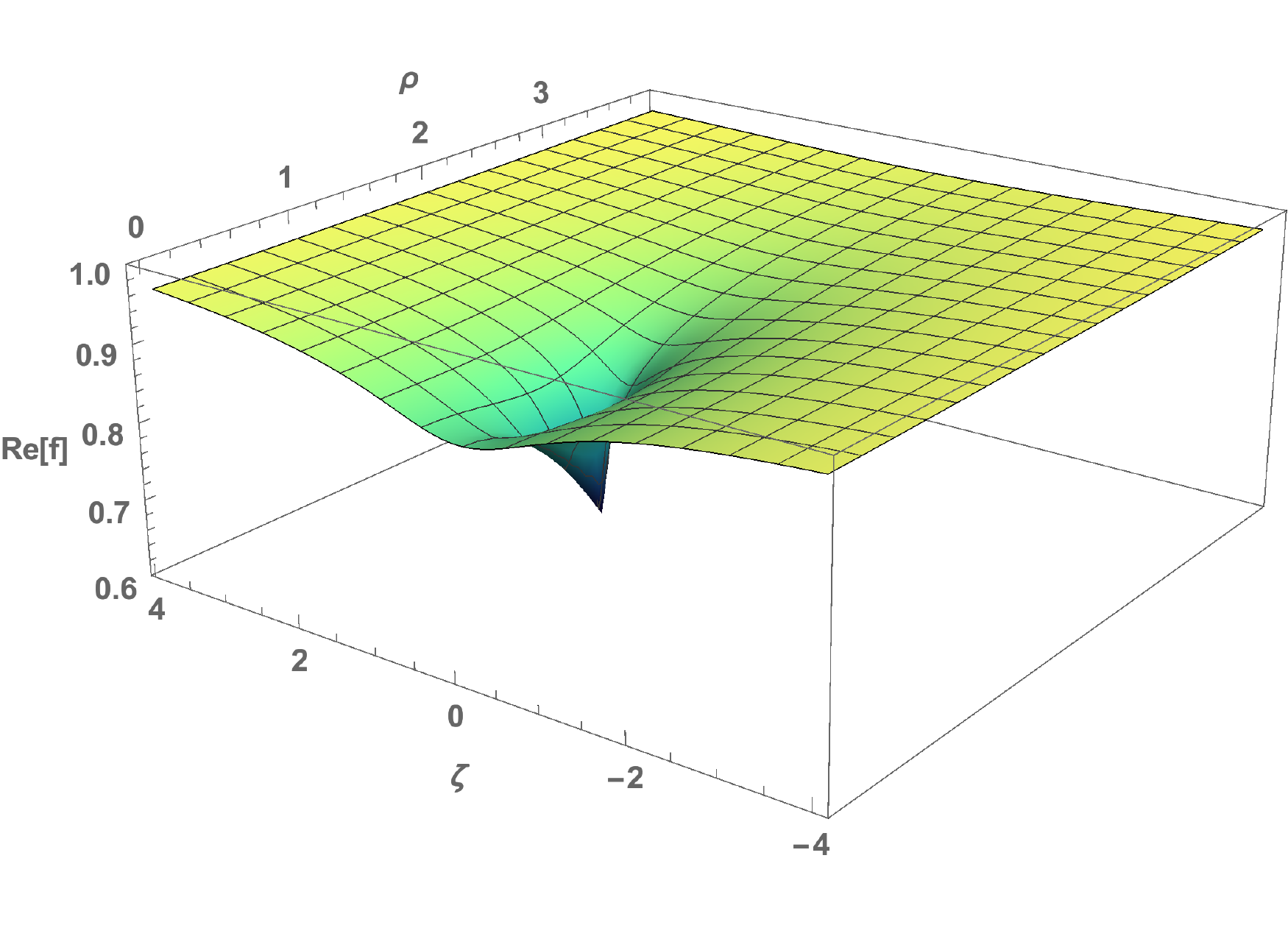}
\end{subfigure}%
\begin{subfigure}{.5\textwidth}
  \centering
  \includegraphics[width=.9\linewidth]{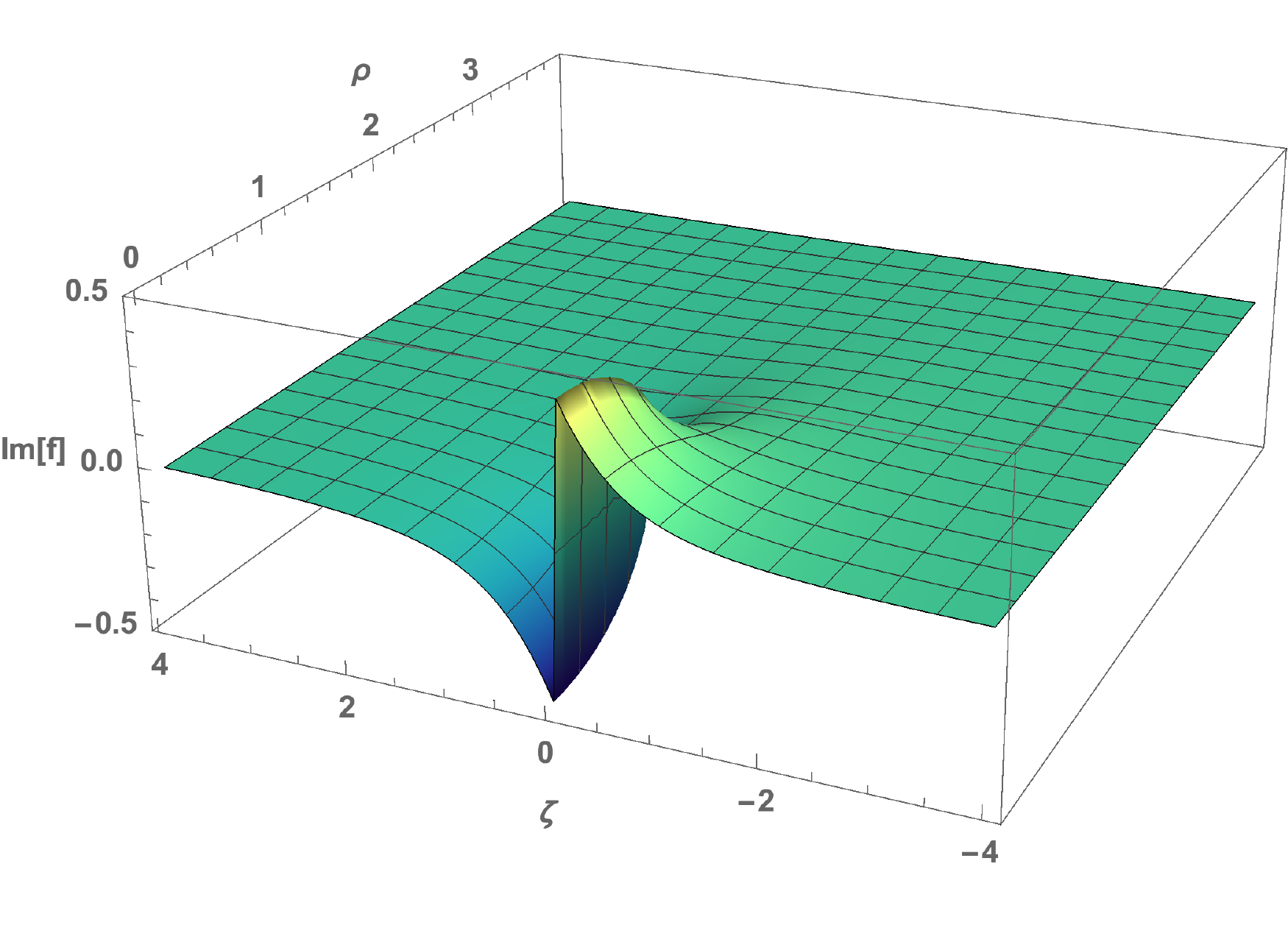}
\end{subfigure}
\caption{\Small{The real and the imaginary parts of the Ernst potential $f$.}}
\label{fig:f 3D}
\end{figure}

\begin{figure}[ht!]
\begin{subfigure}{.5\textwidth}
  \centering
  \includegraphics[width=.9\linewidth]{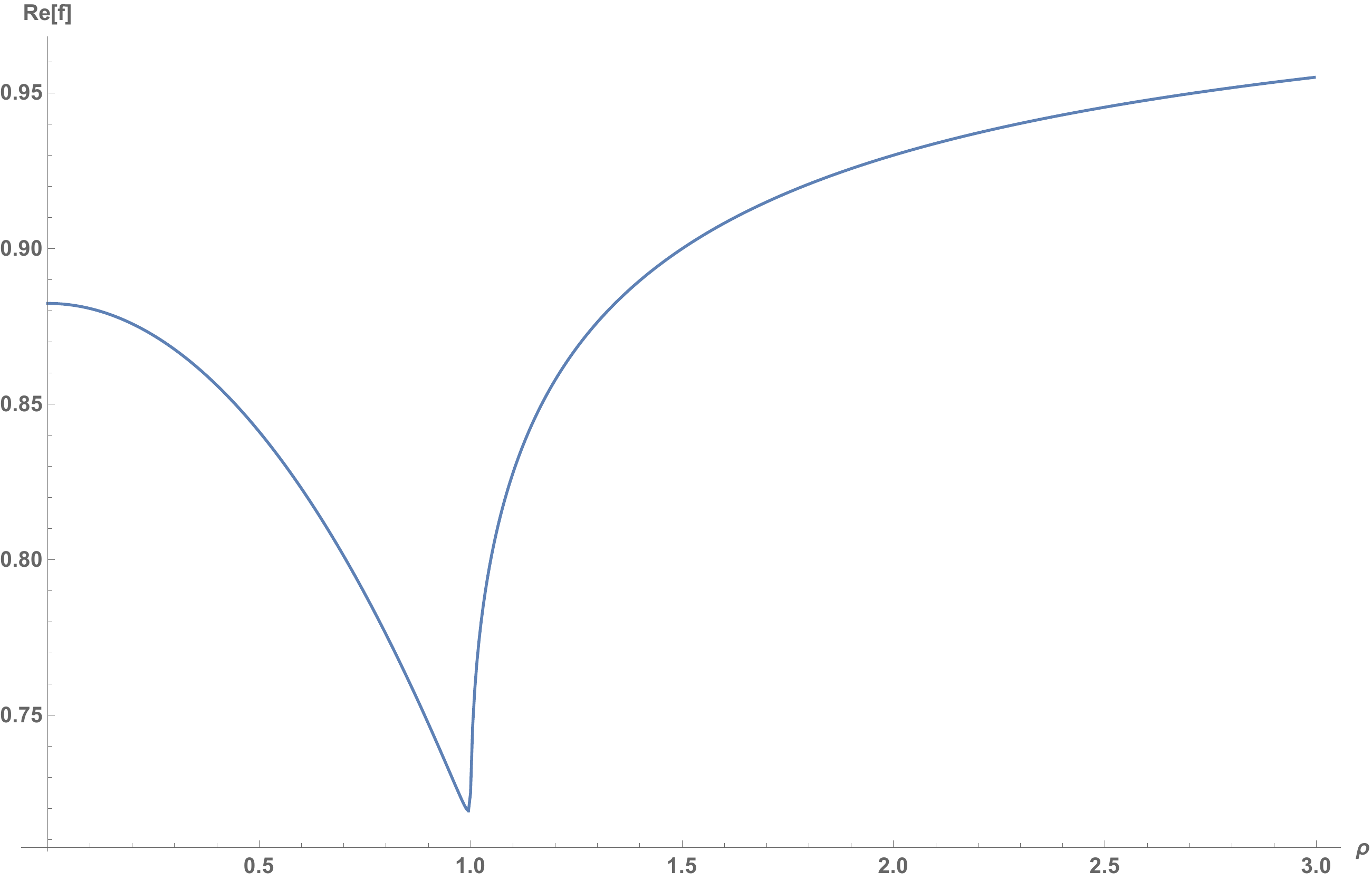}
\end{subfigure}%
\begin{subfigure}{.5\textwidth}
  \centering
  \includegraphics[width=.9\linewidth]{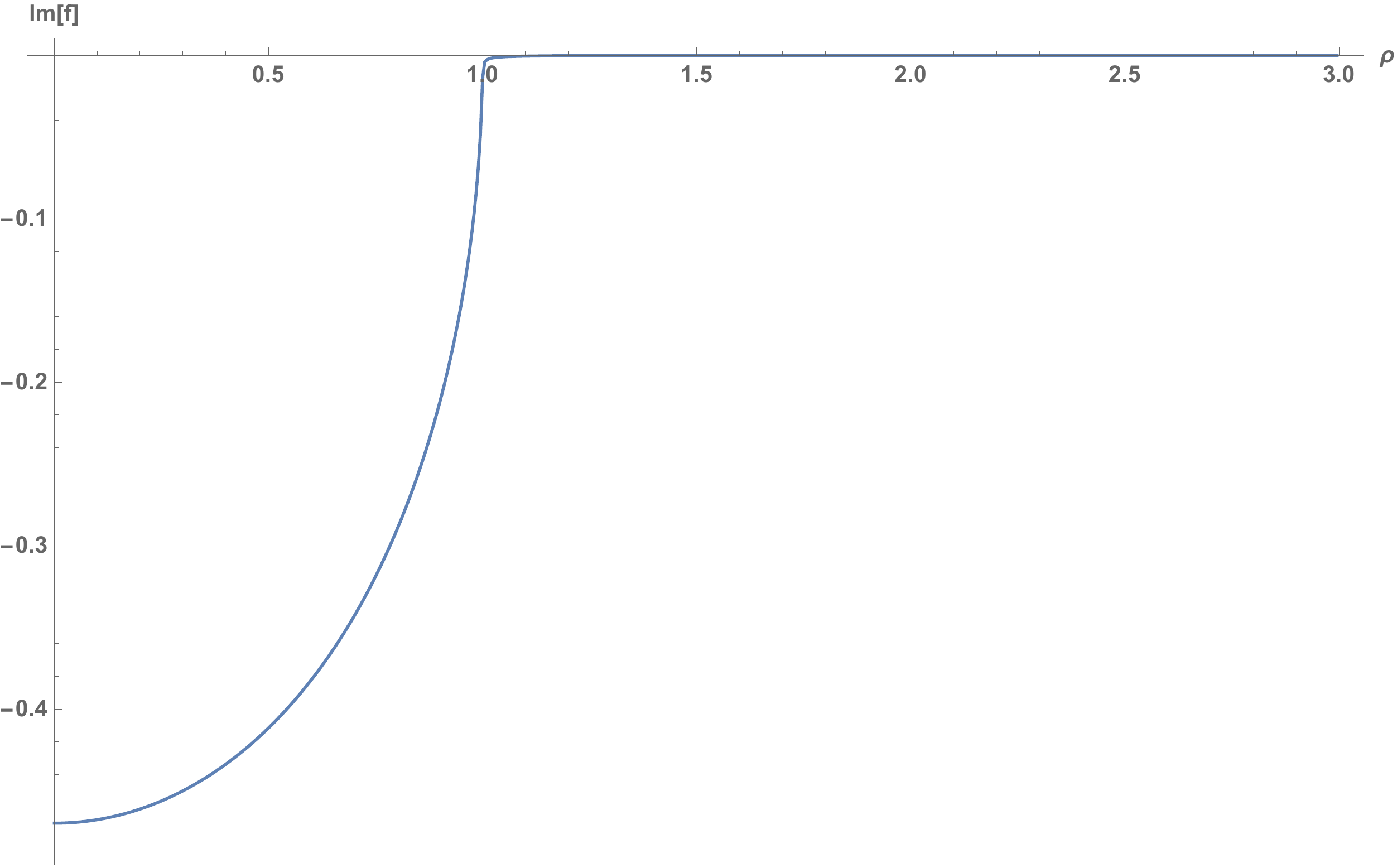}
\end{subfigure}
\caption{\Small{The real and imaginary parts of the Ernst potential $f(\rho + i0)$ on the equatorial plane $\zeta = +0$.}}
\label{fig:f on equatorial plane}
\end{figure}

\begin{figure}[ht]
\vspace{-.4cm}
\begin{subfigure}{.5\textwidth}
  \centering
  \includegraphics[width=.95\linewidth]{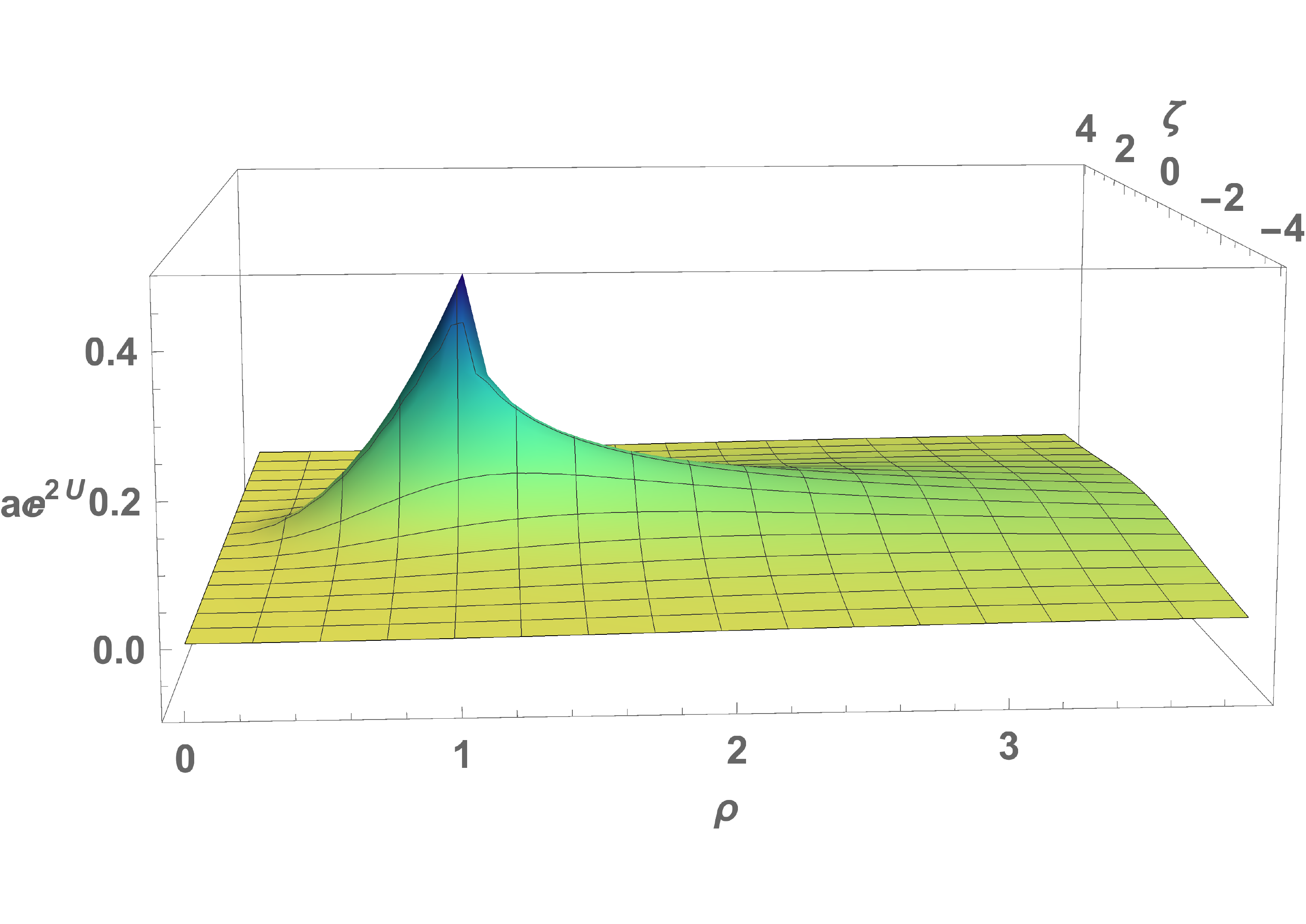}
\end{subfigure}%
\begin{subfigure}{.5\textwidth}
  \centering
  \includegraphics[width=1\linewidth]{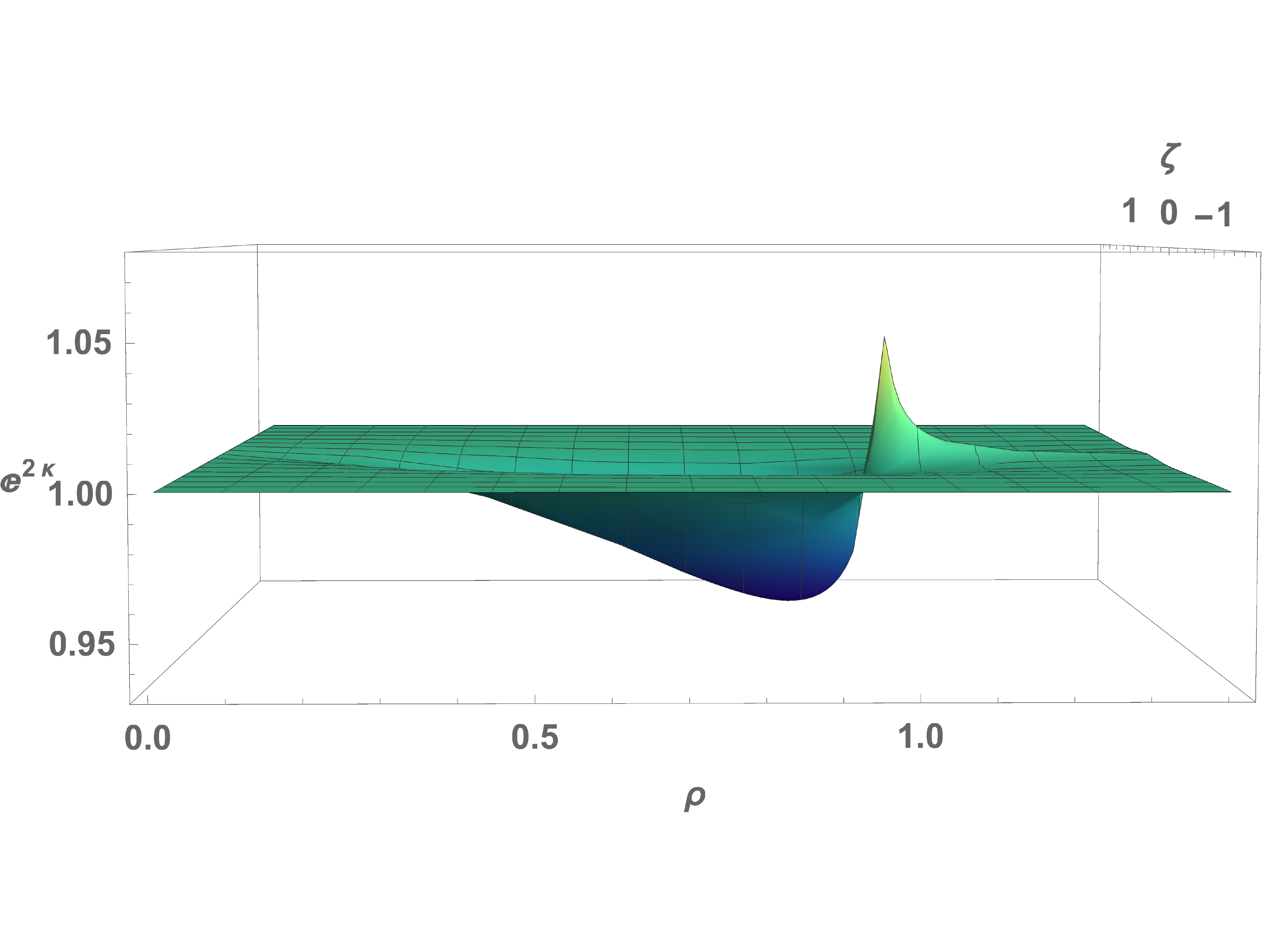}
\end{subfigure}
\vspace{-1cm}
\caption{\Small{The metric functions $ae^{2U}$ and $e^{2\kappa}$.}}
\label{fig:ae2U and e2kappa}
\end{figure}

\section{Lax pair and spectral theory}\label{laxsec}

\subsection{Lax pair}\label{Lax pair formulation}
The stationary Ernst equation \eqref{Ernst equation} admits the Lax pair
\begin{equation}\label{stationary Lax pair}
\begin{split}
\Phi_{z}(z,k) &=U(z,k)\Phi(z,k),\\
\Phi_{\bar{z}}(z,k) &=V(z,k)\Phi(z,k),
\end{split}
\end{equation}
where $z=\rho+i \zeta$,  $\Phi(z,k)$ denotes the $2\times 2$ matrix-valued eigenfuntion, and $U$, $V$ are defined by
\begin{equation}
U =
\frac{1}{f+\bar{f}}\begin{pmatrix}
\bar{f}_{z} & \lambda \bar{f}_{z} \\
\lambda f_{z} & f_{z}  \\
\end{pmatrix},
\quad\quad
V =
\frac{1}{f+\bar{f}}\begin{pmatrix}
\bar{f}_{\bar{z}}& \frac{1}{\lambda}\bar{f}_{\bar{z}} \\
\frac{1}{\lambda}f_{\bar{z}}&f_{\bar{z}}  \\
\end{pmatrix},
\end{equation}
with 
\begin{equation}\label{riemann surface sz}
\lambda = \sqrt{\frac{k-i \bar{z}}{k+i z}}.
\end{equation}
For each $z$, $\Phi(z,\cdot)$ defines a map from $\mathscr{S}_{z}$ to the space of $2\times2$ matrices, where $\mathscr{S}_{z}$ denotes the genus zero Riemann surface defined by (\ref{riemann surface sz}).
We view $\mathscr{S}_{z}$ as a two-sheeted covering of the complex $k$-plane by introducing the branch cut $[-i z, i \bar{z}]$ from $-i z$ to $i \bar{z}$. The upper (lower) sheet of $\mathscr{S}_{z}$ is charactered by $\lambda\to 1$ ($\lambda\to -1$) as $k\to \infty$.
As in the case of $\Sigma_z$ and $\Sigma'$, we write $k^+$ and $k^-$ for the points in $\mathscr{S}_{z}$ which project onto  $k\in \hat{\C} \setminus [-i z, i \bar{z}]$ and which lie on the upper and lower sheets of $\mathscr{S}_{z}$, respectively. 
We let $\Gamma^+$ and $\Gamma^-$ denote the coverings of $\Gamma$ on the upper and lower sheets of $\mathscr{S}_{z}$, respectively. 
For a $2\times2$-matrix $M$, we denote the first and second columns of $M$ by $[M]_{1}$ and $[M]_{2}$, respectively. 

Equation \eqref{stationary Lax pair} can be rewritten in  differential form as
\begin{equation}\label{stationary differential form}
\diff\Phi=W\Phi,
\end{equation}
where the one-form $W$ is  defined by
\begin{equation}\label{stationary W def}
\begin{split}
W&=\frac{1}{f+\bar{f}}\begin{pmatrix}
\bar{f}_{\rho} & \frac{1}{2}\left[\left(\frac{1}{\lambda}+\lambda\right)\bar{f}_{\rho}+i\left(\frac{1}{\lambda}-\lambda\right)\bar{f}_{\zeta}\right]  \vspace{10pt}\\
\frac{1}{2}\left[\left(\frac{1}{\lambda}+\lambda\right)f_{\rho}+i\left(\frac{1}{\lambda}-\lambda\right)f_{\zeta}\right]  & f_{\rho}  \\
\end{pmatrix}\diff\rho \\
&\quad+\frac{1}{f+\bar{f}}\begin{pmatrix}
\bar{f}_{\zeta} & \frac{1}{2}\left[\left(\frac{1}{\lambda}+\lambda\right)\bar{f}_{\zeta}-i\left(\frac{1}{\lambda}-\lambda\right)\bar{f}_{\rho}\right]  \vspace{10pt}\\
\frac{1}{2}\left[\left(\frac{1}{\lambda}+\lambda\right)f_{\zeta}-i\left(\frac{1}{\lambda}-\lambda\right)f_{\rho}\right]  & f_{\zeta}  \\
\end{pmatrix}\diff\zeta\\
&=:W_1 \diff\rho+ W_2\diff \zeta.
\end{split}
\end{equation}
We normalize the eigenfunction $\Phi$ by imposing the conditions 
\begin{equation}\label{stationary initial condition}
\lim_{z\to i \infty}\left[\Phi(z,k^{-})\right]_{1}= \begin{pmatrix}
1 \\
1 
\end{pmatrix},\quad 
\lim_{z\to i \infty}\left[\Phi(z,k^{+})\right]_{2}
=
\begin{pmatrix}
1 \\
-1 
\end{pmatrix}
\end{equation}
for all $k\in \hat{\C}$. As a consequence,  $\Phi(z,k)$  admits the symmetries
\begin{equation}\label{stationary symmetry of Phi from initial condition}
\Phi(z,k^{+})=\sigma_{3}\Phi(z,k^{-})\sigma_{1},\quad 
\Phi(z,k^{+})=\sigma_{1}\overline{\Phi(z,\overline{k}^{+})}\sigma_{3},
\end{equation}
where $\{\sigma_j\}_1^3$ are the standard Pauli matrices. 

We henceforth suppose that $f(z)$ is a solution of the BVP (\ref{boundary value problem}).
Physically, the possibly empty set of points where $\re f = 0$ constitutes the ergospheres of the spacetime. The matrices $U$ and $V$ in the Lax pair \eqref{stationary Lax pair} are, in general, singular at these points and this may give rise to singularities of $\Phi$. On the other hand, points where $\det \Phi = 0$ are related to the presence of solitons (cf. \cite{N1996}). 
It is natural to expect the solution of the boundary value problem (\ref{boundary value problem}) to be free of ergospheres and solitons (at least for small values of $\Omega$). We will therefore henceforth make the assumption that 
\begin{align}\label{ergopheresolitonassumption}
\re f > 0 \quad \text{and} \quad \det \Phi \neq 0
\end{align}
throughout the spacetime. The consistency of this assumption can be ascertained once the final solution has been constructed.
By integrating both sides of \eqref{stationary differential form} from $i \infty$ to $z$ and using (\ref{ergopheresolitonassumption}), it can be confirmed that
$\Phi(z,k)$ is a well-defined analytic function of $k\in \mathscr{S}_{z} \setminus (\Gamma^+\cup \Gamma^-)$ for any fixed $z$, see \cite{lenells2010boundary}.

Let $\Phi_{\Omega}$ denote the co-rotating counterpart of $\Phi$. The Lax pair equations \eqref{stationary Lax pair} and the conditions \eqref{stationary initial condition} retain their form in the co-rotating frame with $f$ and $\Phi$ replaced by $f_{\Omega}$ and $\Phi_{\Omega}$. It is proved in \cite{meinel2008relativistic} that $\Phi_{\Omega}$ is related to $\Phi$ by 
$$\Phi_{\Omega}(z,k)=\Lambda(z,k) \Phi(z,k),$$
where 
$$\Lambda(z,k)=(1+\Omega a)\ima-\Omega\rho e^{-2U}\sigma_3+i (k+i z)\Omega e^{-2U}(-\sigma_3 +\lambda(z,k)\sigma_1\sigma_3),\quad k\in \mathscr{S}_{z},$$
and $\ima$ denote the $2\times2$ identity matrix. 

\subsection{The main RH problem} 
The following lemma can be found in \cite{lenells2010boundary}.
\begin{lemma}\label{Value Phi on zeta axis}
For $k\in \C$,  $\Phi(i\zeta,k)$ can be expressed in terms of $f(i\zeta)$ and the spectral functions $F(k)$ and $G(k)$ as follows:
\begin{align*}
\Phi(i\zeta,k^+) & =\begin{pmatrix}
\ol{f(i\zeta)} & 1 \\
f(i\zeta) & -1  \\
\end{pmatrix}\mathcal{A}(k),\quad \zeta>0,
	\\
\Phi(i\zeta,k^-)&=\begin{pmatrix}
1 & \ol{f(i\zeta)} \\
1& -f(i\zeta)   \\
\end{pmatrix}\sigma_1 \mathcal{A}(k)\sigma_1,\quad \zeta>0,
	\\
\Phi(i\zeta,k^+)&=\begin{pmatrix}
\ol{f(i\zeta)} & 1 \\
f(i\zeta) & -1  \\
\end{pmatrix}\sigma_1 \mathcal{A}(k)\sigma_1,\quad \zeta<0,
	\\
\Phi(i\zeta,k^-)&=\begin{pmatrix}
1 & \ol{f(i\zeta)} \\
1& -f(i\zeta)   \\
\end{pmatrix}\mathcal{A}(k),\quad \zeta<0,
\end{align*}
where $\mathcal{A}(k)$ is defined by 
\begin{equation}\label{def A(k)}
\mathcal{A}(k) = \begin{pmatrix}
F(k) & 
0 \\
G(k) & 1  \\
\end{pmatrix}
\end{equation}
and the spectral functions $F(k)$ and $G(k)$ have the following properties:
\begin{itemize}
\item $F(k)$ and $G(k)$ descend to functions on $\hat{\C}$, namely when viewed as functions on $\mathscr{S}_{z}$, they satisfy
\begin{equation}
F(k^+)=F(k^-),\quad G(k^+)=G(k^-), \quad k\in \hat{\C}.
\end{equation}

\item $F(k)$ and $G(k)$ are analytic for $k\in \C\backslash \Gamma$. 

\item $F(k)=\ol{F(\ol{k})}$ and $G(k)=-\ol{G(\ol{k})}$ for $k\in \C\backslash \Gamma$.

\item As $k\to \infty$,
\begin{equation}\label{asymptotic of F and G as k to infty}
F(k)=1+O(k^{-1}),\quad G(k)=O(k^{-1}).
\end{equation}
\end{itemize}
\end{lemma}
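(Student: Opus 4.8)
The plan is to evaluate $\Phi$ along the rotation axis $\{z=i\zeta\}$, where the spectral curve $\mathscr{S}_z$ degenerates and the Lax pair can be integrated in closed form. For $z=i\zeta$ one has $k+iz=k-i\bar z=k-\zeta$, so the function $\lambda$ in \eqref{riemann surface sz} is identically $+1$ on $\mathscr{S}_z^+$ and identically $-1$ on $\mathscr{S}_z^-$. Substituting $\lambda=1$ into \eqref{stationary W def} and restricting \eqref{stationary differential form} to the axis (so that only the $\diff\zeta$-component survives), the eigenfunction equation collapses to the $k$-independent linear ODE
\begin{equation*}
\partial_\zeta\Phi(i\zeta,k^{+})=\frac{1}{f+\bar f}\begin{pmatrix}\bar f_\zeta & \bar f_\zeta\\ f_\zeta & f_\zeta\end{pmatrix}\Phi(i\zeta,k^{+}),\qquad f=f(i\zeta).
\end{equation*}
A direct (rank-one) computation shows that $M(\zeta):=\begin{pmatrix}\overline{f(i\zeta)} & 1\\ f(i\zeta) & -1\end{pmatrix}$ solves this ODE, and $\det M(\zeta)=-(f+\bar f)=-2e^{2U}\neq 0$ by \eqref{ergopheresolitonassumption}; hence $M(\zeta)^{-1}\Phi(i\zeta,k^{+})$ is independent of $\zeta$ for $\zeta>0$, i.e.\ $\Phi(i\zeta,k^{+})=M(\zeta)\mathcal{A}(k)$ for some matrix $\mathcal{A}(k)$ depending on $k$ only.

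To identify $\mathcal{A}(k)$ I would let $\zeta\to+\infty$: asymptotic flatness gives $f(i\zeta)\to 1$, so $M(\zeta)\to\begin{pmatrix}1&1\\1&-1\end{pmatrix}$, while \eqref{stationary initial condition} forces $[\Phi(i\zeta,k^{+})]_{2}\to(1,-1)^{T}$. Since $[\Phi(i\zeta,k^{+})]_{2}=M(\zeta)[\mathcal{A}(k)]_{2}$, passing to the limit yields $[\mathcal{A}(k)]_{2}=(0,1)^{T}$, so $\mathcal{A}(k)=\begin{pmatrix}F(k)&0\\G(k)&1\end{pmatrix}$ as in \eqref{def A(k)}, with $F,G$ the entries of the first column; this is the first displayed identity of the lemma. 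The remaining three would be obtained from the discrete symmetries: using $\Phi(z,k^{-})=\sigma_3\Phi(z,k^{+})\sigma_1$ from \eqref{stationary symmetry of Phi from initial condition} together with the identity $\sigma_3M(\zeta)=\begin{pmatrix}1&\bar f\\1&-f\end{pmatrix}\sigma_1$ produces the $\zeta>0$, $k^{-}$ formula; and the equatorial symmetry $f(\rho,\zeta)=\overline{f(\rho,-\zeta)}$, together with the symmetry of $\Phi$ it induces between $z=i\zeta$ and $z=-i\zeta$, produces the two $\zeta<0$ formulas (the $\zeta<0$, $k^{-}$ case again via \eqref{stationary symmetry of Phi from initial condition}).

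For the stated properties of $F$ and $G$: by construction they are functions of $k\in\C$ alone, so the sheet-independence $F(k^{+})=F(k^{-})$, $G(k^{+})=G(k^{-})$ is automatic; analyticity of $F,G$ on $\C\setminus\Gamma$ follows from the analyticity of $k\mapsto\Phi(i\zeta,k)$ on $\mathscr{S}_z\setminus(\Gamma^{+}\cup\Gamma^{-})$ (obtained by integrating \eqref{stationary differential form} under \eqref{ergopheresolitonassumption}, as recalled in the text) since $M(\zeta)$ is $k$-independent with $\det M(\zeta)\neq 0$. The reality relations come from the second identity in \eqref{stationary symmetry of Phi from initial condition}: because $\zeta$ is real one checks $\sigma_1\overline{M(\zeta)}\sigma_3=M(\zeta)$, so $\Phi(i\zeta,k^{+})=\sigma_1\overline{\Phi(i\zeta,\bar k^{+})}\sigma_3$ reduces to $\mathcal{A}(k)=\sigma_3\overline{\mathcal{A}(\bar k)}\sigma_3$, i.e.\ $F(k)=\overline{F(\bar k)}$ and $G(k)=-\overline{G(\bar k)}$. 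Finally, \eqref{asymptotic of F and G as k to infty} follows from the large-$k$ behavior of $\Phi$: the leading-order Lax pair forces $\Phi(i\zeta,k^{+})=M(\zeta)(\ima+O(k^{-1}))$ as $k\to\infty$, hence $\mathcal{A}(k)=\ima+O(k^{-1})$.

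The algebra above is routine; the real content, which I would borrow from \cite{lenells2010boundary} rather than reprove, lies in the analytic prerequisites: that $\Phi(z,k)$ is a well-defined sectionally analytic function of $k\in\mathscr{S}_z$ with jumps only across $\Gamma^{\pm}$, that it has a genuine boundary value along the axis, and that it admits the controlled $k\to\infty$ expansion used above. These require the no-ergosphere/no-soliton assumption \eqref{ergopheresolitonassumption} and a careful analysis of the endpoints of $\Gamma$ and of the degeneration of $\mathscr{S}_z$ at $z=0$ — it is exactly this degeneration that is responsible for the $\zeta>0$ versus $\zeta<0$ dichotomy in the statement. This is the step I expect to be the main obstacle if one wished to give a fully self-contained proof.
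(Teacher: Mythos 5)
Your proposal is essentially the paper's own approach: the paper gives no proof of this lemma but cites \cite{lenells2010boundary}, and the argument there is exactly what you describe --- on the axis $\lambda\equiv\pm1$ so the Lax pair integrates explicitly to $\Phi(i\zeta,k^{+})=M(\zeta)\mathcal{A}(k)$, the normalization \eqref{stationary initial condition} fixes the second column of $\mathcal{A}$, and the sheet and reality symmetries \eqref{stationary symmetry of Phi from initial condition} yield the remaining three formulas and the listed properties of $F$ and $G$ (your matrix identities $\sigma_3 M=N\sigma_1$ and $\sigma_1\overline{M}\sigma_3=M$ check out). The only places your sketch is thin --- the sheet bookkeeping needed to pass from $\zeta>0$ to $\zeta<0$ around the disk, and the controlled $k\to\infty$ expansion behind \eqref{asymptotic of F and G as k to infty} --- are precisely the points you explicitly defer to \cite{lenells2010boundary}, which is also all the paper itself does.
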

The jumps of the function $\Phi(z,k)$ across $\Gamma^+$ and $\Gamma^-$ can be expressed in terms of $F(k)$ and $G(k)$. In fact, let $g_+(k)$ and $g_-(k)$ denote the boundary values of a function $g(k)$ on the right and left sides of $\Gamma$, respectively. Then $\Phi(z,k)$ satisfies (see \cite{lenells2010boundary,meinel2008relativistic})
\begin{align}\label{Phi jump across Gamma}
\begin{cases}
 \Phi_-(z,k)=\Phi_+(z,k)D(k), & k\in \Gamma^{+},
	\\
 \Phi_-(z,k)=\Phi_+(z,k)\sigma_1 D(k)\sigma_1,& k\in \Gamma^{-},
 \end{cases}
\end{align}
where $D$ is defined by
\begin{align}\label{expression D}
D(k) = \begin{pmatrix}
F_+(k) & 0 \\
G_+(k) & 1  \\
\end{pmatrix}^{-1}
\begin{pmatrix}
F_-(k) & 0 \\
G_-(k) & 1  \\
\end{pmatrix}.
\end{align}

\subsection{The global relation}
Since $f$ is equatorially symmetric, the values of $F(k)$ and $G(k)$ on the left and right sides of $\Gamma$ satisfy an important relation called the global relation.
 
\begin{lemma}[Global relation]
The spectral functions $F(k)$ and $G(k)$ satisfy 
\begin{equation}\label{globalrelation}
\overline{\mathcal{A}_{+}(k)}\sigma_1\overline{\mathcal{A}^{-1}_{+}(k)}\sigma_1=\sigma_1 \mathcal{A}_{-}(k)\sigma_1 \mathcal{A}_{-}^{-1}(k), \qquad k \in \Gamma,
\end{equation}
where $\mathcal{A}_{+}(k)$ and $\mathcal{A}_{-}(k)$ denote the values of $\mathcal{A}(k)$ on the right and left sides of $\Gamma$, respectively.
\end{lemma}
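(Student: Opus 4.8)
The plan is to follow the equatorial symmetry $f(\rho,\zeta)=\overline{f(\rho,-\zeta)}$ through the construction of the eigenfunction and read off the constraint it imposes on $F$ and $G$ along $\Gamma$. The point is that this symmetry of $f$ forces a symmetry of $\Phi$ relating its values at $z$ and $\bar z$; evaluating that symmetry on the two halves of the rotation axis with the help of Lemma~\ref{Value Phi on zeta axis} makes all dependence on $f$ drop out and leaves precisely the algebraic identity (\ref{globalrelation}). This parallels the way the symmetries (\ref{stationary symmetry of Phi from initial condition}) are obtained.

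First I would establish the equatorial symmetry of $\Phi$. Differentiating $f(\rho,\zeta)=\overline{f(\rho,-\zeta)}$ gives $f_\rho(\rho,-\zeta)=\overline{f_\rho(\rho,\zeta)}$ and $f_\zeta(\rho,-\zeta)=-\overline{f_\zeta(\rho,\zeta)}$, and a direct inspection of the one-form $W$ in (\ref{stationary W def})---using the identity $\overline{\lambda(\bar z,-\bar k)}=\lambda(z,k)$ relating the branch functions of $\mathscr S_z$ and $\mathscr S_{\bar z}$---shows that $\overline{\Phi(\bar z,(-\bar k)^{+})}$ and $\overline{\Phi(\bar z,(-\bar k)^{-})}$ solve the Lax pair (\ref{stationary Lax pair}) at $k^{+}$ and $k^{-}$ for the same potential $f$. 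Since $\det\Phi\neq 0$ by (\ref{ergopheresolitonassumption}), this solution differs from $\Phi$ by a right factor $Q=Q(k)$ that is $z$-independent and analytic off $\Gamma$; letting $z\to i\infty$ (so $\bar z\to -i\infty$) and using asymptotic flatness fixes $Q$ in terms of $F$ and $G$.

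Next I would specialize the relation $\Phi(z,k)=\overline{\Phi(\bar z,-\bar k)}\,Q(k)$ to $z=i\zeta$ with $\zeta>0$, so that $\bar z=-i\zeta$ lies on the lower part of the rotation axis, and substitute the corresponding formulas of Lemma~\ref{Value Phi on zeta axis} together with $f(-i\zeta)=\overline{f(i\zeta)}$. The matrix prefactors, which by Lemma~\ref{Value Phi on zeta axis} depend only on $f(i\zeta)$, then cancel, leaving a purely algebraic relation among $\mathcal A$, $\overline{\mathcal A}$, $Q$ and the Pauli matrices. Finally I would pass to boundary values on $\Gamma$: since $k\mapsto -\bar k$ is the identity on $\Gamma$ but interchanges its two sides, the left and right limits of $\mathcal A$ get swapped on one side of this relation, and after eliminating $Q$ and using the reality relations $F(k)=\overline{F(\bar k)}$, $G(k)=-\overline{G(\bar k)}$ of Lemma~\ref{Value Phi on zeta axis} to express everything through $\mathcal A_+$ and $\mathcal A_-$ at a single point, one arrives at $\overline{\mathcal A_+}\sigma_1\overline{\mathcal A_+^{-1}}\sigma_1=\sigma_1\mathcal A_-\sigma_1\mathcal A_-^{-1}$, that is, (\ref{globalrelation}).

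The main obstacle is bookkeeping rather than analysis. For $z$ on the imaginary axis the surface $\mathscr S_z$ degenerates (its branch points collide on the real $k$-axis), so one must keep careful track of sheets, of the Pauli conjugations produced by the reflection $\zeta\to -\zeta$ together with the sheet exchange, and---most delicately---of which side of $\Gamma$ each boundary value lies on after applying $k\mapsto -\bar k$. Once these signs and sides are pinned down consistently, the remainder is $2\times 2$ linear algebra.
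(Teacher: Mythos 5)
Your proposal is correct and follows essentially the same route as the paper, which itself defers the proof to Proposition 4.3 of the cited Lenells--Fokas reference: there, too, the global relation is obtained by observing that equatorial symmetry makes $\overline{\Phi(\bar z,-\bar k)}$ a second solution of the Lax pair, fixing the $z$-independent right factor via the normalization and the axis formulas of Lemma \ref{Value Phi on zeta axis}, and then using that $k\mapsto-\bar k$ fixes $\Gamma$ pointwise while exchanging its two sides. Your key computations ($\overline{\lambda(\bar z,-\bar k)}=\lambda(z,k)$ and the side-swapping of $\Gamma$) are the right ones, and the remaining sheet/sign bookkeeping you flag is exactly the content of the omitted details.
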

\begin{proof}
See \cite[Proposition 4.3]{lenells2010boundary}.
\end{proof}

\subsection{An additional relation}\label{algebraic relation from Neumann}
The fact that $f_\Omega$ obeys a Neumann condition along the disk implies that $F(k)$ and $G(k)$ satisfy an additional relation beyond the global relation.

We use the superscripts $L$ and $R$ on a function of $k$ to indicate that this function should be evaluated with $k$ lying on the left or right side of the branch cut $[-i z,i\bar{z}]$, respectively. If one of the superscripts $L$ or $R$ is present, we always assume that the evaluation point $k$ lies to the right of $\Gamma$. The latter specification is needed when $z=\rho+i 0$ so that the branch cut $[-i z,i \bar{z}]$ runs infinitesimally close to $\Gamma$.

\begin{lemma}[An additional relation]\label{additionalrelationlemma}
The function $\mathcal{A}(k)\sigma_1 \mathcal{A}^{-1}(k)$ satisfies 
\begin{equation}\label{jump of A sigma1 A inverse}
\begin{split}
\big[\mathcal{B}^{-1}\Lambda^{-1}&(+i 0, k^{+})\sigma_1 \ol{\Lambda(+i 0, k^{+})}\ol{\mathcal{B}}\sigma_1\big]\big[\sigma_1 \overline{\mathcal{A}_{+}(k)}\sigma_1\overline{\mathcal{A}^{-1}_{+}(k)}\sigma_1\big]
	\\
&=-\big[\mathcal{A}_{+}(k)\sigma_1 \mathcal{A}_{+}^{-1}(k)\big]\big[\mathcal{B}^{-1}\Lambda^{-1}(+i 0, k^{+})\sigma_1 \ol{\Lambda(+i 0, k^{+})}\ol{\mathcal{B}}\sigma_1\big],\quad k\in \Gamma,
\end{split}
\end{equation}
where the matrix $\mathcal{B}$ is defined by 
$$\mathcal{B} =
\begin{pmatrix}
\ol{f(+i 0)} & 1 \\
f(+i 0) & -1  
\end{pmatrix}.$$
\end{lemma}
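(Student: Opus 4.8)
The plan is to translate the Neumann boundary condition on the disk into a jump relation for $\mathcal{A}(k)\sigma_1\mathcal{A}^{-1}(k)$ across $\Gamma$, following the same strategy that produced the global relation (Lemma referenced to \cite[Prop.~4.3]{lenells2010boundary}) but now exploiting the additional structure coming from the co-rotating eigenfunction $\Phi_\Omega = \Lambda\Phi$. First I would evaluate $\Phi_\Omega$ on the disk. The Neumann condition $\partial_\zeta f_\Omega(\rho,\pm 0) = 0$ for $0 < \rho < \rho_0$ means that $f_\Omega$ extends, across the interior of the disk, as an even function of $\zeta$; equivalently, the one-form $W_\Omega$ (the co-rotating analogue of $W$ in \eqref{stationary W def}) has a $\zeta$-component that vanishes on the disk in the relevant entries. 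The upshot, exactly as in the Dirichlet case but with the roles of the symmetries adjusted, is that the values of $\Phi_\Omega$ on the two sides $L$ and $R$ of the cut $[-iz, i\bar z]$ restricted to the segment $\Gamma$ are related by a $z$-independent matrix. Concretely I would show that $\Phi_\Omega^L(+i0,k)$ and $\Phi_\Omega^R(+i0,k)$ differ by conjugation by $\sigma_1$ together with complex conjugation, which is the analytic incarnation of ``$f_\Omega$ is even in $\zeta$ near $\rho \in (0,\rho_0)$.''

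Next I would feed Lemma \ref{Value Phi on zeta axis} into this. At $z = +i0$ the representation of $\Phi(i\zeta,k^\pm)$ in terms of $\mathcal{A}(k)$ and $\mathcal{B} = \begin{pmatrix}\ol{f(+i0)} & 1 \\ f(+i0) & -1\end{pmatrix}$ expresses $\Phi(+i0,k^+) = \mathcal{B}\,\mathcal{A}(k)$ and the companion formulas on the lower sheet. Applying $\Phi_\Omega = \Lambda\Phi$ gives $\Phi_\Omega(+i0,k^+) = \Lambda(+i0,k^+)\mathcal{B}\,\mathcal{A}(k)$. I then impose the evenness-in-$\zeta$ relation for $\Phi_\Omega$ derived in the previous step, being careful about the $L/R$ bookkeeping: since the cut $[-iz,i\bar z]$ degenerates onto $\Gamma$ as $z \to +i0$, the boundary values $\mathcal{A}_+(k)$ and $\mathcal{A}_-(k)$ of $\mathcal{A}$ across $\Gamma$ enter on opposite sides. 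Combining $\Phi_\Omega^R = (\text{conj.\ by }\sigma_1)\ol{\Phi_\Omega^L}$ with the two representations, and then forming the combination $\mathcal{A}\sigma_1\mathcal{A}^{-1}$ so as to cancel the $z$-dependent (and $\mathcal{B}$-dependent) prefactors where possible, should reorganize into precisely \eqref{jump of A sigma1 A inverse}, with the bracketed factor $\mathcal{B}^{-1}\Lambda^{-1}(+i0,k^+)\sigma_1\ol{\Lambda(+i0,k^+)}\,\ol{\mathcal{B}}\,\sigma_1$ appearing as the ``twist'' matrix and the overall minus sign coming from the interplay of $\sigma_1$, $\sigma_3$, and the sheet-swap symmetry \eqref{stationary symmetry of Phi from initial condition}.

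The main obstacle I anticipate is the careful tracking of the branch-cut orientations and the $L/R$ superscripts when $z = \rho + i0$, so that the contour $[-iz,i\bar z]$ lies infinitesimally to one side of $\Gamma$: one must be consistent about whether $\mathcal{A}_+$ sits on the $L$ or the $R$ side, and this determines on which side of \eqref{jump of A sigma1 A inverse} each factor lands and whether the sign is $+$ or $-$. A secondary technical point is verifying that $\Lambda(+i0,k^+)$ is the correct object to use — one must check that the $\Omega$-dependent conjugating matrix $\Lambda$ is continuous up to the disk from the relevant side and that its boundary value is captured by the formula $\Lambda(z,k)=(1+\Omega a)\ima-\Omega\rho e^{-2U}\sigma_3+i(k+iz)\Omega e^{-2U}(-\sigma_3+\lambda\sigma_1\sigma_3)$ evaluated at $z=+i0$. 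Once the orientation conventions are pinned down, the algebra is a direct, if slightly lengthy, manipulation of $2\times 2$ matrices using $\sigma_1\sigma_3 = -\sigma_3\sigma_1$ and the conjugation symmetry of $\mathcal{A}$ from Lemma \ref{Value Phi on zeta axis}; I would not expect any genuinely new difficulty there.
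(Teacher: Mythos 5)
Your proposal follows essentially the same route as the paper: the Neumann condition is inserted into the $\rho$-component of $W_\Omega$ restricted to the disk, the $\lambda$-symmetries yield $W_{1\Omega}^R(\rho+i0,k^+)=\sigma_1\overline{W_{1\Omega}^L(\rho+i0,k^+)}\sigma_1$, hence $\overline{\Phi_\Omega^L}$ and $\sigma_1\Phi_\Omega^R$ differ by a $\rho$-independent matrix $\mathcal{K}(k)$, and evaluating at $\rho=0$ via Lemma \ref{Value Phi on zeta axis} (with $\Phi_\Omega=\Lambda\Phi$ and $\Phi(+i0,k^+)=\mathcal{B}\mathcal{A}$) gives \eqref{jump of A sigma1 A inverse}. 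The only point you leave implicit is that the overall minus sign comes from the anticommutation $\sigma_1\mathcal{K}=-\mathcal{K}\sigma_1$, which the paper deduces from the first symmetry in \eqref{stationary symmetry of Phi from initial condition}, exactly as you anticipate.
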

\begin{proof}
We let $W_{\Omega}=W_{1\Omega}\diff\rho+W_{2\Omega}\diff\zeta$ denote the co-rotating analog of the one-form $W$ defined in \eqref{stationary W def}. The restriction of $W_{\Omega}$ to the upper side of the disk is given by $W_{1\Omega}(\rho + i0, k)\diff\rho$.
Evaluating the identity 
\[
\lambda(z,k^{+})=\frac{1}{\lambda(z,(-k+2\zeta)^{+})}
\]
at $z=\rho+i 0$, we find
\begin{equation}\label{left and right lambda}
\lambda^{R}(\rho+i 0,k^{+})=\frac{1}{\lambda^{L}(\rho+i 0,\bar{k}^{+})}, \qquad k \in \Gamma.
\end{equation}
Inserting  the Neumann condition for $f_\Omega$ into the expression (\ref{stationary W def}) for $W_{1\Omega}$ and using \eqref{left and right lambda} as well as the symmetry $\lambda(z,k^\pm) = 1/\overline{\lambda(z,\bar{k}^\pm)}$, we deduce that
$$W_{1\Omega}^R(\rho + i0, k^+) = \sigma_1 \overline{W_{1\Omega}^L(\rho + i0, k^+)} \sigma_1, \qquad \rho \in (0, \rho_0), \ k \in \Gamma.$$
Thus, by \eqref{stationary differential form}, there exists a $2\times 2$-matrix valued, $\rho$-independent function $\mathcal{K}(k)$ such that 
\begin{equation}\label{phi L phi R Q}
\ol{\Phi_{\Omega}^{L}(\rho+i 0, k^{+})}=\sigma_1\Phi_{\Omega}^{R}(\rho+i 0, k^{+})\mathcal{K}(k),\qquad k\in \Gamma.
\end{equation}
It follows from the first symmetry in \eqref{stationary symmetry of Phi from initial condition} that $\sigma_1 \mathcal{K} =-\mathcal{K}\sigma_1$ on $\Gamma$ (cf. \cite[Proposition 5.2]{lenells2010boundary}). The lemma then follows by evaluating \eqref{phi L phi R Q} at  $\rho=0$ and using Lemma \ref{Value Phi on zeta axis}.
\end{proof}

\subsection{The auxiliary RH problem}\label{auxiliary RH problem}
By combining the global relation (\ref{globalrelation}) with the relation of Lemma \ref{additionalrelationlemma}, we can formulate a RH problem for the $2\times 2$-matrix valued function $\mathcal{M}(k)$ by 
\begin{equation}\label{def M A sigma1 A}
\mathcal{M}(k) = \mathcal{A}(k)\sigma_1 \mathcal{A}^{-1}(k)=\begin{pmatrix}
 -G(k)& F(k) \\
\frac{1-G^{2}(k)}{F(k)} & G(k) \\
\end{pmatrix}.
\end{equation}

\begin{lemma}\label{auxiliarylemma}
Suppose $f$ is a solution of the BVP (\ref{boundary value problem}). Then the spectral functions $F(k)$ and $G(k)$ are given by 
\begin{align}\label{FGcalM}
F(k)=\mathcal{M}_{12}(k), \quad G(k)=\mathcal{M}_{22}(k),\quad k\in \C,
\end{align} 
where $\mathcal{M}$ is the unique solution of the following RH problem:
\begin{itemize}
\item $\mathcal{M}(k)$ is analytic for $k\in \C \backslash\Gamma$.
\item Across $\Gamma$, $\mathcal{M}(k)$ satisfies the jump condition
\begin{equation}\label{RH for S and M}
\mathcal{S}(k)\mathcal{M}_-(k)=-\mathcal{M}_+(k)\mathcal{S}(k), \quad k\in \Gamma,
\end{equation}
where $\mathcal{S}(k)$ is defined by 
\begin{equation}\label{expression S}
\mathcal{S}(k) =\begin{pmatrix}
 0& 1 \\
-1 & 4i k\Omega  \\
\end{pmatrix}, \quad k\in \C.
\end{equation}

\item $\mathcal{M}(k)$ has at most logarithmic singularities at the endpoints of $\Gamma$.

\item $\mathcal{M}(k)$ has the asymptotic behavior
\begin{align}
\label{Mm asymptotic as k goes to infty}
\mathcal{M}(k)=\sigma_1+O(k^{-1}), \quad k\to \infty.
\end{align}
\end{itemize}
Moveover,
\begin{equation}\label{trace MS vanish}
\tr{\mathcal{(MS)}=0},\quad k\in \hat{\C}.
\end{equation}
\end{lemma}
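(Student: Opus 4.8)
\textbf{Proof plan for Lemma \ref{auxiliarylemma}.}

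The plan is to assemble the RH problem for $\mathcal{M}$ out of the two relations already in hand, then verify uniqueness and the trace identity. First I would translate the global relation \eqref{globalrelation} and the additional relation \eqref{jump of A sigma1 A inverse} from statements about $\mathcal{A}_\pm$ into a single jump condition for $\mathcal{M}(k) = \mathcal{A}(k)\sigma_1\mathcal{A}^{-1}(k)$ across $\Gamma$. The key algebraic observation is that $\sigma_1\overline{\mathcal{A}_+(k)}\sigma_1\overline{\mathcal{A}_+^{-1}(k)}\sigma_1$ appearing in \eqref{globalrelation} equals $\sigma_1\overline{\mathcal{M}_+(k)}\sigma_1$ after conjugation, while $\mathcal{A}_-(k)\sigma_1\mathcal{A}_-^{-1}(k) = \mathcal{M}_-(k)$; combining with \eqref{jump of A sigma1 A inverse} should let me eliminate the conjugate $\overline{\mathcal{M}_+}$ and the $\Lambda$-factors in favor of a relation purely between $\mathcal{M}_+$ and $\mathcal{M}_-$. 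I expect the matrix $\mathcal{S}(k)$ in \eqref{expression S} to emerge precisely as the combination $\mathcal{B}^{-1}\Lambda^{-1}(+i0,k^+)\sigma_1\overline{\Lambda(+i0,k^+)}\overline{\mathcal{B}}\sigma_1$ (or a scalar multiple thereof), once one plugs in $\Lambda(+i0,k^+)$ from Section \ref{Lax pair formulation} with $\rho = 0$, uses $a_\Omega, \kappa_\Omega \to 0$ and $f(+i0)$, and simplifies using $\lambda(+i0,k^+)$. The entries $0, 1, -1, 4ik\Omega$ strongly suggest $\mathcal{S}$ is (up to normalization) $\Lambda^{-1}\sigma_1\bar\Lambda\sigma_1$ evaluated on the axis; getting the $4ik\Omega$ right is the one genuine computation here, driven by $w(k) = -2ik\Omega$.

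Next I would check the remaining three items of the RH problem. Analyticity of $\mathcal{M}$ in $\C\setminus\Gamma$ is immediate from Lemma \ref{Value Phi on zeta axis}: $F, G$ are analytic off $\Gamma$ and $F(k) = 1 + O(k^{-1})$ is nonvanishing near $\infty$, and one argues $F$ has no zeros in $\C\setminus\Gamma$ using the assumption $\det\Phi\neq 0$ in \eqref{ergopheresolitonassumption} together with the explicit form of $\Phi(i\zeta,\cdot)$. The logarithmic-singularity claim at the endpoints of $\Gamma$ follows from the corresponding property of $\mathcal{A}_\pm$, hence of $D(k)$, established in \cite{lenells2010boundary}; I would cite it. The asymptotics \eqref{Mm asymptotic as k goes to infty} follow directly by substituting \eqref{asymptotic of F and G as k to infty} into the explicit matrix \eqref{def M A sigma1 A}: the off-diagonal $(1,2)$-entry is $F(k) = 1 + O(k^{-1})$, the $(2,1)$-entry is $(1 - G^2)/F = 1 + O(k^{-1})$, and the diagonal is $\mp G(k) = O(k^{-1})$, giving $\sigma_1 + O(k^{-1})$.

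For uniqueness I would argue in the standard way: if $\mathcal{M}^{(1)}, \mathcal{M}^{(2)}$ both solve the problem, then rewriting the jump \eqref{RH for S and M} as $\mathcal{M}_+(k) = -\mathcal{S}(k)\mathcal{M}_-(k)\mathcal{S}(k)^{-1}$ shows $\det\mathcal{M}$ has no jump (since $\det\mathcal{S}(k) = 1$), extends across $\Gamma$ with at worst an integrable singularity at the two endpoints, and tends to $\det\sigma_1 = -1$ at $\infty$, hence $\det\mathcal{M} \equiv -1$; then $N := \mathcal{M}^{(1)}(\mathcal{M}^{(2)})^{-1}$ is analytic across $\Gamma$, bounded at the endpoints (log singularities being removable for a bounded analytic function), and $\to \ima$ at $\infty$, so $N \equiv \ima$ by Liouville. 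This is where I'd be most careful, since the jump here is of the unusual type $\mathcal{S}\mathcal{M}_- = -\mathcal{M}_+\mathcal{S}$ rather than a multiplicative $\mathcal{M}_+ = \mathcal{M}_- V$, so the cocycle/uniqueness bookkeeping needs the conjugated form; still, the argument goes through because $\mathcal{S}$ is invertible with unit determinant. Finally, the trace identity \eqref{trace MS vanish} is a direct computation from the explicit forms \eqref{def M A sigma1 A} and \eqref{expression S}: $\tr(\mathcal{M}\mathcal{S})$ expands to $-G\cdot 0 + F\cdot(-1) + \tfrac{1-G^2}{F}\cdot 1 + G\cdot 4ik\Omega$ — wait, one must pair rows of $\mathcal{M}$ with columns of $\mathcal{S}$ correctly — so I would simply multiply the two $2\times2$ matrices and read off that the diagonal entries of the product sum to zero, which it does identically in $k$ given the relation $w^2 = 1$ only at the branch points but as a polynomial identity it should cancel using the structure of $\mathcal{M}$; the cleanest route is to note $\tr(\mathcal{M}\mathcal{S}) = \tr(\mathcal{A}\sigma_1\mathcal{A}^{-1}\mathcal{S})$ and that $\sigma_1\mathcal{A}^{-1}\mathcal{S}\mathcal{A}$ is traceless because, by the jump relation rewritten on the axis, $\mathcal{S}$ anticommutes with $\mathcal{M}$ up to the sign in \eqref{RH for S and M}, forcing $\tr(\mathcal{M}\mathcal{S}) = -\tr(\mathcal{S}\mathcal{M}) = -\tr(\mathcal{M}\mathcal{S})$. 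The main obstacle is the bookkeeping in the first step — correctly collapsing \eqref{globalrelation} and \eqref{jump of A sigma1 A inverse} into \eqref{RH for S and M} and identifying $\mathcal{S}$ explicitly with the right scalar normalization — everything after that is routine.
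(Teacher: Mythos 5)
Your assembly of the jump condition, the analyticity statement, and the asymptotics follows essentially the paper's route: the paper likewise obtains \eqref{RH for S and M} by combining the global relation \eqref{globalrelation} with \eqref{jump of A sigma1 A inverse}, identifies $\mathcal{S}(k)=\mathcal{B}^{-1}\Lambda^{-1}(+i0,k^+)\sigma_1\overline{\Lambda(+i0,k^+)}\overline{\mathcal{B}}\sigma_1$ and simplifies it to \eqref{expression S} on $\Gamma$, uses $\det\Phi=-2e^{2U}F$ together with \eqref{ergopheresolitonassumption} to rule out zeros of $F$, and reads off \eqref{Mm asymptotic as k goes to infty} from \eqref{asymptotic of F and G as k to infty}. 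That part of your plan is sound.

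The genuine gap is the trace identity \eqref{trace MS vanish}, and neither of your two routes closes it. The direct computation gives $\tr(\mathcal{M}\mathcal{S})=-F+\frac{1-G^2}{F}+4ik\Omega G$, which is \emph{not} identically zero as a consequence of the form \eqref{def M A sigma1 A}; it is equivalent to the nontrivial constraint $F^2+G^2-4ik\Omega FG=1$, which is precisely the extra piece of information the lemma asserts (and which is used later, in the proof of Lemma \ref{FGexplicitlemma}, to show $\mathcal{N}_{11}=0$). Your ``cleanest route'' is also invalid: the relation \eqref{RH for S and M} connects the boundary values $\mathcal{M}_+$ and $\mathcal{M}_-$ from \emph{opposite sides} of $\Gamma$; it is not a pointwise anticommutation of $\mathcal{S}$ with $\mathcal{M}$, and the claimed chain $\tr(\mathcal{M}\mathcal{S})=-\tr(\mathcal{S}\mathcal{M})$ would contradict cyclicity of the trace. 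The paper proves \eqref{trace MS vanish} by a separate argument you are missing: the matrix $\mathcal{R}(\rho,k)=\Phi_\Omega^{-1}(\rho+i0,k^+)\,\sigma_1\,\Phi_\Omega(\rho-i0,k^+)$ is shown to be independent of $\rho$ (by the same mechanism that produced $\mathcal{K}(k)$ in Lemma \ref{additionalrelationlemma}), and comparing its values at $\rho=\rho_0$ and $\rho=0$ yields the identity. (What the jump relation alone does give is $g_-=-g_+$ on $\Gamma$ for the scalar $g=\tr(\mathcal{M}\mathcal{S})$, from which a Liouville argument on $g^2$ could also conclude $g\equiv 0$; but that is not the argument you wrote.) A secondary flaw: your uniqueness argument fails as stated, since for two solutions the ratio $N=\mathcal{M}^{(1)}(\mathcal{M}^{(2)})^{-1}$ satisfies $N_+=\mathcal{S}N_-\mathcal{S}^{-1}$ rather than $N_+=N_-$, so $N$ is not analytic across $\Gamma$; one should first diagonalize $\mathcal{S}$ and reduce to scalar RH problems, as is done in the proof of Lemma \ref{FGexplicitlemma}.
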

\begin{proof}
Since $\det \Phi(z,k) = -2e^{2U(z)}F(k)$ (see \cite[Eq. (2.65)]{meinel2008relativistic}), the assumptions (\ref{ergopheresolitonassumption}) imply that $F(k)$ never vanishes. 
The analyticity of  $\mathcal{M}(k)$  for $k\in \hat{\C}\backslash\Gamma$ then follows from the analyticity of $F(k)$ and $G(k)$ (see Lemma \ref{Value Phi on zeta axis}). 
Combining the global relation (\ref{globalrelation}) with relation \eqref{jump of A sigma1 A inverse}, we find the jump relation \eqref{RH for S and M} with $\mathcal{S}(k)$ given by
\begin{equation}\label{expression of S with k on Gamma}
\mathcal{S}(k)=\mathcal{B}^{-1}\Lambda^{-1}(+i 0,k^+)\sigma_1\ol{\Lambda(+i 0,k^+)}\overline{\mathcal{B}}\sigma_1, \quad k\in \Gamma.
\end{equation}
By inserting the definitions of $\mathcal{B}$ and $\Lambda$, we see that this expression simplifies to  \eqref{expression S} when $k\in \Gamma$.
The asymptotic behavior of $\mathcal{M}$ in \eqref{Mm asymptotic as k goes to infty} is a consequence of the asymptotics of $F$ and $G$ in \eqref{asymptotic of F and G as k to infty}. Define $\mathcal{R}(\rho, k)$ by
\begin{equation}\label{def mathcal R}
\mathcal{R} = \Phi^{-1}_{\Omega}(\rho+i 0, k^{+})\sigma_1 \Phi_{\Omega}(\rho-i 0, k^{+}).
\end{equation}
As in the proof of the existence of the matrix $\mathcal{K}(k)$ in \eqref{phi L phi R Q}, we deduce that $\mathcal{R} $ is independent of $\rho$. Then \eqref{trace MS vanish} follows by evaluating $\mathcal{R}$ at $\rho=\rho_{0}$ and $\rho=0$.
\end{proof}

\subsection{Solution of the auxiliary RH problem}
By solving the RH problem of Lemma \ref{auxiliarylemma}, we can obtain explicit expressions for $F(k)$ and $G(k)$.  

Let $d_1(k)$ denote the unique meromorphic function on $\Sigma'$ which is analytic except for a simple pole at $\infty^-$ with residue $4\Omega i$ and a simple zero at $\infty^+$, i.e.,
$$d_1(k) = 2\Omega i(k - \mu(k))$$
where $\mu(k) = \sqrt{k^2 + 1/(4\Omega^2)}$ is the square root defined in \eqref{degenerated RS}.
Define the scalar-valued function $E(k)$ on $\Sigma'$ by
\begin{align}\label{Edef}
E(k) = e^{\mu(k) \int_{\Gamma} \frac{h(s) }{\mu(s^+)} \frac{ds}{s - k}},
\end{align}
Then $E(k)$ is an analytic function of $k \in \Sigma' \setminus (\Gamma^+ \cup \Gamma^-)$ such that $(i)$ $d_1 E_- + d_1^{-1} E_+  = 0$ for $k \in \Gamma^+ \cup \Gamma^-$, $(ii)$ $E(k)$ has logarithmic singularities at the endpoints of $\Gamma^+$ and $\Gamma^-$, $(iii)$ $E(k^{+}) = \overline{E(\bar{k}^{+})}$, $(iv)$ $E(k^{+}) = E(k^-)^{-1}$, and $(v)$ $E(k^{+}) = 1 + O(k^{-1})$ as $k \to \infty$, where we have used the symmetry $h(k) = -h(-k)$ to see that $E(\infty^+) 
 = 1$.

\begin{lemma}\label{FGexplicitlemma}
The spectral functions $F(k)$ and $G(k)$ are given by
\begin{align}\label{FGexplicit}
\begin{cases}
F(k) 
= \frac{i}{4\Omega \mu(k^+)}\Big(\frac{d_1(k^+)}{E(k^+)} - \frac{E(k^+)}{d_1(k^+)}\Big),
	\vspace{1mm} \\ 
G(k)= \frac{i}{4\Omega\mu(k^+)}\Big( \frac{1}{E(k^+)}- E(k^+)\Big),
\end{cases} \quad
k \in \C \setminus \Gamma.
\end{align}
\end{lemma}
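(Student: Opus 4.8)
The goal is to solve the scalar-type Riemann--Hilbert problem of Lemma~\ref{auxiliarylemma} for the matrix $\mathcal{M}(k)$ and read off $F=\mathcal{M}_{12}$, $G=\mathcal{M}_{22}$. The first step is to diagonalize the jump. Since $\tr(\mathcal{M}\mathcal{S})=0$ and $\det \mathcal{M} = -1$ (from the second expression in \eqref{def M A sigma1 A}), the jump relation $\mathcal{S}\mathcal{M}_- = -\mathcal{M}_+\mathcal{S}$ on $\Gamma$ should be recast by finding the eigenstructure of $\mathcal{S}$. The matrix $\mathcal{S}(k) = \begin{pmatrix} 0 & 1 \\ -1 & 4ik\Omega \end{pmatrix}$ has characteristic polynomial $s^2 - 4ik\Omega s + 1 = 0$, whose roots are precisely $s_\pm = 2i\Omega(k \pm \mu(k))$ — i.e., $d_1(k^\pm)$ up to sign conventions, which is exactly why the function $d_1$ on $\Sigma'$ enters. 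So I would diagonalize $\mathcal{S}(k)$ over $\Sigma'$: its eigenvalues are single-valued on the two sheets, with eigenvectors depending algebraically on $\mu(k)$. Conjugating $\mathcal{M}$ by the eigenvector matrix turns the matrix RH problem into a pair of scalar RH problems on $\Sigma'$, one on each sheet, coupled by the sheet involution.

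\emph{Second step.} Each scalar problem has a jump of the form (boundary value of a scalar) $= -(\text{ratio of eigenvalues})\times(\text{other boundary value})$ across $\Gamma^\pm$. The ratio of eigenvalues of $\mathcal{S}$ across the two sides of $\Gamma$ is governed by $d_1^2$, so the scalar unknown — call it $\Psi(k)$ on $\Sigma'$ — must satisfy $d_1 \Psi_- + d_1^{-1}\Psi_+ = 0$ on $\Gamma^+\cup\Gamma^-$, together with the sheet symmetry $\Psi(k^+)\Psi(k^-) = $ const, logarithmic endpoint behavior, and normalization $\Psi = 1 + O(k^{-1})$ at infinity. But this is exactly the list of five properties $(i)$--$(v)$ that the function $E(k)$ in \eqref{Edef} was constructed to satisfy; the construction of $E$ via the Cauchy-type integral $\mu(k)\int_\Gamma \frac{h(s)}{\mu(s^+)}\frac{ds}{s-k}$ is the standard Plemelj solution of that scalar problem, with $h(k) = -\arcsin(2i\Omega k)/\pi$ appearing because $\log(d_1^2)$ restricted to $\Gamma$ produces $\arcsin(2i\Omega k)$ (note $d_1(k^+)/d_1(k^-)$ on $\Gamma$ has modulus one with phase $2\pi h(k)$). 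So I would verify that $E$ indeed has properties $(i)$--$(v)$ — most of this is asserted in the excerpt just before the lemma — and conclude that $E$ is the solution of the scalar problem, by the usual Liouville/uniqueness argument (a ratio of two solutions is entire, bounded near the endpoints by the logarithmic-growth-beats-no-pole estimate, hence constant, hence $1$ by normalization).

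\emph{Third step: reassembly.} Having $\Psi = E$ on $\Sigma'$, I would undo the conjugation to recover $\mathcal{M}(k)$ explicitly as a combination of the eigenvectors of $\mathcal{S}(k)$ weighted by $E(k^+)$ and $E(k^-) = E(k^+)^{-1}$. Extracting the $(1,2)$ and $(2,2)$ entries then gives $F$ and $G$ as explicit rational-in-$\mu(k^+)$ expressions in $d_1(k^+)$ and $E(k^+)$. One checks the result matches \eqref{FGexplicit}: the $\frac{i}{4\Omega\mu(k^+)}$ prefactor is the normalization of the eigenvector matrix (its determinant is $2\mu(k)\cdot 2i\Omega = 4i\Omega\mu(k)$), the combination $\frac{d_1}{E} - \frac{E}{d_1}$ in $F$ reflects that $F$ pairs the two eigenvalues antisymmetrically, and $\frac{1}{E} - E$ in $G$ is the residue of the corresponding off-diagonal term. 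Finally, I would confirm the stated analyticity ($F,G$ analytic on $\C\setminus\Gamma$, descending from $\Sigma'$ to $\hat{\C}$ because the apparent $\mu$-branch-cut along $C_{k_1}$ cancels), the reality properties $F(k)=\overline{F(\bar k)}$, $G(k) = -\overline{G(\bar k)}$ (from property $(iii)$ of $E$ and the reality of $d_1$ under $k\mapsto\bar k$ composed with the sheet swap), and the large-$k$ asymptotics $F = 1 + O(k^{-1})$, $G = O(k^{-1})$, which pin down that we have the correct (unique) solution and hence, via Lemma~\ref{auxiliarylemma}, the true spectral functions.

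\emph{Main obstacle.} The routine parts are the Plemelj verification and the algebra of diagonalization; the genuinely delicate point is the behavior at the endpoints $\pm i\rho_0$ of $\Gamma$. One must check that the conjugating eigenvector matrix of $\mathcal{S}$ does not introduce spurious singularities or zeros there (the eigenvalues $d_1(k^\pm)$ stay finite and nonzero on all of $\Gamma$ since $2\Omega\rho_0<1$ keeps $\mu(k^+)$ real and positive), and — more importantly — that the logarithmic singularities of $E$ at the endpoints translate into at-most-logarithmic, rather than power-law, singularities of $\mathcal{M}$, so that the uniqueness argument for the scalar problem genuinely applies. Relatedly, the integrand $h(s)/\mu(s^+)$ in the definition of $E$ must be checked to be integrable near $s = \pm i\rho_0$ and the resulting $E$ shown to have exactly the endpoint growth that the RH problem tolerates; this is where the specific form of $h$, vanishing appropriately or at least bounded near the endpoints, matters. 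I expect this endpoint analysis to be the technical heart of the proof, with everything else being bookkeeping around the algebraic structure of $\mathcal{S}$.
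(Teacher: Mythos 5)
Your proposal follows essentially the same route as the paper: diagonalize $\mathcal{S}=T\mathcal{D}T^{-1}$ over $\Sigma'$ with eigenvalues $d_1,d_1^{-1}$, pass to $\mathcal{N}=T^{-1}\mathcal{M}T$, use $\tr(\mathcal{S}\mathcal{M})=0$ to kill the diagonal of $\mathcal{N}$, solve the resulting scalar jump $d_1\mathcal{N}_{12-}+d_1^{-1}\mathcal{N}_{12+}=0$ by a Liouville argument involving $E$, and conjugate back. The only slips are cosmetic: the scalar unknown behaves like $4\Omega i k+O(1)$ at $\infty^-$, so it is identified with $d_1E$ rather than $E$ (your final reassembly nonetheless lands on the right combination $d_1/E-E/d_1$), and besides the endpoints of $\Gamma$ one must also rule out poles of $\mathcal{N}$ at the branch points $k_1,\bar k_1$ where $T^{-1}$ is singular, which the paper handles via the expression for $\mathcal{M}_{22}$ in terms of $\mathcal{N}_{12}$.
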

\begin{proof}
The matrix $\mathcal{S}$ in (\ref{RH for S and M}) can be diagonalized as
$\mathcal{S}(k) = T(k) \mathcal{D}(k) T(k)^{-1}$, where
\begin{align*}
& \mathcal{D}(k) = \begin{pmatrix} d_1(k) & 0 \\ 0 & d_2(k) \end{pmatrix},
\qquad T(k) = \begin{pmatrix} d_2(k) & d_1(k) \\ 1 & 1 \end{pmatrix},
\end{align*}
and $d_2(k) := d_1(k)^{-1} = 2\Omega i(k + \mu(k))$. 
We view $\mathcal{D}(k)$ and $\mathcal{T}(k)$ as functions on the Riemann surface $\Sigma^\prime$ and note that $|d_1| = 1$ on $\Gamma^\pm$. The inverse
$$T(k)^{-1} = \frac{1}{4\Omega i \mu(k)}\begin{pmatrix} 1 & -d_1(k) \\ -1 & d_2(k)\end{pmatrix}$$
is analytic on $\Sigma'$ except for simple poles at $k_1, \bar{k}_1$.
It follows from Lemma \ref{auxiliarylemma} that the $2\times 2$-matrix valued function $\mathcal{N}(k)$ defined by
\begin{align}\label{calNdef}
\mathcal{N}(k) = T(k)^{-1} \mathcal{M}(k) T(k)
\end{align}
is an analytic $2\times 2$-matrix valued function of $k \in \Sigma' \setminus (\Gamma^+ \cup \Gamma^- \cup k_1 \cup \bar{k}_1 \cup \infty^+ \cup \infty^-)$ such that $(i)$ $\mathcal{N}(k)$ satisfies the jump condition $\mathcal{D}\mathcal{N}_- + \mathcal{N}_+ \mathcal{D} = 0$ for $k \in \Gamma^+ \cup \Gamma^-$, $(ii)$ $\mathcal{N}(k)$ has at most simple poles at $k_1$ and $\bar{k}_1$, $(iii)$ $\tr \mathcal{N}(k) = 0$ and $\det \mathcal{N}(k) = -1$, and $(iv)$ $\mathcal{N}(k^+) = \sigma_1 \mathcal{N}(k^-) \sigma_1$, 
In terms of $\mathcal{N}$, the identity $\tr(\mathcal{S}\mathcal{M}) = 0$ in (\ref{trace MS vanish}) becomes
$$0 = \tr(\mathcal{S}\mathcal{M}) 
= \tr(\mathcal{D}\mathcal{N}) = (d_1(k) - d_2(k))\mathcal{N}_{11}(k),$$ 
showing that the $(11)$ entry $\mathcal{N}_{11}(k)$ of $\mathcal{N}(k)$ vanishes identically. Since $\tr \mathcal{N}(k) = 0$ and $\det \mathcal{N} = -1$, it follows that $\mathcal{N}_{22}(k) = 0$ and $\mathcal{N}_{12}\mathcal{N}_{21} = 1$.
The $(12)$ entry $\mathcal{N}_{12}$ obeys the jump condition
$$d_1 \mathcal{N}_{12-} + d_2 \mathcal{N}_{12+}  = 0, \qquad k \in \Gamma^+ \cup \Gamma^-,$$
the symmetry $\mathcal{N}_{12}(k^+) = \mathcal{N}_{12}(k^-)^{-1}$, and the asymptotics $\mathcal{N}_{12}(k^-) = 4\Omega i k + O(1)$ as $k \to \infty$.
It follows that the function $n(k)$ defined on $\Sigma'$ by $n(k) = \mathcal{N}_{12}(k)E(k)^{-1}$ has no jump across $\Gamma^+ \cup \Gamma^-$. 
Moreover, since
\begin{align}\label{M22N}
\mathcal{M}_{22} = \frac{\mathcal{N}_{12} + \mathcal{N}_{12}^{-1}}{2} 
+ \frac{k}{2\mu(k)}(\mathcal{N}_{12} - \mathcal{N}_{12}^{-1}),
\end{align}
is analytic at $k_1$ and $\bar{k}_1$, we deduce that $\mathcal{N}_{12}$ (and hence also $n(k)$) actually is analytic at $k_1$ and $\bar{k}_1$. At the endpoints $(\pm i\rho_0)^+$ and $(\pm i\rho_0)^-$ of $\Gamma^\pm$, $n(k)$ has isolated singularities which are at most logarithmic; hence $n(k)$ is analytic also at these points. Consequently, $n(k)$ is the unique meromorphic function on $\Sigma'$ with a simple pole with residue $4\Omega i$ at $\infty^-$ and a simple zero at $\infty^+$, that is, $n(k) = d_1(k)$. We conclude that 
$$\mathcal{N}(k) = \begin{pmatrix} 0 & d_1(k) E(k) \\
d_1(k)^{-1}E(k)^{-1} & 0 \end{pmatrix}.$$
The expressions (\ref{FGexplicit}) follow from (\ref{FGcalM}) and (\ref{calNdef}) by straightforward algebra.
\end{proof}

\subsection{The Ernst potential on the rotation axis}
Recalling that (see \cite[p. 46]{meinel2008relativistic})
\begin{align*}
& F(\zeta) = \frac{1}{\re f(i\zeta)}, \qquad G(\zeta) = \frac{i\im f(i\zeta)}{\re f(i\zeta)}, \qquad \zeta > 0,
	\\
& F(\zeta) = \frac{|f(i\zeta)|^2}{\re f(i\zeta)}, \qquad G(\zeta) = \frac{-i\im f(i\zeta)}{\re f(i\zeta)}, \qquad \zeta <  0,
\end{align*}
we find the following expression for the value of the Ernst potential on the rotation axis:
$$f(i\zeta) = 
\begin{cases}
\frac{1 + G(\zeta)}{F(\zeta)}
= \frac{1 + d_1(\zeta^+) E(\zeta^+)}{d_1(\zeta^+) + E(\zeta^+)}, \qquad \zeta > 0,
	\\
\frac{F(\zeta)}{1 + G(\zeta)}
= \frac{d_1(\zeta^+) + E(\zeta^+)}{1 + d_1(\zeta^+) E(\zeta^+)}, \qquad \zeta < 0.
\end{cases}
$$
Using that $E(\zeta^+) = e^{J'(\zeta)}$ and $d(\zeta) = d_1(\zeta^+)$ for $\zeta > 0$, where $J'$ and $d$ are the functions defined in (\ref{Jprimedef}), we arrive at the expression (\ref{axis value of f}) for $f(i \zeta)$. Since $J'(\zeta) \in \R$ and $d_1(\zeta) \in i\R_-$ for $\zeta > 0$, the expression (\ref{e2U axis value}) for $e^{2U(i \zeta)}$ follows by taking the real part of (\ref{axis value of f}).

\section{The scalar RH problem}\label{scalarRHsec}
Substitution of the expressions for $F(k)$ and $G(k)$ obtained in Lemma \ref{FGexplicitlemma} into (\ref{expression D}) gives an explicit expression for the jump matrix $D(k)$. Thus we have an effective solution of the BVP (\ref{boundary value problem}) in terms of the solution of the matrix RH problem (\ref{Phi jump across Gamma}). 
In what follows, we instead employ the main and auxiliary RH problems to formulate a scalar RH problem on the Riemann surface $\Sigma_z$. The solution of this scalar RH problem leads to the exact formulas of Theorem \ref{mainth1}.

\subsection{The functions $\mathcal{L}(z,k)$ and $\mathcal{Q}(z,k)$}\label{L and Q} 
Define $w(k)$ by
$$w(k) = -\frac{1}{2}\tr(\mathcal{S}(k)) = -2ik\Omega,\quad k\in \C.$$
Introduce the $2\times2$ matrix valued functions $\mathcal{L}$ and $\mathcal{Q}$ by
\begin{align}
\mathcal{L}(z,k)&=\Phi(z,k)\sigma_1\Phi^{-1}(z,k), \quad k\in \mathscr{S}_{z},
	\\ \label{calQdef}
\mathcal{Q}(z,k)&=-\Phi(z,k)\mathcal{A}^{-1}(k)\mathcal{S}(k) \mathcal{A}(k)\Phi^{-1}(z,k)-w(k)\ima, \quad k\in \mathscr{S}_{z}.
\end{align}

\begin{lemma}\label{properties of L and Q}
The functions $\mathcal{L}$ and $\mathcal{Q}$ have the following properties:
\begin{itemize}
\item $\mathcal{L}$ and $\mathcal{Q}$ satisfy the trace and determinant relations
\begin{equation}\label{trace determinant of L and Q}
\tr\mathcal{Q} =0, \quad \tr \mathcal{L}=0, \quad \det \mathcal{L} =-1,\quad \det\mathcal{Q}=1-w^{2},
\end{equation}
and the symmetries
\begin{equation}\label{qm km kp lm km kp}
\mathcal{Q}(z,k^-)=-\sigma_3\mathcal{Q}(z,k^+)\sigma_3,\quad \mathcal{L}(z,k^-)=\sigma_3\mathcal{L}(z,k^+)\sigma_3.
\end{equation}

\item $\mathcal{Q}$ can be rewritten as
\begin{equation}\label{another form of qm}
\mathcal{Q}(z,k)=\Phi(z,k)\sigma_1 \mathcal{A}^{-1}(k)\mathcal{S}(k) \mathcal{A}(k)\sigma_1\Phi^{-1}(z,k)+w(k)\ima.
\end{equation}

\item $\mathcal{Q}$ has no jump across $\Gamma^+ \cup \Gamma^-$, whereas $\mathcal{L}$ satisfies the jump conditions 
\begin{align}\label{QLjumps}
\begin{cases}
(\mathcal{Q}+w\ima)\mathcal{L}_{-} =-\mathcal{L}_{+}(\mathcal{Q}+w\ima), & k\in \Gamma^+,
	\\
(\mathcal{Q}-w\ima)\mathcal{L}_{-} =-\mathcal{L}_{+}(\mathcal{Q} -w\ima), & k\in \Gamma^-.
\end{cases}
\end{align}

\item $\mathcal{L}$ and $\mathcal{Q}$ anticommute, i.e., $\mathcal{Q}\mathcal{L}=-\mathcal{L}\mathcal{Q}$.

\item The function $\hat{\mathcal{L}}$ defined by
$$\hat{\mathcal{L}} =\mathcal{L}\begin{pmatrix}
1 & \mathcal{Q}_{11} \\
0 & \mathcal{Q}_{21}  \\
\end{pmatrix},$$
satisfies
\begin{equation}\label{Lhat22 L21 w Q21 multiply}
\hat{\mathcal{L}}_{22}^{2}-\mathcal{L}_{21}^{2}(w^{2}-1)=\mathcal{Q}^{2}_{21}.
\end{equation}

\item For each $z$, there exists a point $m_{1} \equiv m_{1}(z) \in \C$ such that 
\begin{equation}\label{q21 formula}
\mathcal{Q}_{21}(z,k)=\frac{4i \Omega f}{f+\bar{f}}(k-m_{1}).
\end{equation}
\end{itemize}
\end{lemma}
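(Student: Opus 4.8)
The plan is to verify each bullet of Lemma~\ref{properties of L and Q} by combining the definitions of $\mathcal{L}$ and $\mathcal{Q}$ with the structure of $\mathcal{A}(k)$, $\mathcal{S}(k)$, and the known symmetries of $\Phi$. I would establish the items roughly in the order they are listed, since later items lean on earlier ones.

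\textbf{Trace/determinant relations and symmetries.} Since $\mathcal{L} = \Phi \sigma_1 \Phi^{-1}$ is a conjugate of $\sigma_1$, one immediately has $\tr\mathcal{L} = \tr\sigma_1 = 0$ and $\det\mathcal{L} = \det\sigma_1 = -1$. For $\mathcal{Q}$, note $\mathcal{Q} + w\ima = -\Phi\mathcal{A}^{-1}\mathcal{S}\mathcal{A}\Phi^{-1}$ is a conjugate of $-\mathcal{S}$, so $\tr(\mathcal{Q}+w\ima) = -\tr\mathcal{S} = 2w$, giving $\tr\mathcal{Q}=0$; and $\det(\mathcal{Q}+w\ima) = \det(-\mathcal{S}) = \det\mathcal{S}$. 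Now $\det\mathcal{S} = 0\cdot 4ik\Omega - 1\cdot(-1) = 1$, while $w^2 = -4k^2\Omega^2$; combining with $\tr\mathcal{Q}=0$ and $\tr(\mathcal{Q}+w\ima)=2w$ one computes $\det\mathcal{Q} = \det(\mathcal{Q}+w\ima) - w\tr(\mathcal{Q}+w\ima) + w^2 = 1 - 2w^2 + w^2 = 1 - w^2$, as claimed. The symmetries (\ref{qm km kp lm km kp}) follow from the first symmetry $\Phi(z,k^+) = \sigma_3\Phi(z,k^-)\sigma_1$ in (\ref{stationary symmetry of Phi from initial condition}) together with the parity of $\mathcal{A}$, $\mathcal{S}$ under $k^+\leftrightarrow k^-$: since $F,G$ descend to $\hat{\C}$, $\mathcal{A}(k^+)=\mathcal{A}(k^-)$, and $\mathcal{S}(k)$ depends only on the base point $k$, so plugging the $\Phi$-symmetry into the definitions and using $\sigma_1\sigma_1 = \ima$, $\sigma_1\sigma_3 = -\sigma_3\sigma_1$ yields the stated conjugation-by-$\sigma_3$ rules (with the sign flip for $\mathcal{Q}$ coming from $\sigma_1 X \sigma_1$ versus $\sigma_3 X \sigma_3$ on off-diagonal parts).

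\textbf{Alternative form, jumps, anticommutation.} The identity (\ref{another form of qm}) is the statement $\sigma_1(\mathcal{A}^{-1}\mathcal{S}\mathcal{A})\sigma_1 = -(\mathcal{A}^{-1}\mathcal{S}\mathcal{A}) + 2w\,(\text{something trivial under conjugation})$; more precisely one checks by direct $2\times2$ computation, using $\mathcal{A} = \begin{pmatrix} F & 0 \\ G & 1\end{pmatrix}$ and (\ref{expression S}), that $\sigma_1 \mathcal{A}^{-1}\mathcal{S}\mathcal{A}\sigma_1 = 2w\ima - \mathcal{A}^{-1}\mathcal{S}\mathcal{A}$ (equivalently $\mathcal{A}^{-1}\mathcal{S}\mathcal{A} + \sigma_1\mathcal{A}^{-1}\mathcal{S}\mathcal{A}\sigma_1 = 2w\ima$, which is consistent with the trace being $2w$ and the matrix being "$\sigma_1$-anti-self-dual up to trace"); conjugating by $\Phi$ and rearranging produces (\ref{another form of qm}). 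That $\mathcal{Q}$ has no jump across $\Gamma^\pm$ follows because across $\Gamma^+$, $\Phi_- = \Phi_+ D$ with $D = \mathcal{A}_+^{-1}\mathcal{A}_-$ from (\ref{expression D}), and the global relation/additional relation force $\mathcal{A}_\pm^{-1}\mathcal{S}\mathcal{A}_\pm$ to transform compatibly; concretely $\mathcal{Q}_- + w\ima = -\Phi_-\mathcal{A}_-^{-1}\mathcal{S}\mathcal{A}_-\Phi_-^{-1} = -\Phi_+\mathcal{A}_+^{-1}(\mathcal{A}_+\mathcal{A}_-^{-1}\mathcal{S}\mathcal{A}_-\mathcal{A}_+^{-1})\mathcal{A}_+\Phi_+^{-1}$, and one shows the inner matrix equals $\mathcal{S}$ using $\mathcal{S}\mathcal{M}_- = -\mathcal{M}_+\mathcal{S}$ with $\mathcal{M} = \mathcal{A}\sigma_1\mathcal{A}^{-1}$ — though I would rather argue directly: $w(k)$ is single-valued and $\mathcal{A}_\pm^{-1}\mathcal{S}\mathcal{A}_\pm$, being (up to trace) conjugate to $\pm\mathcal{M}_\pm$, recombine with the jump (\ref{Phi jump across Gamma}) to leave $\mathcal{Q}$ continuous. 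The jump conditions (\ref{QLjumps}) for $\mathcal{L}$ then come from $\mathcal{L}_- = \Phi_-\sigma_1\Phi_-^{-1} = \Phi_+ D\sigma_1 D^{-1}\Phi_+^{-1}$ combined with the already-established relation between $D$, $\mathcal{S}$, and $\mathcal{Q}\pm w\ima$. Anticommutation $\mathcal{Q}\mathcal{L} = -\mathcal{L}\mathcal{Q}$ reduces, after conjugating out $\Phi$, to $(\mathcal{A}^{-1}\mathcal{S}\mathcal{A})\sigma_1 = -\sigma_1(\mathcal{A}^{-1}\mathcal{S}\mathcal{A})$ plus the $w\ima$ piece commuting with everything; since $\tr(\mathcal{A}^{-1}\mathcal{S}\mathcal{A}) = -2w$ is absorbed, and using (\ref{another form of qm}), one gets $\mathcal{Q}\sigma_1 = -\sigma_1\mathcal{Q}$ at the level of $\Phi^{-1}(\cdot)\Phi$, i.e. $\mathcal{Q}\mathcal{L} + \mathcal{L}\mathcal{Q} = \Phi[\text{anticommutator}]\Phi^{-1} = 0$.

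\textbf{The identity (\ref{Lhat22 L21 w Q21 multiply}) and the linear form (\ref{q21 formula}).} For $\hat{\mathcal{L}}$, I would compute its entries explicitly: writing $\mathcal{L} = \begin{pmatrix} \mathcal{L}_{11} & \mathcal{L}_{12} \\ \mathcal{L}_{21} & -\mathcal{L}_{11}\end{pmatrix}$ (trace zero) one finds $\hat{\mathcal{L}}_{22} = \mathcal{L}_{21}\mathcal{Q}_{11} - \mathcal{L}_{11}\mathcal{Q}_{21}$, and then $\hat{\mathcal{L}}_{22}^2 - \mathcal{L}_{21}^2(w^2-1) = \mathcal{L}_{21}^2\mathcal{Q}_{11}^2 - 2\mathcal{L}_{11}\mathcal{L}_{21}\mathcal{Q}_{11}\mathcal{Q}_{21} + \mathcal{L}_{11}^2\mathcal{Q}_{21}^2 + \mathcal{L}_{21}^2(1-w^2)$; invoking $\det\mathcal{Q} = \mathcal{Q}_{11}(-\mathcal{Q}_{11}) - \mathcal{Q}_{12}\mathcal{Q}_{21} = 1-w^2$ to replace $1-w^2$, together with $\det\mathcal{L} = -\mathcal{L}_{11}^2 - \mathcal{L}_{12}\mathcal{L}_{21} = -1$ and the anticommutation relation written out in entries (which gives $\mathcal{L}_{11}\mathcal{Q}_{21} = \mathcal{L}_{21}\mathcal{Q}_{11}$ among other relations — this is the key input), everything collapses to $\mathcal{Q}_{21}^2$. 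For (\ref{q21 formula}): from (\ref{calQdef}), $\mathcal{Q}_{21}(z,k) = -[\Phi\mathcal{A}^{-1}\mathcal{S}\mathcal{A}\Phi^{-1}]_{21}$, and since $\mathcal{S}(k)$ is \emph{linear} in $k$ while $\mathcal{A}$ and $\Phi$ at fixed $z$ are $k$-independent in the relevant leading sense, the $(21)$-entry is at worst affine in $k$; checking the asymptotics as $k\to\infty$ (where $\mathcal{A}\to\ima$, $\Phi$ has known behavior, and $\mathcal{S}\sim 4ik\Omega\,E_{22}$) pins the leading coefficient to $\frac{4i\Omega f}{f+\bar f}$, and defining $m_1(z)$ as the unique zero gives the stated factored form. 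The main obstacle I anticipate is the no-jump property of $\mathcal{Q}$ and the consequent jump relations (\ref{QLjumps}): these require carefully tracking how the jump matrix $D = \mathcal{A}_+^{-1}\mathcal{A}_-$ conjugates $\mathcal{S}$ via the RH relation $\mathcal{S}\mathcal{M}_- = -\mathcal{M}_+\mathcal{S}$ of Lemma~\ref{auxiliarylemma}, and getting the signs and the placement of the $\pm w\ima$ terms exactly right across $\Gamma^+$ versus $\Gamma^-$ (where $\mathcal{S}$ is effectively replaced by $\sigma_1\mathcal{S}\sigma_1$ per (\ref{Phi jump across Gamma})); the rest is bookkeeping with $2\times2$ matrices and the Pauli-matrix identities.
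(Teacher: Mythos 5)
Your overall route is the same as the paper's (algebraic manipulation of the definitions, the relation $\mathcal{S}\mathcal{M}_-=-\mathcal{M}_+\mathcal{S}$ from Lemma \ref{auxiliarylemma} for the jumps of $\mathcal{L}$, a direct entry computation for \eqref{Lhat22 L21 w Q21 multiply}, and asymptotics for \eqref{q21 formula}), but there is one central gap and two further faulty steps. The central gap: the identity you claim to verify ``by direct $2\times2$ computation using $\mathcal{A}$ and \eqref{expression S}'', namely $\sigma_1\mathcal{A}^{-1}\mathcal{S}\mathcal{A}\sigma_1=2w\ima-\mathcal{A}^{-1}\mathcal{S}\mathcal{A}$, is \emph{not} an algebraic consequence of the forms of $\mathcal{A}$ and $\mathcal{S}$. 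For any $2\times2$ matrix $X$ one has $X+\sigma_1X\sigma_1=(\tr X)\ima+(X_{12}+X_{21})\sigma_1$, so with $X=\mathcal{A}^{-1}\mathcal{S}\mathcal{A}$ the identity holds (with the correct sign $-2w\ima$, since $\tr\mathcal{S}=-2w$; your $+2w\ima$ already fails a trace check) if and only if $X_{12}+X_{21}=\tr(X\sigma_1)=\tr(\mathcal{S}\mathcal{M})=0$, i.e.\ if and only if $F^2+G^2-4ik\Omega FG=1$. That is exactly \eqref{trace MS vanish}, a nontrivial property of the spectral functions which the paper proves separately via the $\rho$-independence of $\mathcal{R}$ in \eqref{def mathcal R}; for generic $F,G$ it is false. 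Since \eqref{another form of qm}, the $\mathcal{Q}$-symmetry in \eqref{qm km kp lm km kp}, the absence of a jump of $\mathcal{Q}$ across $\Gamma^-$, and the anticommutation all rest on this identity, omitting \eqref{trace MS vanish} is a real missing ingredient, not bookkeeping.

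Two further points. The entry relation $\mathcal{L}_{11}\mathcal{Q}_{21}=\mathcal{L}_{21}\mathcal{Q}_{11}$ that you call the key input for \eqref{Lhat22 L21 w Q21 multiply} does not follow from anticommutation: with $\mathcal{L},\mathcal{Q}$ traceless, the $(21)$ entry of $\mathcal{Q}\mathcal{L}+\mathcal{L}\mathcal{Q}$ vanishes identically and carries no information (and if that relation held, $\hat{\mathcal{L}}_{22}$ would be zero). What you need is the diagonal relation $2\mathcal{Q}_{11}\mathcal{L}_{11}+\mathcal{Q}_{12}\mathcal{L}_{21}+\mathcal{L}_{12}\mathcal{Q}_{21}=0$, which together with $\det\mathcal{L}=-1$ and $\det\mathcal{Q}=1-w^2$ does yield \eqref{Lhat22 L21 w Q21 multiply}. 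Finally, your justification that $\mathcal{Q}_{21}$ is ``at worst affine in $k$'' because $\mathcal{S}$ is linear while $\mathcal{A}$ and $\Phi$ are ``$k$-independent in the relevant leading sense'' is not valid: $\mathcal{A}$ and $\Phi$ depend genuinely on $k$. The correct argument, which is the paper's, is that $\mathcal{Q}_{21}$ is an \emph{entire} function of $k$ (no jump across $\Gamma$, and the symmetry $\mathcal{Q}(z,k^-)=-\sigma_3\mathcal{Q}(z,k^+)\sigma_3$ leaves the $(21)$ entry unchanged across the cut $[-iz,i\bar z]$, so it descends to the $k$-plane) with linear growth at infinity; Liouville's theorem then forces it to be a degree-one polynomial whose leading coefficient is read off from \eqref{PhiAasymptotics}.
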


\begin{proof}
The properties in \eqref{trace determinant of L and Q} follow from the definitions of $\mathcal{L}$ and $\mathcal{Q}$ and  the fact that $\det(\mathcal{S} + w\ima) = 1 - w^2$. The symmetries in \eqref{qm km kp lm km kp} are a consequence of \eqref{stationary symmetry of Phi from initial condition} and (\ref{trace MS vanish}).
The alternative expression \eqref{another form of qm} for $\mathcal{Q}$ follows from (\ref{trace MS vanish}).
Equations \eqref{Phi jump across Gamma} and \eqref{expression D} imply that $\Phi \mathcal{A}^{-1}$ and $\Phi\sigma_1 \mathcal{A}^{-1}$ do not jump across $\Gamma^{+}$ and $\Gamma^{-}$, respectively. Thus, by (\ref{calQdef}) and (\ref{another form of qm}), $\mathcal{Q}$ does not jump across $\Gamma^+$ or $\Gamma^-$. Since, for $k \in \Gamma^+$,
\begin{align}
(\mathcal{Q}_{-}+w\mathbb{I})\mathcal{L}_{-}&=-\Phi_{-}\mathcal{A}^{-1}_{-}\mathcal{S}\mathcal{A}_- \sigma_1\Phi_{-}^{-1}=-\Phi_{-}\mathcal{A}^{-1}_{-}\mathcal{S}\mathcal{M}_{-} \mathcal{A}_{-}\Phi^{-1}_{-},
	\\
-\mathcal{L}_{+}(\mathcal{Q}_{+}+w\mathbb{I})&=\Phi_{+}\sigma_1 \mathcal{A}^{-1}_{+}\mathcal{S}\mathcal{A}_{+}\Phi_{+}^{-1}=\Phi_{+} \mathcal{A}^{-1}_{+}\mathcal{M}_{+}\mathcal{S}\mathcal{A}_{+}\Phi_{+}^{-1},
\end{align}
the jump of $\mathcal{L}$ across $\Gamma^{+}$ is a consequence of (\ref{Phi jump across Gamma}) and \eqref{RH for S and M}.
The  jump  of $\mathcal{L}$ across  $\Gamma^{-}$ then follows from (\ref{qm km kp lm km kp}).
Using \eqref{another form of qm}, it follows  from (\ref{trace MS vanish}) that $\mathcal{L}\mathcal{Q}=-\mathcal{Q}\mathcal{L}$ and then \eqref{Lhat22 L21 w Q21 multiply} follows by direct computation. Since $\mathcal{Q}_{21}$  is an entire function of $k\in \C$, the expression for $\mathcal{Q}_{21}$ in \eqref{q21 formula} follows from the asymptotic formulas
\begin{align}\label{PhiAasymptotics}
\Phi(z, k^+) = \begin{pmatrix} \overline{f(z)} & 1 \\ f(z) & -1 \end{pmatrix} + O(k^{-1}),\quad 
\mathcal{A}(k) = \mathbb{I} + O(k^{-1}), \quad k \to \infty,
\end{align}
and Liouville's theorem.
\end{proof}

\subsection{The Riemann surface $\hat{\mathscr{S}}_z$ and the function $H(z,k)$} 
Let $\hat{\mathscr{S}}_z$ denote the double covering of  $\mathscr{S}_z$ obtained by adding the cut $C_{k_1}$ from $k_{1}$ to $\overline{k}_1$ on both the upper and lower sheet of $\mathscr{S}_z$. Thus a point $(k,\pm\lambda,\pm\mu)$ on $\hat{\mathscr{S}}_z$ is specified by $k\in \hat{\C}$ together with a choice of signs of $\lambda$ and $\mu$ defined in (\ref{degenerated RS}).
We specify the sheets of $\hat{\mathscr{S}}_z$ by requiring that $\lambda \to 1 $ $(\lambda \to -1)$ as $k\to \infty$ on sheets 1 and 2 (sheets 3 and 4), and by requiring that $\mu \sim k$ ($\mu \sim -k$) as $k\to \infty$ on sheets 1 and 3 (sheets 2 and 4). As $k$ crosses $[-i z,i \bar{z}]$, $\lambda$ changes sign but $\mu$ does not. As $k$ crosses $[k_{1},\overline{k}_1]$, $\mu$ changes sign but $\lambda$ does not. We define the function $H(z,k)$ by
\begin{equation}\label{H on Shat}
H(z,k) =\frac{\hat{\mathcal{L}}_{22}-\mathcal{L}_{21}\sqrt{w^{2}-1}}{\hat{\mathcal{L}}_{22}+\mathcal{L}_{21}\sqrt{w^{2}-1}},\quad k\in \hat{\mathscr{S}}_z,
\end{equation}
where the branch of $\sqrt{w^{2}-1}$ is fixed by the requirement that $\sqrt{w^{2}-1}=2i \Omega k+O(1)$ as $k\to \infty$ on sheets 1 and 3 of $\hat{\mathscr{S}}_z$.  In view of  the symmetries \eqref{qm km kp lm km kp} of $\mathcal{L}$ and $\mathcal{Q}$, we have
\begin{equation}
\hat{\mathcal{L}}_{22}(k,\lambda,\mu)=\hat{\mathcal{L}}_{22}(k,-\lambda,\mu), \quad \mathcal{L}_{21}(k,-\lambda,\mu)=-\mathcal{L}_{21}(k,\lambda,\mu),
   \end{equation}  
and therefore
\begin{equation}\label{H k la mu symmetry for -la and -mu}
H(k,\lambda,\mu)=\frac{1}{H(k,-\lambda,\mu)}=\frac{1}{H(k,\lambda,-\mu)}.
\end{equation}
It follows that $H(z,k)$ can be viewed as a single-valued function on $\Sigma_{z}$. On the upper sheet $\Sigma_{z}^+$, $H(z,k)$ is given by the values of $H(k,\lambda,\mu)$  on sheet 1 or sheet 4, while on the lower sheet $\Sigma_{z}^-$, $H(z,k)$ is given by the inverse of those values. 

For simplicity, we assume in what follows that $z$ is such that $m_1(z)$ does not lie on $\Gamma$ or on one of the branch cuts (this is the generic case; in the end, the solution can be extended to these values of $z$ by continuity).
It then follows from (\ref{q21 formula}) and (\ref{H on Shat}) that the zeros and poles of $H(z,k)$ on $\Sigma_z$ belong to the set $\{m_1^+, m_1^-\} \subset \Sigma_z$, and that either: $(i)$ $m_1^+$ is a double zero and $m_1^-$ is a double pole of $H$ or $(ii)$ $m_1^+$ is a double pole and $m_1^-$ is a double zero of $H$. Indeed, the only other possibility is that the numerator and the denominator in (\ref{H on Shat}) both have simple zeros at $m_1^+$; but then $\mathcal{L}_{21} = 0$ at $m_1^+$ and since $\mathcal{Q}_{21}$ also vanishes at $m_1$, the condition $\mathcal{Q}\mathcal{L} + \mathcal{L}\mathcal{Q} = 0$ implies that $\mathcal{L}_{11} \mathcal{Q}_{11} = 0$ at $m_1^+$, which contradicts the fact that $\det \mathcal{Q}$ and $\det \mathcal{L}$ are nonzero at $m_1^+$.
For definiteness, we will henceforth assume that case $(i)$ applies; the arguments are very similar and the final answer is the same when $(ii)$ applies.

\subsection{The scalar RH problem}
Define the scalar-valued function $\psi(z,k)$  by
\begin{equation}\label{psi def}
\psi(z,k) = \frac{\log H(z,k)}{y}, \quad k\in \Sigma_{z},
\end{equation}
where $y$ is defined in \eqref{parameterization of Sigmaz}. We view $\log H$ as a single-valued function on $\Sigma_z$ by  introducing a cut $[m_1,k_1]^- \cup [k_1,m_1]^+$ from the double pole $m_1^-$ to the double zero $m_1^+$ (see Figure \ref{from the pole m1 minus to the zero m1 plus}).
Letting $(\log H)_{+}$  and $(\log H)_{-}$ denote the boundary values of $\log H$ on the right and left sides of $[m_1,k_1]^- \cup [k_1,m_1]^+$, we have (see Figure \ref{from the pole m1 minus to the zero m1 plus})
\begin{equation}\label{logH jump across k1 mop}
(\log H)_{-}(k)=(\log H)_{+}(k)+4\pi i, \quad k\in [m_1,k_1]^- \cup [k_1,m_1]^+.
\end{equation}
Long computations using (\ref{PhiAasymptotics}) show that 
 \begin{equation}
 H(z,k^+) = f^{2}(z)+O(k^{-1}), \quad k\to \infty. 
 \end{equation}
Hence we may fix the overall branch of $\log H$ by requiring that
\begin{equation}\label{logHasymptotics}
 \log H(z,k^+) = 2\log f(z) + O(k^{-1}) \quad \mathrm{as}\quad k\to \infty,
 \end{equation} 
where the branch of $\log f$ is fixed so that $\log f(z) \to 0$ as $z \to \infty$. Then
\begin{equation}\label{well define of psi on Sgz}
\begin{split}
\log H(z,k^{+})&=-\log H(z,k^{-}),\quad y(z,k^{+})=-y(z,k^{-}),
\end{split}
\end{equation}
which shows that $\psi(z,k^+) = \psi(z,k^-)$. Thus $\psi$ descends to a function of $k \in \C$ and we can formulate a scalar RH problem for $\psi(z,k)$ as follows.

\begin{figure}[ht!]
 \centering
  \includegraphics[width=0.8\linewidth]{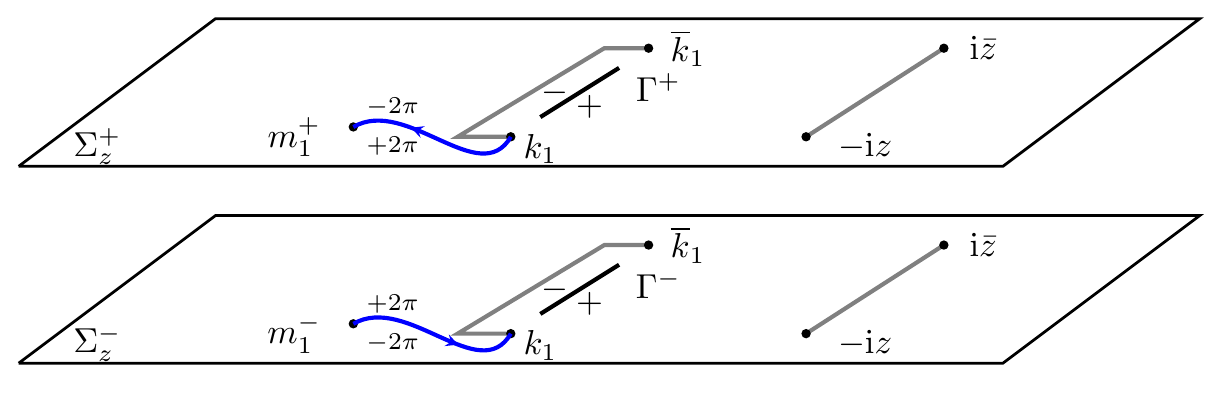}
\caption{\small{The oriented cut $[m_1,k_1]^- \cup [k_1,m_1]^+$ from the double pole $m_1^-$ on the lower sheet to the double zero  $m_1^+$ on the upper sheet of $\Sigma_z$. }}\label{from the pole m1 minus to the zero m1 plus}
\end{figure}

\begin{prop}\label{scalar RH problem}
The complex-valued function $\psi(z,\cdot)$ has the following properties: 
\begin{itemize}
\item $\psi(z,k)$ is analytic for $k\in \hat{\C}\backslash(\Gamma \cup[k_{1},m_{1}])$.
\item Across $\Gamma$, $\psi(z,k)$ satisfies the jump relation
\begin{align}\label{psijump}
\psi_{-}(z,k)=\psi_{+}(z,k)+\frac{2}{y(z,k^+)}\log \left(\frac{\sqrt{w^{2}-1}-w}{\sqrt{w^{2}-1}+w}(k^+)\right), \quad k\in \Gamma.
\end{align}

\item Across the oriented straight-line segment $[k_{1},m_{1}]$, $\psi(z,k)$ satisfies the jump relation
\begin{equation}\label{psijumpk1m1}
\psi_{-}(z,k)=\psi_{+}(z,k)+\frac{4\pi i}{y(z,k^+)}, \quad k\in [k_{1},m_{1}],
\end{equation}
where  $k^+$ is the point in the upper sheet of $\Sigma_z$ which projects onto $k$,  and $\psi_{-}(z,k)$ and $\psi_{+}(z,k)$ denote the values of $\psi$ on the left and right sides of $[k_{1},m_{1}]$.
\item As $k\to m_{1}$, 
\begin{equation}\label{k close to m1}
\psi(z,k)\to \frac{2}{y(z,k^+)}\log(k-m_{1}).
\end{equation}

\item As $k\to \infty$,
\begin{equation}\label{psi at infinity}
\psi(z,k) = \frac{2\log f}{k^{2}}+O(k^{-3}).
\end{equation}
\item As $k\to k_{1}$, 
\begin{equation}\label{k close to k1}
\psi(z,k) \to \frac{2\pi i}{y},
\end{equation}
where  $y=y(z,k^+)$  for $k^+$ just to the left of the cut $[k_1,m_1^{+}]$ on $\Sigma_z^{+}$, and is analytically continued around the endpoint $k_1$ so that it equals $y(z,k^-)$ when $k^-$ lies just to the left of the cut $[m_1^{-},k_1]$ on $\Sigma_z^{-}$.
\end{itemize} 
\end{prop}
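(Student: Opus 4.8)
The plan is to derive each of the six stated properties of $\psi(z,k)$ directly from the corresponding property of $H(z,k)$ established in the preceding subsections, transporting everything through the definition $\psi = (\log H)/y$. I would organize the argument around two facts already in hand: first, that $H$ descends to a single-valued function on $\Sigma_z$ whose only zeros/poles are a double zero at $m_1^+$ and a double pole at $m_1^-$ (case $(i)$), with the cut $[m_1,k_1]^-\cup[k_1,m_1]^+$ rendering $\log H$ single-valued on $\Sigma_z$ up to the additive jump $4\pi i$ in \eqref{logH jump across k1 mop}; and second, that $y(z,k^+)=-y(z,k^-)$ together with $\log H(z,k^+)=-\log H(z,k^-)$ from \eqref{well define of psi on Sgz}, so that $\psi(z,k^+)=\psi(z,k^-)$ and $\psi$ descends to a function on $\hat{\C}$. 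Granting these, analyticity of $\psi$ on $\hat{\C}\setminus(\Gamma\cup[k_1,m_1])$ is immediate away from the branch points of $y$, and one checks at $k_1$, $\bar k_1$, $-iz$, $i\bar z$ that the apparent $1/y$ singularity of $\psi$ is cancelled because $\log H$, being single-valued on $\Sigma_z$ and regular there, vanishes to order $1/2$ in the local uniformizing parameter $\sqrt{k-k_1}$, etc.; the one exception is $k_1$, which lies at the end of the cut $[k_1,m_1]$, and where the behavior \eqref{k close to k1} must be tracked.

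For the jump across $\Gamma$, I would start from the jump condition \eqref{QLjumps} for $\mathcal{L}$ on $\Gamma^+$, namely $(\mathcal{Q}+w\ima)\mathcal{L}_-=-\mathcal{L}_+(\mathcal{Q}+w\ima)$, and use the anticommutation $\mathcal{Q}\mathcal{L}=-\mathcal{L}\mathcal{Q}$ together with the definition $\hat{\mathcal{L}}=\mathcal{L}\bigl(\begin{smallmatrix}1 & \mathcal{Q}_{11}\\ 0 & \mathcal{Q}_{21}\end{smallmatrix}\bigr)$ and the algebraic identity \eqref{Lhat22 L21 w Q21 multiply} to express the jump of the ratio $(\hat{\mathcal{L}}_{22}-\mathcal{L}_{21}\sqrt{w^2-1})/(\hat{\mathcal{L}}_{22}+\mathcal{L}_{21}\sqrt{w^2-1})$ across $\Gamma^+$; since $\mathcal{Q}$ does not jump across $\Gamma^\pm$ while $\mathcal{L}$ does, and $\sqrt{w^2-1}$ does not jump across $\Gamma$ (its branch points are $k_1,\bar k_1\notin\Gamma$), the computation should show that $H_-(z,k^+)=H_+(z,k^+)\cdot\bigl(\tfrac{\sqrt{w^2-1}-w}{\sqrt{w^2-1}+w}\bigr)^2$ on $\Gamma^+$, and then dividing $\log$ of this by $y(z,k^+)$ gives precisely \eqref{psijump}. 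The jump \eqref{psijumpk1m1} across $[k_1,m_1]$ is a direct consequence of \eqref{logH jump across k1 mop} divided by $y(z,k^+)$. The asymptotics \eqref{psi at infinity} follow by combining \eqref{logHasymptotics} with the expansion of $1/y$ near $k=\infty$: from \eqref{parameterization of Sigmaz}, $y(z,k^+)=k^2+O(k)$, so $1/y=k^{-2}+O(k^{-3})$, and multiplying by $\log H = 2\log f + O(k^{-1})$ gives $\psi = 2(\log f)\,k^{-2}+O(k^{-3})$ as claimed. The local behavior \eqref{k close to m1} near $m_1$ is read off from the double zero of $H$ at $m_1^+$ (so $\log H \sim 2\log(k-m_1)$ in terms of the local parameter, $y$ being regular and nonzero at $m_1^+$ under the genericity assumption on $m_1$), and \eqref{k close to k1} follows from \eqref{logH jump across k1 mop} — the value $4\pi i$ of the jump, split symmetrically, or more precisely tracked via analytic continuation of $y$ around $k_1$ as described in the statement — giving the stated limit $2\pi i/y$.

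The main obstacle I anticipate is the jump computation on $\Gamma$: one must carefully keep track of which sheet of $\hat{\mathscr{S}}_z$ each boundary value of $\mathcal{L}$ and $\hat{\mathcal{L}}_{22}$ lives on, how $\sqrt{w^2-1}$ is branched relative to $y$, and the interplay between the $\Gamma^+$ jump \eqref{QLjumps} for $\mathcal{L}$ and the symmetry \eqref{qm km kp lm km kp} that relates $\Gamma^+$ and $\Gamma^-$ data, so as to land on a clean scalar jump for $H$ and hence for $\psi$; the parenthetical phrase ``long computations'' in the text for $H(z,k^+)=f^2(z)+O(k^{-1})$ suggests this bookkeeping is indeed the delicate part. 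A secondary subtlety is verifying that $\psi$ genuinely has no jump or singularity at the branch points $\bar k_1$, $-iz$, $i\bar z$ — this requires knowing that $\log H$ is regular (not merely single-valued) there, which in turn uses that $H$ has neither a zero nor a pole at those points; this follows from the genericity assumption $m_1\notin\Gamma\cup(\text{branch cuts})$ together with the analysis showing the zeros/poles of $H$ are confined to $\{m_1^+,m_1^-\}$. Once these points are dispatched, the six bullet points follow by assembling the pieces above.
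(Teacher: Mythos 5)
Your proposal is correct and follows essentially the same route as the paper: the jump across $\Gamma$ is obtained from the $\mathcal{L}$-jump \eqref{QLjumps} together with the algebraic relations among $\mathcal{L}$, $\hat{\mathcal{L}}_{22}$, and $\mathcal{Q}$ (the paper records the resulting linear transformation of $(\mathcal{L}_{21},\hat{\mathcal{L}}_{22})$ explicitly and then reads off $H_-=H_+\bigl(\tfrac{\sqrt{w^2-1}-w}{\sqrt{w^2-1}+w}\bigr)^2$), while the remaining bullets follow from \eqref{logH jump across k1 mop}, the double zero of $H$ at $m_1^+$, \eqref{logHasymptotics}, and the symmetry \eqref{H k la mu symmetry for -la and -mu} exactly as you describe. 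Your additional remarks on regularity at the branch points $\bar{k}_1$, $-iz$, $i\bar{z}$ correctly fill in details the paper leaves implicit.
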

\begin{proof}
The analyticity of $\psi$ for $k\in \hat{\C}\backslash (\Gamma \cup [k_{1},m_{1}])$ follows from the  definitions of $y(z,k)$ and $H(z,k)$.
It follows from \eqref{trace determinant of L and Q}, (\ref{QLjumps}), the fact that $\tr(\mathcal{Q}\mathcal{L}) = 0$, and tedious calculations that 
\begin{equation*}
\mathcal{L}_{21+}=(1-2w^{2})\mathcal{L}_{21-}+2w\hat{\mathcal{L}}_{22-},\quad \hat{\mathcal{L}}_{22+}=2w(w^{2}-1)\mathcal{L}_{21-}+(1-2w^{2})\hat{\mathcal{L}}_{22-},
\end{equation*}
for $k\in \Gamma^{+}$.
Hence the jump of $H(z,k)$ across $\Gamma^+$ is given by 
\begin{equation}\label{H jump on Gamma minus}
H_{-}(z,k)=H_{+}(z,k)\left(\frac{\sqrt{w^{2}-1}-w}{\sqrt{w^{2}-1}+w}\right)^2, \quad k\in \Gamma^+,
\end{equation}
which gives (\ref{psijump}). The jump (\ref{psijumpk1m1}) follows from (\ref{logH jump across k1 mop}).
The asymptotic behavior \eqref{k close to m1} is a consequence of the fact $H(z,k)$ has a double zero at $k=m_{1}^{+}$. 
Equation \eqref{psi at infinity} follows from (\ref{logHasymptotics}) and the fact that $y(z,k^+) = k^2 +O(k)$ as $k\to \infty$.
The behavior of $\psi$ as $k\to k_1$ follows from \eqref{H k la mu symmetry for -la and -mu} and (\ref{logH jump across k1 mop}).
\end{proof}

\subsection{Solution of the scalar RH problem}\label{solve scalar RH problem}
Using the Sokhotski-Plemelj formula, we find that the solution of the scalar RH problem presented in Proposition \ref{scalar RH problem} is given by
\begin{align}\nonumber
\psi(z,k)=&\; \frac{1}{\pi i}\int_{\Gamma}\frac{\diff k^{\prime}}{y(z,k^{+\prime})(k^{\prime}-k)}\log\left(\frac{\sqrt{w^{2}-1}-w}{\sqrt{w^{2}-1}+w}(k^{+\prime})\right)
	\\\label{solution of auxilary RH}
& +2\int_{[k_{1},m_{1}]}\frac{\diff k^{\prime}}{y(z,k^{\prime+})(k^{\prime}-k)}.
\end{align}
This can be rewritten in terms of contour integrals on $\Sigma_{z}$ as 
\begin{align}\label{divisor form solution of auxilary RH}
\psi(z,k)&= 2\int_{\Gamma^{+}}\frac{h(k^\prime) \diff k^{\prime}}{y(z,k^{\prime})(k^{\prime}-k)}
-2\int_{k_{1}}^{' m^{-}_{1}}\frac{\diff k^{\prime}}{y(z,k^{\prime})(k^{\prime}-k)},
\end{align}
where $h(k)$ is given by
$$h(k) = \frac{1}{2\pi i}\log\left(\frac{\sqrt{w^{2}(k)-1}-w(k)}{\sqrt{w^{2}(k)-1}+w(k)}\right), \qquad k \in \Gamma^+,$$
and the prime on the integral from $k_1$ to $m_1^-$ indicates that the path of integration does not necessarily lie in the complement of the cut basis $\{a, b\}$.
Since $w(k) = -2ik\Omega$, $h(k)$ can be written as in (\ref{def h}).
Letting $k\to \infty$ in (\ref{divisor form solution of auxilary RH}) and recalling \eqref{psi at infinity}, we obtain 
\begin{subequations}\label{logfboth}
\begin{align}
\log f&=\int_{k_1}^{'m^{-}_{1}}\frac{k\diff k}{y}-\int_{\Gamma^{+}}h(k)\frac{k\diff k}{y},\label{high order integral}
	\\ 
\int_{k_1}^{'m^{-}_{1}}\frac{\diff k}{y}&=\int_{\Gamma^{+}}h(k)\frac{\diff k}{y}\label{lower order integral}.
\end{align}
\end{subequations}

\section{Theta functions}\label{thetasec}
In this section, we derive the expressions for $f$, $e^{2U}$, and $a$ given in \eqref{Ernst potential for f} and  \eqref{e2U and a}.

We define the map $\phi:\Sigma_{z}\to \C$ by 
\[
\phi(k) \equiv \phi(z,k) = \int_{-i z}^{k}\omega, \quad k\in \Sigma_{z},
\]
where it is assumed that the integration contour lies within the fundamental polygon determined by $\{a,b\}$. 
The Jacobian variety $\text{Jac}(\Sigma_{z})$ of $\Sigma_z$ is defined by $\text{Jac}(\Sigma_{z}) = \C/\mathbb{L}$, where $\mathbb{L}$ denotes the discrete lattice generated by $1$ and $B$. The composition of $\phi$ with the projection $\C\to \text{Jac}(\Sigma_{z})$ is the Abel map with base point $-i z$, and $\phi(k_1)\in \C$ projects to the vector of Riemann constants in $\text{Jac}(\Sigma_{z})$, see  \cite[Chap. VII]{farkas1992riemann}.

\subsection{Proof of expression \eqref{Ernst potential for f} for $f$}
We first suppose that the contour from $k_1$ to $m_1^-$ in (\ref{logfboth}) lies in the fundamental polygon determined by $\{a, b\}$. Utilizing \eqref{def U and I}, \eqref{omega A eta}, and \eqref{Abelian differential third Sigmaz} in \eqref{logfboth}, we  deduce that
\begin{equation}\label{u ms ks}
u=\int_{k_1}^{m_1^-} \omega,\quad
f=e^{-\int_{k_1}^{m_1^-} \omega_{\infty^{+}\infty^{-}}+I}.
\end{equation}
 In view of the properties of zero divisors of theta functions (see \cite[Thm. VI.3.1]{farkas1992riemann}) and the following symmetry properties for Abelian differentials (see \cite[Chap. III]{farkas1992riemann})
\begin{equation}\label{identities for abel differentials}
\int_{b}\omega_{PQ}=2\pi i \int_{Q}^{P}\omega,\quad
\int^{P}_{Q}\omega_{\infty^{+}\infty^{-}}=\int^{\infty^{+}}_{\infty^{-}}\omega_{PQ}
, \quad P,Q\in \Sigma_{z},
\end{equation}
we obtain
\begin{equation}\label{eqiv 1 and 2 at P-inftyplus}
e^{- \int_{k_1}^{m_1^-} \omega_{\infty^{+}\infty^{-}}}=\frac{\Theta\left(\phi(\infty^{+})\right)\Theta\left(\phi(\infty^{-})-u\right)}{\Theta\left(\phi(\infty^{+})-u\right)\Theta\left(\phi(\infty^{-})\right)}.
\end{equation}
Since $\phi(\infty^{+})=-\phi(\infty^{-})$ and the theta function $\Theta$ is even, we can rewrite \eqref{eqiv 1 and 2 at P-inftyplus} as
\begin{equation*}
e^{- \int_{k_1}^{m_1^-}\omega_{\infty^{+}\infty^{-}}}=\frac{\Theta\left(u-\phi(\infty^{-})\right)}{\Theta\left(u+\phi(\infty^{-})\right)},
\end{equation*}
which together with \eqref{u ms ks} leads to \eqref{Ernst potential for f}.
It is easy to see that the answer remains if invariant if the contour from $k_1$ to $m_1^-$ is replaced by a contour which does not lie in the fundamental polygon determined by $\{a, b\}$, cf. \cite{lenells2011boundary}.

\subsection{Proof of expressions \eqref{e2U and a} for $e^{2U}$ and $a$}
Since $u\in i \R$ and $I\in \R$, the expression for $e^{2U}$ in \eqref{e2U and a} can be derived as in \cite{lenells2011boundary}. 
With the expression for $e^{2U}$ at hand, the expression for $a$ in  \eqref{e2U and a} can be obtained by following the argument in \cite[Sect. V]{klein1998physically}.

\section{The metric function $e^{2\kappa}$}\label{kappasec}
One useful tool in the study of the Ernst equation (\ref{Ernst equation}) is the use of branch point condensation arguments \cite{korotkin2000theta}. In this section, we apply such arguments to derive the expression \eqref{e2kappa} for $e^{2\kappa}$. An analogous derivation was considered in \cite{lenells2011boundary}. However, in contrast to the situation in \cite{lenells2011boundary}, our function $h(k)$ which determines the jump of the scalar RH problem for $\psi(z,k)$ does not vanish at the endpoints of $\Gamma$.
This means that the condensation argument has to be modified. 

\subsection{Proof of expression \eqref{e2kappa} for $e^{2\kappa}$}
Given an integer $g\geq 2$ and $g-1$ branch cuts $\{[E_{j},F_{j}]\}_{j=2}^{g}$, let $\hat{\Sigma}_{z}$ be the Riemann surface of genus $g$ defined by the equation
\[
\hat{y}^2 = (k-\xi)(k-\bar{\xi})(k-k_1)(k - \bar{k}_1)\prod_{j=2}^{g}(k-E_j)(k-F_j),
\]
where $\xi := -i z$. We define a cut basis $\{\hat{a}_j,\hat{b}_j\}_{j=1}^{g}$ on $\hat{\Sigma}_{z}$ as follows: $\hat{a}_1$ surrounds the cut $C_{k_1}$ and, for $j = 2, \dots, g$, $\hat{a}_j$ surrounds the cut $[E_{j}, F_{j}]$ in the counterclockwise direction; $\hat{b}_j$ enters the upper sheet on the right side of $[-i z,i \bar{z}]$ and exits again on the right side $C_{k_1}$ for $j = 1$ and on the right side of $[E_{j},F_{j}]$ for $j = 2, \dots, g$. Then $\{\hat{a}_j,\hat{b}_j\}_{j=1}^{g}$ is a natural generalization of the basis $\{a,b\}$ on $\Sigma_z$.
Let $\hat{\omega}=\{\hat{\omega}_1, ..., \hat{\omega}_g\}^{T}$ denote the canonical dual basis and let $\hat{\Theta}(\hat{\omega}) \equiv \Theta(\hat{\omega}|\hat{B})$ be the associated  theta function. Let $p,q\in \C^{q}$ be $z$-independent vectors which satisfy $\hat{B} p+q\in \R^{g}$. The theta function with characteristics $p,q\in \R^{g}$ is defined for $\hat{v} \in \C^g$ by 
\[
\hat{\Theta}\begin{bmatrix}
p\\q
\end{bmatrix}(\hat{v}) = \hat{\Theta}(\hat{v}+\hat{B}p+q)\exp\lp 2\pi i \left(\frac{1}{2}p^{T}\hat{B}p+p^{T}(\hat{v}+q)\right)\rp.
\]
Then the function $\hat{f}(z)$ defined by
\begin{align}\label{fhat kappa general riemann surface}
\hat{f}(z) =\frac{\hat{\Theta}\begin{bmatrix}
p\\q
\end{bmatrix}\left(\int_{\xi}^{\infty^{+}}\hat{\omega}\right)}{\hat{\Theta}\begin{bmatrix}
p\\q
\end{bmatrix}\left(\int_{\xi}^{\infty^{-}}\hat{\omega}\right)}
\end{align}
satisfies the Ernst equation (\ref{Ernst equation}) and the corresponding metric function $e^{2\kappa}$ is given by
\begin{align}\label{e2kappahat}
e^{2\hat{\kappa}}=\hat{K}_0\frac{\hat{\Theta}\begin{bmatrix}
p\\q
\end{bmatrix}(0)\hat{\Theta}\begin{bmatrix}
p\\q
\end{bmatrix}\left(\int_{\xi}^{\bar{\xi}}\hat{\omega}\right)}{\hat{\Theta}(0)\hat{\Theta}\left(\int_{\xi}^{\bar{\xi}}\hat{\omega}\right)},
\end{align}
where $\hat{K}_0$ is a constant determined by the requirement that $e^{2\hat{\kappa}}=1$ on the rotation axis \cite{korotkin2000theta}.
Our goal is to recover the Ernst potential (\ref{Ernst potential for f}) by letting the branch points $E_j, F_j$ in (\ref{fhat kappa general riemann surface}) condense along the contour $\Gamma$ for an appropriate choice of the characteristics $p,q$. Then, by applying the same condensation to (\ref{e2kappahat}), we will obtain the expression (\ref{e2kappa}) for $e^{2\kappa}$.

In order to handle the fact that $h(k)$ does not vanish at the endpoints of $\Gamma$, we define, for each integer $n \geq 1$,  an extension $\Gamma_n$ of $\Gamma$ by $\Gamma_n = [-i \rho_0-\frac{i}{n}, i \rho_0 + \frac{i}{n}]$. We first let each branch cut $[E_{j},F_{j}]$ shrink to a point $\kappa_j\in \Gamma_n$. In this limit, we have
\begin{subequations}\label{shrink property}
\begin{align}
\left(\hat{\omega}_{1}, \dots,\hat{\omega}_{g}\right)&\rightarrow \big(\omega, \frac{1}{2\pi i}\omega_{\kappa_2^{+}\kappa_2^{-}},...,\frac{1}{2\pi i}\omega_{\kappa_{g}^{+}\kappa_{g}^{-}}\big);\\
\hat{B}_{11}&\to B; \quad \hat{B}_{1j}\to \int_{\kappa_{j}^{-}}^{\kappa_{j}^{+}}\omega, \quad j=2,..., g;
	\\
\hat{B}_{ij}&\to \frac{1}{2\pi i}\int_{\kappa_{j}^{-}}^{\kappa_{j}^{+}}\omega_{\kappa_{i}^{+}\kappa_{i}^{-}}, \quad i\neq j, \quad i, j=2,..., g;
	\\
\hat{B}_{jj}&= \frac{1}{\pi i}\log|E_j-F_j|+O(1), \quad  j=2,..., g.
\end{align}
\end{subequations}
We then let the $\kappa_{j}$ condense along $\Gamma_n$ with the density determined by the measure 
\beq\label{condense measure}
\diff m_n(\kappa)=-\frac{1}{2}\frac{\diff h_n}{\diff\kappa}(\kappa)\diff\kappa,\quad \kappa\in\Gamma_n,
\eeq
where $h_n:\Gamma_n\to \C$ is a sequence of smooth functions which vanish identically near the endpoints of $\Gamma_n$ and which converge pointwise to $h(k)$ on $\Gamma$ as $n \to \infty$. 
Choosing
\[p=(p_1,p_2,..., p_{g})=:(p_1,\tilde{p})\in \C\times\R^{g-1},\quad q= (0,...,0) \in \R^{g},
\] 
where the vector $\tilde{p}\in \R^{g-1}$ is such that $\tilde{p}_{j}\in (0,\frac{1}{2})$, $j=2,..., g$, we find from \eqref{shrink property} that, as the branch points condense along $\Gamma_n$,
\begin{align}\label{thetahat expression}
& \hat{\Theta}\! \begin{bmatrix} p\\ 0 \end{bmatrix}\! \! \lp\int^{Q}_{P}\hat{\omega}\rp \to \Theta \! \begin{bmatrix} p_1 \\ 0 \end{bmatrix} \!\! \lp \int_{\Gamma_n}\diff m_n(\kappa)\int_{\kappa^{-}}^{\kappa^{+}}\omega+ \int^{Q}_{P}\omega\rp 
e^{\frac{L_n}{2}+\int_{\Gamma_n}\diff m_n(\kappa)\int_{P}^{Q}\omega_{\kappa^{+}\kappa^{-}}},
\end{align}
where 
\begin{align}\nonumber
L_n & = \int_{\Gamma_n}\diff m_n(\kappa_1)\int_{\Gamma_n}^{\prime} \diff m_n(\kappa_2)\int^{\kappa_2^{+}}_{\kappa_2^{-}}\omega_{\kappa_1^{+}\kappa_1^{-}}(\kappa)
	\\ \label{Lndef}
& =\frac{1}{2}\int_{\Gamma_n} \diff\kappa_1 h_n(\kappa_1)\int_{\Gamma_n}^{\prime} h_{n}(\kappa_2)\frac{\partial\omega_{\kappa^{+}_{1}\kappa^{-}_{1}}}{\partial\kappa_1}(\kappa_2^{+}),
\end{align}
and the prime on the integral indicates that the contour $\Gamma_n$ should be deformed slightly so that the pole at $\kappa_1=\kappa_2$ is avoided.
Integrating by parts and using \eqref{identities for abel differentials}, we obtain
\begin{align}\label{def check u}
\int_{\Gamma_n}dm_n(\kappa)\int_{\kappa^{-}}^{\kappa^{+}}\omega&=\int_{\Gamma_n^+}h_n(\kappa)\omega(\kappa)=:u_n,
	\\
\int_{\Gamma_n}\diff m_n(\kappa)\int^{\infty^+}_{\infty^-}\omega_{\kappa^{+}\kappa^{-}}&=\int_{\Gamma_n^{+}}h_n(\kappa)\omega_{\infty^{+}\infty^{-}}(\kappa)=:I_n. \label{def check I}
\end{align}
Using \eqref{thetahat expression}-\eqref{def check I} in \eqref{fhat kappa general riemann surface}, we conclude that the right-hand side of (\ref{fhat kappa general riemann surface}) converges to the following Ernst potential in the limit as the branch points condense along $\Gamma_n$:
\beq\label{fhat degeneration method formula}
f_n=\frac{\Theta \begin{bmatrix} p_1 \\ 0 \end{bmatrix} \lp  u_n+\int^{\infty^+}_{\xi}\omega\rp}{\Theta \begin{bmatrix} p_1 \\ 0 \end{bmatrix} \lp  u_n+\int^{\infty^-}_{\xi}\omega\rp}e^{I_n}.
\eeq

Letting $p_1 =0$ and taking $n \to \infty$ in (\ref{fhat degeneration method formula}), we recover the Ernst potential (\ref{Ernst potential for f}). We would like to take the same limit in the expression for the metric function $e^{2\kappa_n}$ associated with $f_n$. 
However, we first need to regularize the expression for $L_n$.
We find from \eqref{Abelian differential third Sigmaz} that
\begin{align*}
\frac{\partial \omega_{\kappa_1^+ \kappa_1^-}}{\partial \kappa_1} (\kappa_2^+)
= \frac{y'(\kappa_1^+)(\kappa_2-\kappa_1) + y(\kappa_1^+)}{(\kappa_2-\kappa_1)^2y(\kappa_2^+)} d\kappa_2
- \bigg(\int_a \frac{y'(\kappa_1^+)(\kappa-\kappa_1) + y(\kappa_1^+)}{(\kappa-\kappa_1)^2 y(\kappa)}d\kappa \bigg) \omega(\kappa_2^+).
\end{align*}
Hence we can rewrite (\ref{Lndef}) as
\begin{align}\label{Lnsplit}
L_n =  L_n^{reg}+L_n^{rem},
\end{align}
where the regularized term $L_n^{reg}$ is defined by
\begin{align*}\nonumber
L_n^{reg}
= &\; \frac{1}{2}\int_{\Gamma_n} \diff\kappa_1 h_n(\kappa_1)\int_{\Gamma_n}^{\prime} h_{n}(\kappa_2) \bigg( \frac{\partial\omega_{\kappa^{+}_{1}\kappa^{-}_{1}}}{\partial\kappa_1}(\kappa_2^{+}) - \frac{\diff\kappa_2}{(\kappa_2-\kappa_1)^2}\bigg)
	\\ \nonumber
= &\; \frac{1}{2} \bigg\{\int_{\Gamma_n} d\kappa_1 h_n(\kappa_1) \int_{\Gamma_n}' h_n(\kappa_2) 
\bigg(\frac{y'(\kappa_1^+)(\kappa_2-\kappa_1) + y(\kappa_1^+)-y(\kappa_2^+)}{(\kappa_2-\kappa_1)^2y(\kappa_2^+)} \bigg)d\kappa_2
	\\
&- \bigg(\int_{\Gamma_n} d\kappa_1 h_n(\kappa_1) \int_{a} \frac{y'(\kappa_1^+)(\kappa-\kappa_1) + y(\kappa_1^+)}{(\kappa-\kappa_1)^2 y(\kappa)}d\kappa \bigg)
\bigg(\int_{\Gamma_n} h_n(\kappa_2) \omega(\kappa_2^+)\bigg)
\bigg\}
\end{align*}
and the remainder $L_n^{rem}$ is given by
\begin{align*}
L_n^{rem} = \frac{1}{2} \int_{\Gamma_n} d\kappa_1 h_n(\kappa_1) \int_{\Gamma_n}'  h_n(\kappa_2) 
\frac{d\kappa_2}{(\kappa_2-\kappa_1)^2} .
\end{align*}
We have arranged the definition of $L_n^{reg}$ so that the integrand only has a simple pole at $\kappa_2 = \kappa_1$. It follows that $L_n^{reg}$ has the well-defined limit $L^{reg}$ defined in (\ref{L def}) as $n \to \infty$. The key point is that the first double integral in (\ref{L def}) converges at the endpoints of $\Gamma$ despite the fact that $h$ is nonzero at these endpoints. The term $L_n^{rem}$, on the other hand, diverges as $n \to \infty$ due to ever growing contributions from the endpoints of $\Gamma$. However, since $L_n^{rem}$ is independent of $z$, it can be absorbed into the coefficient $\hat{K}_0$.

Furthermore, there exists a constant $C$ such that
\beq 
e^{-\int^{P}_{P_0}\omega_{\xi\bar{\xi}}}=C\sqrt{\frac{P-\bar{\xi}}{P-\xi}}, \quad P\in \Sigma_{z},
\eeq
because both sides have a simple pole at $\xi $, a simple zero at $\bar{\xi}$, and are analytic elsewhere on $\Sigma_{z}$. Hence, 
\beq
\int_{\xi}^{\bar{\xi}}\omega_{\kappa^{+}\kappa^{-}}=-\int_{\kappa^{-}}^{\kappa^{+}}\omega_{\xi\bar{\xi}}=\left[\log\sqrt{\frac{P-\bar{\xi}}{P-\xi}}\right]_{\kappa^{-}}^{\kappa^{+}} \in \pi i+2\pi i \Z.
\eeq
Consequently, the second term in the exponent in \eqref{thetahat expression} is independent of $z$ and can also be absorbed into the coefficient $\hat{K}_0$. 

The above absorptions can be carried out by defining a new constant $K_{0n}$ by
\begin{equation}\label{K0n}
K_{0n}=\hat{K}_0 e^{L_{n}^{rem}+\int_{\Gamma_n}\diff m_n(\kappa)\int_{\xi}^{\bar{\xi}}\omega_{\kappa^{+}\kappa^{-}}}. 
\end{equation}
As the branch points condense, we have $\hat{\Theta}(v) \to \Theta(v_1)$ for each $v \in \C^g$, where $v_1$ denotes the first component of the vector $v \in \C^g$. Using this fact together with \eqref{thetahat expression}, \eqref{Lnsplit}, and \eqref{K0n} to compute the limit of \eqref{e2kappahat} as the branch points condense along $\Gamma_n$, we see that the metric function $e^{2\kappa_n}$ corresponding to the Ernst potential $f_n$ can be expressed by 
\beq\label{kappa Lreg}
e^{2\kappa_n}=K_{0n}\frac{\Theta\!\begin{bmatrix}
p_1 \\ 0
\end{bmatrix}\!\!\lp u_n\rp\!\Theta\!\begin{bmatrix}
p_1 \\ 0
\end{bmatrix}\!\!\lp u_n+ \int_{\xi}^{\bar{\xi}}\omega\rp}{\Theta(0)\Theta\left(\int_{\xi}^{\bar{\xi}}\omega\right)}e^{ L_n^{reg}}.
\eeq
We can now let $p_1=0$ and take the limit $n\to \infty$ in \eqref{fhat degeneration method formula} and \eqref{kappa Lreg}. Then $f_n$ converges to the solution (\ref{Ernst potential for f}) of the BVP (\ref{boundary value problem}), and $e^{2\kappa_n}$ converges to the associated metric function $e^{2\kappa}$ given by
\beq\label{kappa Lreg limit}
e^{2\kappa}=K_{0}\frac{\Theta\lp u\rp\!\Theta\lp u+ \int_{\xi}^{\bar{\xi}}\omega\rp}{\Theta(0)\Theta\left(\int_{\xi}^{\bar{\xi}}\omega\right)}e^{ L^{reg}},
\eeq
where $K_0$ is a $z$-independent constant. This completes the proof of the expression (\ref{e2kappa}) for $e^{2\kappa}$; the explicit expression (\ref{K0 formula}) for the constant $K_0$ will be derived in Section \ref{axissec}.

\section{Solution near the rotation axis}\label{axissec}
In this section, we first complete the proof Theorem \ref{mainth2} by studying the asymptotic behavior of the Ernst potential and the metric functions near the rotation axis $\rho = 0$. We then use these results to establish expressions for $a_0$ and $K_0$, thus completing also the proof of Theorem \ref{mainth1}.

As $\rho\to 0$, the branch cut $[-i z, i \bar{z}]$ shrinks to a point and  $\Sigma_{z}$ degenerates to the genus zero Riemann surface $\Sigma^{\prime}$ defined by \eqref{degenerated RS}.
In order to study this degeneration, we introduce the axis-adapted cut basis $\{\tilde{a},\tilde{b}\}$ on $\Sigma_{z}$ by (see Figure \ref{axis-adapted homology basis})
\begin{equation}
\begin{pmatrix}
\tilde{a} \\
\tilde{b}
\end{pmatrix} =
\begin{pmatrix}
-1 & 0 \\
0 & -1
\end{pmatrix}
\begin{pmatrix}
a \\
b
\end{pmatrix}.
\end{equation}
\begin{figure}[ht!]
  \centering
   \vspace{-.5cm}
  \includegraphics[width=0.35\linewidth]{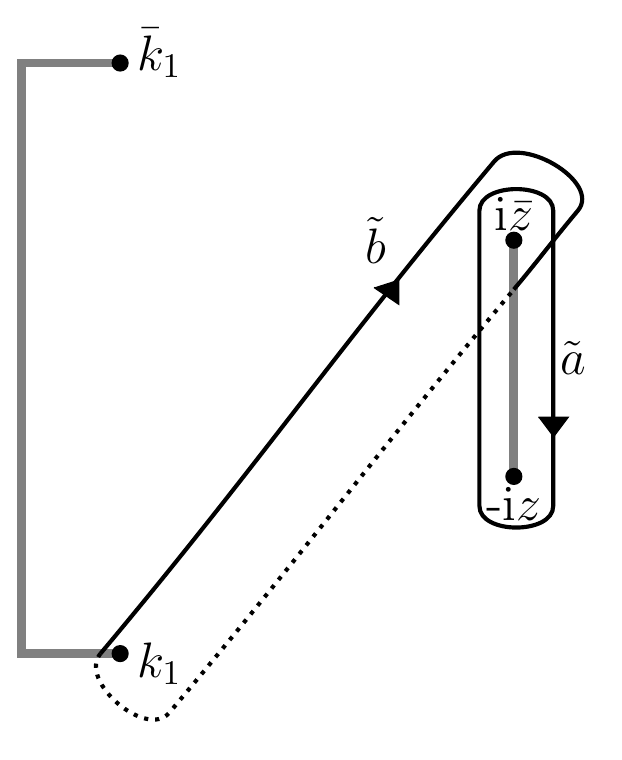}
  \vspace{-0.2cm}
\caption{The axis-adapted basis $\{\tilde{a}, \tilde{b}\}$ on $\Sigma_z$.}
\label{axis-adapted homology basis}
\end{figure}
It then follows from the transformation formula for theta functions \cite[Eq. (12)]{fay2006theta} that there exists a constant $c_{0}$ such that 
\begin{equation}\label{linearity of theta function under basis transformation}
\Theta(\tilde{{v}}|\tilde{B})=c_0 \Theta(v|B),
\end{equation}
where $\tilde{v}�= -v$ and $\tilde{B}$ denotes the period matrix associated with $\{\tilde{a},\tilde{b}\}$. Let $\tilde{\omega}$ and $\tilde{u}$ denote the analogs of $\omega$ and $u$ in the axis-adapted basis $\{\tilde{a},\tilde{b}\}$ on $\Sigma_{z}$. Then
\begin{equation}\label{def tilde u and tilde I}
\tilde{\omega}=-\omega,\quad \tilde{u}=\int_{\Gamma^{+}}h\tilde{\omega}=-u.
\end{equation}
It can also be verified by a residue computation that  
\begin{equation*}
\omega_{\infty^{+}\infty^{-}}=\tilde{\omega}_{\infty^{+}\infty^{-}}-2\pi i \tilde{\omega}
\end{equation*}
and consequently
\begin{equation}\label{I and I tilde}
I=\tilde{I}-2\pi i \tilde{u} \quad \text{where} \quad \tilde{I} = \int_{\Gamma^+}h \tilde{\omega}_{\infty^{+}\infty^{-}}.
\end{equation}
Using the notation $\tilde{\Theta}(v) \equiv \Theta(v|\tilde{B})$, we can rewrite the expression \eqref{Ernst potential for f} for the Ernst potential as 
\begin{equation}\label{f in terms of I tilde}
f(z)=\frac{\tilde{\Theta}(\tilde{u}-\int_{-i z}^{\infty^{-}}\tilde{\omega})}{\tilde{\Theta}(\tilde{u}+\int_{-i z}^{\infty^{-}}\tilde{\omega})}e^{\tilde{I}-2\pi i \tilde{u}}.
\end{equation}
Define $c \equiv c(z) \in \C$, $J' \equiv J'(\zeta) \in \R$, and $K' \equiv K'(\zeta) \in \C$ by
\begin{equation*}
c=\int^{\infty^{-}}_{k_{1}}\tilde{\omega}, \quad  
J^{\prime} = \int_{\Gamma^{+}}h\omega^{\prime}_{\zeta^{+}\zeta^{-}}, \quad
 K^{\prime} = \int_{k_1}^{\infty^{-}}\omega^{\prime}_{\zeta^{+}\zeta^{-}},
\end{equation*}
where $\omega^{\prime}_{PQ}$ denotes the Abelian differential of the third kind on $\Sigma^{\prime}$.
Recalling the expression (\ref{Abelian differential third Sigmazprime}) for $\omega^{\prime}_{\zeta^{+}\zeta^{-}}$, we see that $J'$ can be written as in (\ref{Jprimedef}).

\begin{lemma}\label{Kprimedlemma}
  We have $-e^{-K'} = d$, where $d(\zeta)$ is the function in (\ref{Jprimedef}).
\end{lemma}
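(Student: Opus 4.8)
The plan is to interpret $e^{-K'}$ as the value at the point $\infty^{-}$ of an explicit rational function on the genus-zero surface $\Sigma'$, exploiting the fact that $\Sigma'$ carries a degree-one meromorphic function: the function $d_1(k)=2\Omega i(k-\mu(k))$ introduced just before Lemma~\ref{FGexplicitlemma}, which realizes an isomorphism $\Sigma'\cong\hat{\C}$ with its only pole at $\infty^{-}$ and its only zero at $\infty^{+}$.

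First I would set $g(P):=\exp\bigl(-\int_{k_1}^{P}\omega^{\prime}_{\zeta^{+}\zeta^{-}}\bigr)$ for $P\in\Sigma'$, so that $e^{-K'}=g(\infty^{-})$. Since $\omega^{\prime}_{\zeta^{+}\zeta^{-}}=\frac{\mu(\zeta^{+})\,dk}{(k-\zeta)\mu(k)}$ is holomorphic on $\Sigma'$ except for simple poles at $\zeta^{+}$ and $\zeta^{-}$ with residues $+1$ and $-1$ respectively, the exponential $g$ is a single-valued meromorphic function on $\Sigma'$ with a simple pole at $\zeta^{+}$, a simple zero at $\zeta^{-}$, and $g(k_1)=1$. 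I would then exhibit a competitor with the same divisor: because $d_1$ has degree one and its only pole is at $\infty^{-}$, each function $d_1(k)-d_1(\zeta^{\pm})$ has divisor $(\zeta^{\pm})-(\infty^{-})$, so the quotient $\frac{d_1(k)-d_1(\zeta^{-})}{d_1(k)-d_1(\zeta^{+})}$ has divisor $(\zeta^{-})-(\zeta^{+})$, exactly that of $g$. As $\Sigma'$ has genus zero, $g$ agrees with this quotient up to a multiplicative constant, and normalizing by $g(k_1)=1$ gives
\[
g(P)=\frac{d_1(k_1)-d_1(\zeta^{+})}{d_1(k_1)-d_1(\zeta^{-})}\cdot\frac{d_1(P)-d_1(\zeta^{-})}{d_1(P)-d_1(\zeta^{+})}.
\]

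Finally I would evaluate at $P=\infty^{-}$; since $d_1$ has a pole there the last fraction tends to $1$, so $e^{-K'}=g(\infty^{-})=\frac{d_1(k_1)-d_1(\zeta^{+})}{d_1(k_1)-d_1(\zeta^{-})}$. Three elementary evaluations then finish the argument: $d_1(k_1)=2\Omega i\,k_1=1$ because $\mu(k_1)=0$ and $k_1=-\frac{i}{2\Omega}$; $d_1(\zeta^{+})=2\Omega i(\zeta-\mu(\zeta^{+}))=d$ by the definition (\ref{Jprimedef}); and $d_1(\zeta^{+})d_1(\zeta^{-})=(2\Omega i)^{2}(\zeta^{2}-\mu(\zeta^{+})^{2})=1$ since $\mu(\zeta^{+})^{2}=\zeta^{2}+\frac{1}{4\Omega^{2}}$ by (\ref{degenerated RS}), whence $d_1(\zeta^{-})=1/d$. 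Substituting, $e^{-K'}=\frac{1-d}{1-1/d}=-d$, i.e.\ $-e^{-K'}=d$. The computation is routine; the only point needing a little care is reading off the divisor of $g$ from the residues of $\omega^{\prime}_{\zeta^{+}\zeta^{-}}$ --- in particular, that the zero sits at $\zeta^{-}$ and the pole at $\zeta^{+}$ and not the other way round --- which is exactly why matching divisors, rather than integrating $\omega^{\prime}_{\zeta^{+}\zeta^{-}}$ directly with its attendant $2\pi i\Z$ ambiguities in $\log$, is the efficient route.
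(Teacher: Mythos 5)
Your proof is correct, and it takes a genuinely different route from the paper's. The paper verifies by direct differentiation that $\log\bigl(\frac{\mu(k^+)\mu(\zeta^+)-\zeta k+k_1^2}{k-\zeta}\bigr)$ is an explicit primitive of $\frac{\mu(\zeta^+)}{(k-\zeta)\mu(k^+)}$ on the lower sheet, evaluates it at $k=k_1$ and $k=\infty$, and reads off $e^{-K'}=(\mu(\zeta^+)-\zeta)/k_1=-d$. You instead never produce an antiderivative: you observe that $g=\exp\bigl(-\int_{k_1}^{\cdot}\omega'_{\zeta^+\zeta^-}\bigr)$ is single-valued on $\Sigma'$ (the residues $\pm1$ of $\omega'_{\zeta^+\zeta^-}$ being integers), has divisor $(\zeta^-)-(\zeta^+)$, and must therefore coincide, up to the normalization $g(k_1)=1$, with the cross-ratio built from the degree-one function $d_1$; evaluating at $\infty^-$ and using $d_1(k_1)=1$, $d_1(\zeta^+)=d$, $d_1(\zeta^+)d_1(\zeta^-)=1$ then gives $e^{-K'}=\frac{1-d}{1-1/d}=-d$. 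Both arguments are short and fully rigorous; the paper's requires guessing (or knowing) the closed-form primitive, while yours trades that for the uniqueness of a meromorphic function with prescribed divisor on a genus-zero surface, which also neatly sidesteps the $2\pi i\Z$ ambiguities in the logarithm and ties the computation to the function $d_1$ already introduced before Lemma \ref{FGexplicitlemma}. Your identification of the pole of $g$ at $\zeta^+$ (residue $+1$ of the differential) and the zero at $\zeta^-$ (residue $-1$) is the one point where a sign error would be fatal, and you have it right.
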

\begin{proof}
A direct computation shows that
$$\frac{d}{dk}\log \bigg(\frac{\mu(k^+) \mu(\zeta^+) 
-\zeta k+k_1^2}{k-\zeta}\bigg)
 = \frac{\mu(\zeta^+)}{(k-\zeta)\mu(k^+)}.$$
 Hence, using the expression (\ref{Abelian differential third Sigmazprime}) for $\omega^{\prime}_{\zeta^{+}\zeta^{-}}$,
\begin{align*}
K' & = -\mu(\zeta^+) \int_{k_1}^{\infty}  \frac{dk}{(k - \zeta) \mu(k^+)}
= -\log\bigg(\frac{\mu(k^+) \mu(\zeta^+) - \zeta k+k_1^2}{k-\zeta }\bigg)\bigg|_{k=k_1}^{\infty} 
	\\
& = -\log(\mu(\zeta^+) - \zeta )
+ \log\bigg(\frac{- \zeta k_1+k_1^2}{k_1-\zeta}\bigg)
= -\log\bigg(\frac{\mu(\zeta^+) - \zeta}{k_1}\bigg),
\end{align*}
which implies $e^{-K'} = (\mu(\zeta^+) - \zeta)/k_1 = -d(\zeta)$.
\end{proof}

The following lemma gives the behavior of several quantities near the rotation axis. 

\begin{lemma}\label{axislemma}
Let $\zeta > 0$. As $\rho\to 0$, it holds that
\begin{align}\nonumber
& \tilde{u} =\frac{J^{\prime}}{2\pi i}+O(\rho^{2}), && \tilde{\Theta}\bigg(\tilde{u} -\int_{-i z}^{\infty^{-}}\tilde{\omega}\bigg) =1-e^{J^{\prime}-K^{\prime}}+O(\rho^{2}), 
	\\\nonumber
& c=\frac{K^{\prime}}{2\pi i}+O(\rho^{2}),  && \tilde{\Theta}\bigg(\tilde{u} +\int_{-i z}^{\infty^{-}}\tilde{\omega}\bigg)=1-e^{-J^{\prime}-K^{\prime}}+O(\rho^{2}),
	\\\nonumber
&   \tilde{\Theta}\bigg(\tilde{u} +\int_{-i z}^{i\bar{z}}\tilde{\omega}\bigg)=1+\beta\rho+O(\rho^{2}),
&&\tilde{\Theta}\bigg(\tilde{u} +\int_{i \bar{z}}^{\infty^{-}}\tilde{\omega}\bigg) =1+e^{-J^{\prime}-K^{\prime}}+O(\rho^{2}), 
	\\\label{axisexpansions}
& \tilde{\Theta}(\tilde{u}) =1-\beta\rho+O(\rho^{2}), 
&& \hspace{-1.8cm}
 \tilde{\Theta}\bigg(\tilde{u} +\int_{-i z}^{\infty^{-}}\tilde{\omega}+\int_{i \bar{z}}^{\infty^{-}}\tilde{\omega}\bigg)=-\frac{e^{-J^{\prime}-2K^{\prime}-M^{\prime}}}{\rho} +O(1), 
\end{align} 
and
\begin{align}\label{QtildeIexpansions}
 Q(u) =1-e^{-2J^{\prime}-2K^{\prime}}+O(\rho^{2}), \quad \tilde{I} = O(\rho^2),
\end{align} 
where $\beta\in \C$ is a constant and $M^{\prime}$ is defined by
$$M^{\prime}=\frac{1}{2}\lim_{\epsilon\to 0}\left(\int^{(\zeta-\epsilon)^{+}}_{(\zeta-\epsilon)^{-}}\omega^{\prime}_{\zeta^{+}\zeta^{-}}-2\log \epsilon -2\log 2 -\pi i\right).
$$
The expansions in (\ref{axisexpansions}) and (\ref{QtildeIexpansions}) remain valid if $\tilde{u}$ and $J'$ are replaced by $0$ in all places.
\end{lemma}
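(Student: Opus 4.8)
The plan is to carry out a nodal degeneration analysis of the genus one surface $\Sigma_z$ as $\rho\downarrow 0$. As already noted, the cut $[-i z, i\bar{z}] = [\zeta - i\rho, \zeta + i\rho]$ collapses to the single point $\zeta$, so $\Sigma_z$ degenerates to $\Sigma^{\prime}$ with the two points $\zeta^{+},\zeta^{-}$ becoming a node; the vanishing cycle is homologous to $\tilde a$. The first observation is that $y^{2} = (k-k_1)(k-\bar{k}_1)\big((k-\zeta)^{2}+\rho^{2}\big)$ depends on $\rho$ only through $\rho^{2}$, so every quantity that does not interact directly with the node acquires an $O(\rho^{2})$ (rather than $O(\rho)$) correction. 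Since $\tilde a$ stays away from $\zeta$, the normalized differential degenerates as $\tilde\omega\to\frac{1}{2\pi i}\omega^{\prime}_{\zeta^{+}\zeta^{-}}$ uniformly on compact subsets of $\Sigma^{\prime}\setminus\{\zeta^{\pm}\}$, in analogy with \eqref{shrink property}. Because $\Gamma^{+}$, $k_1$ and $\infty^{-}$ all lie away from $\zeta$, substituting this limit into the definitions of $\tilde u$, $c$ and $\tilde I$ and using \eqref{Abelian differential third Sigmazprime} immediately gives $\tilde u = \frac{J^{\prime}}{2\pi i}+O(\rho^{2})$, $c = \frac{K^{\prime}}{2\pi i}+O(\rho^{2})$ and $\tilde I = O(\rho^{2})$; for $\tilde I$ one uses in addition that $\int_{\Gamma^{+}}h\,\omega^{\prime}_{\infty^{+}\infty^{-}} = 0$, which follows from the oddness of $h$ together with the evenness of $\omega^{\prime}_{\infty^{+}\infty^{-}} = -\frac{\diff k}{\mu(k)}$ under $k\mapsto -k$, exactly as in \eqref{partialJprime}.

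The second ingredient is the behaviour of the conjugate period $\tilde B = \int_{\tilde b}\tilde\omega$, which diverges because $\tilde b$ is forced to traverse the pinching region. Using the local model $y\approx\mu(\zeta^{+})\sqrt{(k-\zeta)^{2}+\rho^{2}}$ near the node, one obtains the sharp estimate $e^{\pi i\tilde B} = \rho\, e^{M^{\prime}}\big(1+O(\rho^{2})\big)$, i.e.\ the nome $q := e^{\pi i\tilde B}$ satisfies $q\asymp\rho$; the constants $-2\log 2-\pi i$ in the definition of $M^{\prime}$ are precisely the universal normalization arising from matching the regularized period of $\omega^{\prime}_{\zeta^{+}\zeta^{-}}$ against the cut of half-length $\rho$. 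Likewise, the Abel-map integrals appearing as theta arguments inherit a divergence from their endpoints approaching the node; tracking the paths within the fundamental polygon one finds $\int_{-i z}^{\infty^{-}}\tilde\omega = \frac{\tilde B}{2}+\frac{K^{\prime}}{2\pi i}+\frac12+O(\rho^{2})$ and $\int_{i\bar{z}}^{\infty^{-}}\tilde\omega = \frac{\tilde B}{2}+\frac{K^{\prime}}{2\pi i}+O(\rho^{2})$, so that $\int_{-i z}^{i\bar{z}}\tilde\omega = \frac12+O(\rho^{2})$, i.e.\ it differs from an integer by a half-period.

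With these ingredients in hand, all the claimed expansions follow by inserting the asymptotics of the arguments into the theta series \eqref{Thetadef}, written as $\tilde\Theta(\tilde v|\tilde B) = 1+q\big(e^{2\pi i\tilde v}+e^{-2\pi i\tilde v}\big)+O(q^{4})$, and using the quasi-periodicity of $\tilde\Theta$ whenever the argument contains $\tilde B$. Arguments that remain bounded and bounded away from half-periods give $1+O(\rho^{2})$; an argument that differs from one by a half-period picks up the sign flip $1+q(\cdots)\mapsto 1-q(\cdots)$, which produces the $1\pm\beta\rho$ pair for $\tilde\Theta(\tilde u+\int_{-i z}^{i\bar{z}}\tilde\omega)$ and $\tilde\Theta(\tilde u)$; the arguments built from $\int_{-i z}^{\infty^{-}}\tilde\omega$ and $\int_{i\bar{z}}^{\infty^{-}}\tilde\omega$ carry a half-period $\frac{\tilde B}{2}$ which combines with the Gaussian prefactor so that exactly one $N=\pm1$ term survives with a finite limit, yielding the factors $1-e^{\pm J^{\prime}-K^{\prime}}$ and $1+e^{-J^{\prime}-K^{\prime}}$; and the doubly pinched argument $\int_{-i z}^{\infty^{-}}\tilde\omega+\int_{i\bar{z}}^{\infty^{-}}\tilde\omega$ carries a full $\tilde B$, so quasi-periodicity produces the $q^{-1}\asymp\rho^{-1}$ factor, giving $-e^{-J^{\prime}-2K^{\prime}-M^{\prime}}/\rho+O(1)$. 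The identity $Q(u) = 1-e^{-2J^{\prime}-2K^{\prime}}+O(\rho^{2})$ then follows by substituting these expansions into the definition \eqref{def Q} of $Q$ (translating between the bases $\{a,b\}$ and $\{\tilde a,\tilde b\}$ via \eqref{linearity of theta function under basis transformation} and \eqref{def tilde u and tilde I}), and Lemma \ref{Kprimedlemma} rewrites $e^{-K^{\prime}}$ as $-d$ where convenient. Finally, the assertion that all the expansions remain valid with $\tilde u$ and $J^{\prime}$ replaced by $0$ is immediate from the structure of the argument: $\tilde u$ enters each theta argument only as the bounded shift $e^{\pm2\pi i\tilde u} = e^{\mp J^{\prime}}(1+O(\rho^{2}))$ in the surviving $N=\pm1$ term, so deleting it (equivalently, setting $\tilde u = J^{\prime} = 0$) leaves a valid asymptotic identity.

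I expect the main obstacle to be the bookkeeping in the second step: correctly isolating the logarithmic singularities of $\tilde B$ and of the Abel-map integrals as the relevant points approach the node, choosing the integration paths within the fundamental polygon consistently so that the $\frac{\tilde B}{2}$ contributions are tracked, and confirming that the residual finite constant is exactly $M^{\prime}$ with the stated $-2\log 2-\pi i$ normalization. The remaining points — the parity argument for $\tilde I$, the uniform convergence $\tilde\omega\to\frac{1}{2\pi i}\omega^{\prime}_{\zeta^{+}\zeta^{-}}$ away from the node, the substitution into \eqref{def Q}, and the theta-series algebra — are comparatively routine.
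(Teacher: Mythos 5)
Your proposal is correct and follows essentially the same route as the paper: the paper likewise takes the degeneration formulas $\tilde\omega=\frac{1}{2\pi i}\omega'_{\zeta^+\zeta^-}+O(\rho^2)$ and $\tilde B=\frac{1}{\pi i}\log\rho+\frac{M'}{\pi i}+O(\rho^2)$ (quoting Fay rather than deriving them from a local model as you sketch), records $\int_{-iz}^{\infty^-}\tilde\omega=c+\frac{\tilde B}{2}-\frac12$ and $\int_{i\bar z}^{\infty^-}\tilde\omega=c+\frac{\tilde B}{2}$ (your $\pm\frac12$ discrepancy is an irrelevant integer shift), truncates the theta series to the surviving $N\in\{0,\pm1\}$ terms, and uses the oddness of $h$ for $\tilde I$. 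The only blemish is the harmless sign typo $e^{\pm2\pi i\tilde u}=e^{\mp J'}$ (it should be $e^{\pm J'}$), which does not affect the argument.
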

\begin{proof}
According to \cite[Chap. III]{fay2006theta}, $\tilde{\omega}$ and $\tilde{B}$ admit the following expansions as $\rho\to 0$:
\begin{equation}\label{tilde u as rho to 0}
\tilde{\omega} = \frac{1}{2\pi i }\omega^{\prime}_{\zeta^{+}\zeta^{-}}+O(\rho^{2}),\quad \tilde{B}=\frac{1}{\pi i} \log \rho +\frac{M^{\prime}}{\pi i} + O(\rho^{2}).
\end{equation}
The expansions of $\tilde{u}$ and $c$ follow directly from (\ref{tilde u as rho to 0}) and the definitions. Moreover, note that
\begin{equation}\label{integral to c r s}
\int^{\infty^{-}}_{-i z}\tilde{\omega}=c+\frac{\tilde{B}}{2}-\frac{1}{2},
\quad
\int_{i\bar{z}}^{\infty^-}\tilde{\omega}=c+\frac{\tilde{B}}{2}.
\end{equation}
By using \eqref{tilde u as rho to 0}, \eqref{integral to c r s}, and the expansions of $\tilde{u}$ and $c$, we get 
$$\tilde{\Theta}\bigg(\tilde{u} -\int_{-i z}^{\infty^{-}}\tilde{\omega}\bigg)
=\sum_{N\in\{0,1\}} e^{2\pi i (\frac{1}{2}\tilde{B}N(N-1)+N(\tilde{u}-c+\frac{1}{2}))} 
+ O(\rho^2)
=1-e^{J^{\prime}-K^{\prime}} +O(\rho^2).$$
This establishes the expansion of $\tilde{\Theta}(\tilde{u} -\int_{-i z}^{\infty^{-}}\tilde{\omega})$; the other expansions in (\ref{axisexpansions}) are proved in a similar way. 
Note that \eqref{linearity of theta function under basis transformation} implies that $Q(u)$ is invariant under the change of basis $\{a,b\}\rightarrow \{\tilde{a},\tilde{b}\}$.  The asymptotic behavior of $Q(u)$ then follows by substituting the expansions in (\ref{axisexpansions}) into \eqref{def Q}.
Finally, we have $\tilde{I} = I' + O(\rho^2)$ where $I^{\prime}=\int_{\Gamma^{+}}h\omega^{\prime}_{\infty^{+}\infty^{-}}$. But $I' = 0$ because $h(k)$ is an odd function of $k$.  This proves (\ref{QtildeIexpansions}).
\end{proof}

\begin{proof}[Proof of Theorem \ref{mainth2}.]
In view of Lemma \ref{axislemma}, the behavior \eqref{fnearaxis} of $f$ near the rotation axis follows from \eqref{I and I tilde} and \eqref{f in terms of I tilde}.
The asymptotics \eqref{e2U a kappa asymptotic near rotation axis} of $e^{2U}$ follows from Lemma \ref{axislemma} and the identity $e^{2U(z)}=Q(0)e^{I}/Q(u)$. 
These computations lead to the expressions 
\begin{align}\label{fe2Uonaxis}
f(i \zeta)=\frac{1-e^{J^{\prime}-K^{\prime}}}{1-e^{-J^{\prime}-K^{\prime}}}e^{-J^{\prime}}, \quad
e^{2U(i \zeta)}=\frac{1-e^{-2K^{\prime}}}{1-e^{-2J^{\prime}-2K^{\prime}}}e^{-J^{\prime}},
\end{align}
for $f(i \zeta)$ and $e^{2U(i \zeta)}$. Using Lemma \ref{Kprimedlemma}, we see that the expressions in (\ref{fe2Uonaxis}) are consistent with the expressions (\ref{axis value of f}) and (\ref{e2U axis value}) obtained in Section \ref{laxsec}.

The fact that $a = O(\rho^2)$ as $\rho\to 0$ is a consequence of \eqref{b relation 1}, \eqref{e2U and a}, and Lemma \ref{axislemma}. Similarly, in view of \eqref{azkappaz} and \eqref{e2kappa}, it is clear that $e^{2\kappa}-1 = O(\rho^2)$. This completes the proof of Theorem \ref{mainth2}.
\end{proof}

\begin{proof}[Proof of the expressions for $a_0$ and $K_{0}$]
Expressing \eqref{e2U and a} in the axis-adapted basis $\{\tilde{a},\tilde{b}\}$ and letting $\rho \to 0$ in the resulting equation, we find
$$a_0=\lim_{\rho\to 0}\frac{\rho}{Q(0)^2}\frac{\tilde{\Theta}(\tilde{u} +\int_{-i z}^{\infty^{-}}\tilde{\omega}+\int_{i \bar{z}}^{\infty^{-}}\tilde{\omega})}{\tilde{\Theta}(\tilde{u}+ \int_{-i z}^{i\bar{z}}\tilde{\omega})}e^{2\pi i \tilde{u}-\tilde{I}}.$$
Using Lemma \ref{axislemma} we conclude that 
$$a_0 = -\frac{e^{-2K' - M'}}{(1 - e^{-2K'})^2}.$$
Letting $\zeta \to \infty$ and using that $\lim_{\zeta\to \infty} e^{-2K'} = 0$ and
\begin{equation*}
\lim_{\zeta\to \infty}(M^{\prime}+2K^{\prime})=\frac{1}{2}\lim_{R\to \infty}\left(-\int^{R^{+}}_{R^{-}}\omega^{\prime}_{\infty^{+}\infty^{-}}-2\log R-2\log 2-\pi i\right),
\end{equation*}
we infer that
$$a_{0}=-2\mathrm{i} \lim_{R\to \infty}R \exp\lp\int_{k_{1}}^{R^{+}}\omega^{\prime}_{\infty^{+}\infty^{-}}\rp
 = -2i\lim_{R \to \infty} Re^{-\int_{k_1}^{R} \frac{dk}{\mu(k^+)}}.$$
Since
$$\frac{d}{dk} \log\big(k + \mu(k^+)\big) = \frac{1}{\mu(k^+)},$$
we obtain the sought-after expression for $a_0$:
\begin{align*}
a_0 & = -2i\lim_{R \to \infty}Re^{-\log(k + \mu(k^+))\big|_{k_1}^{R}}
= -2i\lim_{R \to \infty} Re^{-\log(R + \sqrt{R^2 + \frac{1}{4\Omega^2}}) + \log k_1}
= -\frac{1}{2\Omega}.
\end{align*} 

Writing the expression (\ref{e2kappa}) for $e^{2\kappa}$ in terms of the axis-adapted basis $\{\tilde{a},\tilde{b}\}$, we get
\begin{align*}
e^{2\kappa}= K_0\frac{\tilde{\Theta}\left( \tilde{u}\right)\tilde{\Theta}\left( \tilde{u}+\int^{i \bar{z}}_{-i z}\tilde{\omega}\right)}{\tilde{\Theta}\left( 0\right)\tilde{\Theta}\left( \int^{i \bar{z}}_{-i z}\tilde{\omega}\right)}
e^{\frac{1}{2}\int_{\Gamma}^{\prime}\diff\kappa_1 h(\kappa_1)\int_{\Gamma}h(\kappa_2)
\big(\frac{\partial\tilde{\omega}_{\kappa^{+}_{1}\kappa^{-}_{1}}}{\partial\kappa_1}(\kappa_2^{+})-\frac{\diff\kappa_2}{(\kappa_2-\kappa_1)^2} \big)}.
\end{align*}
The expression (\ref{K0 formula}) for $K_0$ follows by letting $\rho\to 0$ and using that $e^{2\kappa}=1$ on the rotation axis. This completes the proof of the formulas for $a_0$ and $K_0$ and thus also of Theorem \ref{mainth1}.
\end{proof}

\section*{Acknowledgement}
L.P. would like to thank Mats Ehrnstr\"{o}m and Yuexun Wang  for their hospitality at NTNU where part of the research presented in this paper was conducted. 
Support is acknowledged from the European Research Council, Grant Agreement No.~682537 and the Swedish Research Council, Grant No.~2015-05430, and the G\"oran Gustafsson Foundation. 

\bibliographystyle{plain}
\bibliography{is}

\end{document}